\newcommand{\typeof}{1} %
\newcommand{\condinc}[2]{\ifthenelse{\equal{\typeof}{0}}{#1}{#2}}
\newcommand{\pidill}{\ensuremath{\pi\mathtt{DILL}}}
\newcommand{\pidsll}{\ensuremath{\pi\mathtt{DSLL}}}
\newcommand{\lilo}{\ensuremath{\mathtt{LL}}}
\newcommand{\sll}{\ensuremath{\mathtt{SLL}}}
\newcommand{\tyg}[5]{#1;#2;#3\vdash #4::#5}
\newcommand{\tygcf}[4]{#1;#2\vdash #3::#4}
\newcommand{\Lone}{\mathbf{1}\mathsf{L}}
\newcommand{\Rone}{\mathbf{1}\mathsf{R}}
\newcommand{\Lten}{\otimes\mathsf{L}}
\newcommand{\Rten}{\otimes\mathsf{R}}
\newcommand{\Llin}{\multimap\mathsf{L}}
\newcommand{\Rlin}{\multimap\mathsf{R}}
\newcommand{\cut}{\mathsf{cut}}
\newcommand{\cutb}{\mathsf{cut}_!}
\newcommand{\cutd}{\mathsf{cut}_\#}
\newcommand{\cutw}{\mathsf{cut}_w}
\newcommand{\bemb}{\flat_!}
\newcommand{\bemd}{\flat_\#}
\newcommand{\Lbangb}{\mathbf{!}\mathsf{L}_{!}}
\newcommand{\Lbangd}{\mathbf{!}\mathsf{L}_{\#}}
\newcommand{\Rbang}{\mathbf{!}\mathsf{R}}
\newcommand{\Lplus}{\oplus\mathsf{L}}
\newcommand{\Rplusone}{\oplus\mathsf{R}_1}
\newcommand{\Rplustwo}{\oplus\mathsf{R}_2}
\newcommand{\Lwithone}{\with\mathsf{L}_1}
\newcommand{\Lwithtwo}{\with\mathsf{L}_2}
\newcommand{\Rwith}{\with\mathsf{R}}
\newcommand{\PLone}[2]{\mathbf{1}\mathsf{L}(#1,#2)}
\newcommand{\PRone}{\mathbf{1}\mathsf{R}}
\newcommand{\PLten}[4]{\otimes\mathsf{L}(#1,#2.#3.#4)}
\newcommand{\PRten}[2]{\otimes\mathsf{R}(#1,#2)}
\newcommand{\PLlin}[4]{\multimap\mathsf{L}(#1,#2,#3.#4)}
\newcommand{\PRlin}[2]{\multimap\mathsf{R}(#1.#2)}
\newcommand{\Pcut}[3]{\mathsf{cut}(#1,#2.#3)}
\newcommand{\Pcutb}[3]{\mathsf{cut}_!(#1,#2.#3)}
\newcommand{\Pcutd}[3]{\mathsf{cut}_\#(#1,#2.#3)}
\newcommand{\Pcutw}[3]{\mathsf{cut}_w(#1,#2.#3)}
\newcommand{\Pbemb}[3]{\flat_!(#1,#2.#3)}
\newcommand{\Pbemd}[3]{\flat_\#(#1,#2.#3)}
\newcommand{\PLbangb}[2]{\mathbf{!}\mathsf{L}_{!}(#1.#2)}
\newcommand{\PLbangd}[2]{\mathbf{!}\mathsf{L}_{\#}(#1.#2)}
\newcommand{\PRbang}[2]{\mathbf{!}\mathsf{R}(#1,#2)}
\newcommand{\PLplus}[5]{\oplus\mathsf{L}(#1,#2.#3,#4.#5)}
\newcommand{\PRplusone}[1]{\oplus\mathsf{R}_1(#1)}
\newcommand{\PRplustwo}[1]{\oplus\mathsf{R}_2(#1)}
\newcommand{\PLwithone}[3]{\with\mathsf{L}_1(#1,#2.#3)}
\newcommand{\PLwithtwo}[3]{\with\mathsf{L}_2(#1,#2.#3)}
\newcommand{\PRwith}[2]{\with\mathsf{R}(#1,#2)}
\newcommand{\lift}[1]{#1_{\Downarrow}}
\newcommand{\weip}[2]{\mathbb{W}_{#1}(#2)}
\newcommand{\wei}[1]{\mathbb{W}(#1)}
\newcommand{\bde}[1]{\mathbb{B}(#1)}
\newcommand{\dupf}[1]{\mathbb{D}(#1)}
\newcommand{\foc}[2]{\mathbb{FO}(#1,#2)}
\newcommand{\cred}{\Longrightarrow}
\newcommand{\cpred}{\longmapsto}
\newcommand{\cequ}{\equiv}
\newcommand{\cpredequ}{\hookrightarrow}
\newcommand{\reds}{\rightarrow}
\newcommand{\ptone}{\mathsf{D}}
\newcommand{\pttwo}{\mathsf{E}}
\newcommand{\ptthree}{\mathsf{F}}
\newcommand{\ptfour}{\mathsf{G}}
\newcommand{\ptfive}{\mathsf{H}}
\newcommand{\conone}{\Gamma}
\newcommand{\contwo}{\Delta}
\newcommand{\conthree}{\Theta}
\newcommand{\confour}{\Phi}
\newcommand{\confive}{\Psi}
\newcommand{\emcon}{\emptyset}
\newcommand{\procone}{P}
\newcommand{\proctwo}{Q}
\newcommand{\procthree}{R}
\newcommand{\procfour}{S}
\newcommand{\size}[1]{|#1|}
\newcommand{\bd}[1]{\mathbb{B}(#1)}
\newcommand{\tcone}{T}
\newcommand{\typone}{A}
\newcommand{\typtwo}{B}
\newcommand{\typthree}{C}
\newcommand{\cone}{x}
\newcommand{\ctwo}{y}
\newcommand{\cthree}{z}
\newcommand{\cfour}{w}
\newcommand{\tdone}{\pi}
\newcommand{\tdtwo}{\rho}
\newcommand{\unit}{\textbf{1}}
\newcommand{\tens}{\otimes}
\newcommand{\lin}{\multimap}
\newcommand{\plus}{\oplus}
\newcommand{\with}{\&}
\newcommand{\bang}{!}
\newcommand{\midd}{\; \; \mbox{\Large{$\mid$}}\;\;}
\newcommand{\insc}[2]{#1(#2)}
\newcommand{\outsc}[2]{#1\langle #2\rangle}
\newcommand{\inc}[3]{#1(#2).#3}
\newcommand{\inwc}[2]{#1(#2)}
\newcommand{\outc}[3]{#1\langle #2\rangle.#3}
\newcommand{\outwc}[2]{#1\langle #2\rangle}
\newcommand{\binc}[3]{!#1(#2).#3}
\newcommand{\para}[2]{#1\;|\;#2}
\newcommand{\rest}[2]{(\nu #1)#2}
\newcommand{\restp}[3]{(\nu #1)^{#2} #3}
\newcommand{\emproc}{0}
\newcommand{\inl}[2]{#1.\mathtt{inl};#2}
\newcommand{\inr}[2]{#1.\mathtt{inr};#2}
\newcommand{\case}[3]{#1.\mathtt{case}(#2,#3)}
\newcommand{\dotrest}[3]{\rest{#1\ldots #2}{#3}}
\newcommand{\dotpara}[2]{#1||\ldots||#2}
\newcommand{\dupserver}{\mathit{dupser}}
\newcommand{\dupclient}{\mathit{dupclient}}
\newcommand{\mulser}{\mathit{mulser}}
\newcommand{\inlm}{\mathtt{inl}}
\newcommand{\inrm}{\mathtt{inr}}
\newcommand{\casem}{\mathtt{case}}
\newcommand{\fn}[1]{\textit{fn}(#1)}
\newcommand{\subst}[3]{#1\{#2/#3\}}
\newcommand{\scon}{\equiv}
\newcommand{\aequ}{\equiv_\alpha}
\newcommand{\natone}{n}
\newcommand{\nattwo}{m}
\newcommand{\polyone}{p}
\newcommand{\polytwo}{q}
\newcommand{\NN}{\mathbb{N}}
\newcommand{\indone}{x}
\newtheorem{lemma}{Lemma}
\newtheorem{proposition}{Proposition}
\newtheorem{theorem}{Theorem}
\newtheorem{corollary}{Corollary}
\newenvironment{proof}{\begin{trivlist}
       \item[\hskip \labelsep {\bfseries Proof.}]}{\hfill $\Box$ \end{trivlist}}
\newtheorem{definition}{Definition}
\newenvironment{varitemize}
{
\begin{list}{\labelitemi}
{\setlength{\itemsep}{0.0mm}
 \setlength{\topsep}{0.0mm}
 \setlength{\parindent}{0.0mm}
 \setlength{\parskip}{0.0mm}
 \setlength{\parsep}{0.0mm}
 \setlength{\partopsep}{0.0mm}
 \setlength{\leftmargin}{15pt}
 \setlength{\labelsep}{5pt}
 \setlength{\labelwidth}{10pt}}}
{
 \end{list} 
}
\newcounter{number}
\newenvironment{varenumerate}
{\begin{list}{\arabic{number}.}
  {
   \usecounter{number}
   \setlength{\labelwidth}{4.0mm}
   \setlength{\labelsep}{2.0mm}
   \setlength{\itemindent}{0.0mm}
   \setlength{\itemsep}{0.0mm}
   \setlength{\topsep}{0.0mm}
   \setlength{\parskip}{0.0mm}
   \setlength{\parsep}{0.0mm}
   \setlength{\partopsep}{0.0mm}
  }
}
{\end{list}}
\title{Soft Session Types}
\author{Ugo Dal Lago\footnote{Universit\`a di Bologna \& INRIA Sophia Antipolis, \texttt{dallago@cs.unibo.it}}
        \and
        Paolo Di Giamberardino\footnote{Laboratoire d'Informatique de Paris Nord, \texttt{digiambe@lipn.univ-paris13.fr}}}
\author{
Ugo Dal Lago
\institute{Universit\`a di Bologna \\ INRIA Sophia Antipolis}
\email{dallago@cs.unibo.it}
\and
Paolo Di Giamberardino
\institute{Laboratoire d'Informatique\\ de Paris Nord}
\email{digiambe@lipn.univ-paris13.fr}}
\begin{document}
\maketitle
\begin{abstract}
\noindent
We show how systems of session types can enforce interactions 
to be bounded for all typable processes. The type system we propose 
is based on Lafont's soft linear logic and is strongly inspired by 
recent works about session types as intuitionistic linear logic formulas.
Our main result is the existence, for every typable process, of
a polynomial bound on the length of any reduction sequence starting from
it and on the size of any of its reducts.
\end{abstract}
%%%%%%%%%%%%%%%%%%%%%%%%
\section{Introduction}
%%%%%%%%%%%%%%%%%%%%%%%%
Session types are one of the most successful paradigms around which communication can be
disciplined in a concurrent or object-based environment. They can come in many different
flavors, depending on the underlying programming language and on the degree of flexibility
they allow when defining the structure of sessions. As an example, systems of session types for multi-party 
interaction have been recently introduced~\cite{Honda08}, while a form of higher-order session has
been shown to be definable~\cite{Mostrous07}. Recursive types, on the other hand, have been part of the standard toolset
of session type theories since their inception~\cite{Honda98}. 

The key property induced by systems of session types
is the following: if two (or more) processes can be typed with ``dual'' session types, then they can
interact with each other without ``going wrong'', i.e. avoiding situations where one party
needs some data with a certain type and the other(s) offer something of a different, incompatible
type. Sometimes, one would like to go beyond that and design a type system which guarantees stronger
properties, including quantitative ones. An example of a property that we find particularly interesting
is the following: suppose that two processes $\procone$ and $\proctwo$ interact by creating 
a session having type $\typone$ through which they communicate. Is this interaction guaranteed 
to be finite? How long would it last? Moreover, $\procone$ and $\proctwo$ could be forced
to interact with other processes in order to be able to offer $\typone$. The question could then 
become: can the global amount of interaction be kept under control? In other words, one
could be interested in \emph{proving the interaction induced by sessions to be bounded}.
This problem has been almost neglected by the research community in the area of session types, 
although it is the \emph{manifesto} of the so-called implicit computational complexity (ICC), where
one aims at giving machine-free characterizations of complexity classes based on programming
languages and logical systems.

Linear logic (\lilo\ in the following) has been introduced twenty-five years ago 
by Jean-Yves Girard~\cite{Girard87}. One of its greatest merits has been to allow a 
finer analysis of the computational content of both intuitionistic and 
classical logic. In turn, this is possible by distinguishing multiplicative
as well as additive connectives, by an involutive notion of negation, and by giving a new
status to structural rules allowing them to be applicable only to modal formulas. One
of the many consequences of this new, refined way of looking at proof theory has been
the introduction of natural characterizations of complexity classes by fragments of linear
logic. This is possible because linear logic somehow ``isolates'' complexity in the modal fragment
of the logic (which is solely responsible for the hyperexponential complexity of cut elimination
in, say intuitionistic logic), which can then be restricted so as to get exactly the
expressive power needed to capture small complexity classes. One of the simplest and most
elegant of those systems is Lafont's soft linear logic (\sll\ in the following), which has been shown to correspond to
polynomial time in the realm of classical~\cite{Lafont04}, quantum~\cite{DalLago10a} and 
higher-order concurrent computation~\cite{DalLago10b}.

Recently, Caires and Pfenning~\cite{Caires10} have shown how a system of session types can be built around
intuitionistic linear logic, by introducing \pidill, a type system for the $\pi$-calculus
where types and rules are derived from the ones of intuitionistic linear logic.
In their system, multiplicative connectives like $\tens$ and $\lin$ allow to
model sequentiality in sessions, while the additive connectives $\with$ and
$\plus$ model external and internal choice, respectively. The modal connective $\bang$, on the other
hand, allows to model a server of type $!\typone$ which can offer the functionality expressed
by $\typone$ multiple times.

In this paper, we study a restriction of \pidill, called \pidsll, which can be thought
of as being derived from \pidill\ in the same way as \sll\ is obtained from \lilo. In other
words, the operator $\bang$ behaves in \pidsll\ in the same way as in \sll. The main
result we prove about \pidsll\ is precisely about bounded interaction: whenever $\procone$
can be typed in \pidsll\ and $\procone\reds^\natone\proctwo$, then both $\natone$ and
$\size{\proctwo}$ (the size of the process $\proctwo$, to be defined later) are polynomially
related to $\size{\procone}$. This ensures an abstract but quite strong form of bounded
interaction. Another, perhaps more ``interactive'' formulation of the same result
is the following: if $\procone$ and $\proctwo$ interact via a channel of type
$\typone$, then the ``complexity'' of this interaction is bounded by a polynomial
on $\size{\procone}+\size{\proctwo}$, whose degree only depends on $\typone$.

We see this paper as the first successful attempt to bring techniques
from implicit computational complexity into the realm of session types. Although proving
bounded interaction has been technically nontrivial, due to the peculiarities of
the $\pi$-calculus, we think the main contribution of this work lies in showing that
bounded termination can be enforced by a natural adaptation of known systems of
session types.
\condinc{}{An extended version with more details is available~\cite{DalLagoDiGiambe2011ev}.}

%%%%%%%%%%%%%%%%%%%%%%%%
\section{\pidill, an Informal Account}\label{sect:pidillia}
%%%%%%%%%%%%%%%%%%%%%%%%
In this section, we will outline the main properties of \pidill, a session type system recently introduced
by Caires and Pfenning~\cite{Caires10,Caires11}. For more information, please consult the two cited papers.

In \pidill, session types are nothing more than formulas of (propositional) intuitionistic linear
logic without atoms but with (multiplicative) constants:
$$
\typone::=\unit\midd\typone\tens\typone\midd\typone\lin\typone\midd\typone\plus\typone\midd\typone\with\typone\midd\;\bang\typone.
$$
These types are assigned to channels (names) by a formal system deriving judgments in the
form
$$
\tygcf{\conone}{\contwo}{\procone}{\cone:\typone},
$$
where $\conone$ and $\contwo$ are contexts assigning types to channels, and $\procone$ is a process of
the name-passing $\pi$ calculus. The judgment above can be read as follows: the process $\procone$ acts
on the channel $\cone$ according to the session type $\typone$ \emph{whenever} composed with processes
behaving according to $\conone$ and $\contwo$ (each on a specific channel).
Informally, the various constructions on session types can be explained as follows:
\begin{varitemize}
\item
  $\unit$ is the type of an empty session channel. A process offering to communicate via a session channel 
  typed this way simply synchronizes with another process through it without exchanging anything. This is 
  meant to be an abstraction for all ground session types, e.g. natural
  numbers, lists, etc. In linear logic, this is the unit for $\tens$.
\item
  $\typone\tens\typtwo$ is the type of a session channel $\cone$ through which a message carrying another       
  channel with type $\typone$ is sent. After performing this action, the underlying process behaves according 
  to $\typtwo$ on the \emph{same} channel $\cone$.
\item
  $\typone\lin\typtwo$ is the adjoint to $\typone\tens\typtwo$: on a channel with this type, a process 
  communicate by first performing an input and receiving a channel with type $\typone$, then acting according 
  to $\typtwo$, again on $\cone$.
\item
  $\typone\plus\typtwo$ is the type of a channel on which a process either sends a special message 
  $\inlm$ and performs according to $\typone$ or sends a special message $\inrm$ and performs according to
   $\typtwo$.
\item
  The type $\typone\with\typtwo$ can be assigned to a channel $\cone$ on which the underlying process offers 
  the possibility of choosing between proceeding according to $\typone$ or to $\typtwo$, both on $\cone$. 
  So, in a sense, $\with$ models external choice.
\item
  Finally, the type $\bang\typone$ is attributed to a channel $\cone$ only if a process can be 
  replicated by receiving a channel
  $\ctwo$ through $\cone$, then behaving on $\ctwo$ according to $\typone$.
\end{varitemize}
\newcommand{\dill}{\ensuremath{\texttt{DILL}}}
The assignments in $\conone$ and $\contwo$ are of two different natures: 
\begin{varitemize}
\item
  An assignment of a type $\typone$ to a channel $\cone$ in $\contwo$ signals the need by $\procone$
  of a process offering a session of type $\typone$ on the channel $\cone$; for this reason, $\contwo$
  is called the \emph{linear context};
\item
  An assignment of a type $\typone$ to a channel $\cone$ in $\conone$, on the other hand, represents
  the need by $\procone$ of a process offering a session of type $\bang\typone$ on the channel $\cone$;
  thus, $\conone$ is the \emph{exponential context}. 
\end{varitemize}
Typing rules \pidill\ are very similar to the ones of \dill, itself one of the many possible formulations of linear logic
as a sequent calculus. In particular, there are two cut rules, each corresponding to a different portion of the context:
$$
\begin{array}{ccc}
\infer
 {\tygcf{\conone}{\contwo_1,\contwo_2}{\rest{\cone}{(\para{\procone}{\proctwo})}}{\tcone}}
 {\tygcf{\conone}{\contwo_1}{\procone}{\cone:A} & \tygcf{\conone}{\contwo_2, \cone:A}{\proctwo}{\tcone}}
&
&
\infer
 {\tygcf{\conone}{\contwo}{\rest{\cone}{(\para{\binc{\cone}{\ctwo}{\procone}}{\proctwo})}}{\tcone}}
 {\tygcf{\conone}{\emcon}{\procone}{\ctwo:A} & \tygcf{\conone,\cone:A}{\contwo}{\proctwo}{\tcone}}
\end{array}
$$
Please observe how cutting a process $\procone$ against an assumption in the exponential context
requires to ``wrap'' $\procone$ inside a replicated input: this allows to \emph{turn $\procone$ into
a server}.

\newcommand{\natnum}{\mathbf{N}}
\newcommand{\mpthree}{\mathbf{S}}
In order to illustrate the intuitions above, we now give an example. Suppose that a process $\procone$ models a service which
acts on $\cone$ as follows: it receives two natural numbers, to be interpreted as the number and secret code of a credit
card and, if they correspond to a valid account, returns an MP3 file and a receipt code to the client. Otherwise, the
session terminates. To do so, $\procone$ needs to interact with another service (e.g. a banking service) $\proctwo$ through a
channel $\ctwo$. The banking service, among others, provides a way to verify whether a given number and code correspond to 
a valid credit card. In \pidill, the process $\procone$ would receive the type
$$
\tygcf{\emcon}{\ctwo:(\natnum\lin\unit\plus\unit)\with\typone}{\procone}{\cone:\natnum\lin\natnum\lin(\mpthree\tens\natnum)\plus\unit},
$$
where $\natnum$ and $\mpthree$ are pseudo-types for natural numbers and MP3s, respectively. $\typone$ is the type of all the other
functionalities $\proctwo$ provides. As an example, $\procone$ could be the following process:
\newcommand{\cnumbone}{\mathit{nm}_1}
\newcommand{\ccodeone}{\mathit{cd}_1}
\newcommand{\cnumbtwo}{\mathit{nm}_2}
\newcommand{\ccodetwo}{\mathit{cd}_2}
\newcommand{\cmpthree}{\mathit{mp}}
\newcommand{\crecp}{\mathit{rp}}
\begin{align*}
&
\inc{\cone}{\cnumbone}{
\inc{\cone}{\ccodeone}{
\inl{\ctwo}{\\
&\qquad
\rest{\cnumbtwo}{
\outc{\ctwo}{\cnumbtwo}{
\rest{\ccodetwo}{
\outc{\ctwo}{\ccodetwo}{\\
&\qquad\qquad
\case{\ctwo}
  { %\\ &\qquad\qquad\qquad
    \inl{\cone}{
    \rest{\cmpthree}{
    \outc{\cone}{\cmpthree}{
    \rest{\crecp}{
    \outwc{\cone}{\crecp}
    }}}}
  }
  {
    %\\ &\qquad\qquad\qquad
    \inr{\cone}{\emproc}
  }
}}}}}}}
\end{align*}
Observe how the credit card number and secret code forwarded to $\proctwo$ are not the ones sent by the client: the flow of information
happening inside a process is abstracted away in \pidill. Similarly, one can write a process $\proctwo$ and assign it a type as follows:
$\tygcf{\emcon}{\emcon}{\proctwo}{\ctwo:(\natnum\lin\unit\plus\unit)\with\typone}$. 
Putting the two derivations together, we obtain
$\tygcf{\emcon}{\emcon}{\rest{\cone}{(\para{\procone}{\proctwo})}}{\cone:\natnum\lin\natnum\lin(\mpthree\tens\natnum)\plus\unit}$.

Let us now make an observation which will probably be appreciated by the reader familiar with linear logic. The processes $\procone$
and $\proctwo$ can be typed in \pidill\ without the use of any exponential rule, nor of cut. What allows to type the
parallel composition $\rest{\cone}{(\para{\procone}{\proctwo})}$, on the other hand, is precisely the cut rule. The interaction between 
$\procone$ and $\proctwo$ corresponds to the elimination of that cut. Since there isn't any exponential around, this process must be
finite, since the size of the underlying process shrinks at every single reduction step. From a process-algebraic point of view,
on the other hand, the finiteness of the interaction is an immediate consequence of the absence of any replication in $\procone$
and $\proctwo$.

The banking service $\proctwo$ can only serve one single session and would vanish at the end of it. To make it into a \emph{persistent
server} offering the same kind of session to possibly many different clients, $\proctwo$ must be put into a replication,
obtaining $\procthree=\binc{\cthree}{\ctwo}{\proctwo}$. In $\procthree$, the channel $\cthree$ can be given type 
$\bang((\natnum\lin\unit\plus\unit)\with\typone)$  in the empty context. The process $\procone$ should 
be somehow adapted to be able to interact with $\procthree$: before performing the
 two outputs on $\ctwo$, it's necessary to ``spawn'' $\procthree$ by performing an output on $\cthree$ and passing $\ctwo$ to it. This way
we obtain a process $\procfour$ such that
$$
\tygcf{\emcon}{\cthree:\bang((\natnum\lin\unit\plus\unit)\with\typone)}{\procfour}{\cone:\natnum\lin\natnum\lin(\mpthree\tens\natnum)\plus\unit},
$$
and the composition $\rest{\cthree}{(\para{\procfour}{\procthree})}$ can be given the same type as 
$\rest{\cone}{(\para{\procone}{\proctwo})}$. Of course, $\procfour$ could have used the channel $\cthree$ more than once, initiating
different sessions. This is meant to model a situation in which the same client interacts with the same server by creating more
than one session with the same type, itself done by performing \emph{more than one output} on the same channel. Of course, servers can themselves 
depend on other servers. And these dependencies are naturally modeled by the exponential modality of linear logic.

%%%%%%%%%%%%%%%%%%%%%%%%%%%%%%%%%%%%%
\section{On Bounded Interaction}\label{sect:onbint}
%%%%%%%%%%%%%%%%%%%%%%%%%%%%%%%%%%%%%
In \pidill, the possibility of modeling persistent servers which in turn depend on other servers makes it possible to type processes 
which exhibit a very complex and combinatorially heavy interactive behavior.

Consider the following processes, the first one parameterized on any $i\in\NN$:
\begin{align*}
\dupserver_i&\doteq\;\binc{\cone_i}{\ctwo}{\rest{\cthree}{\outc{\cone_{i+1}}{\cthree}{\rest{\cfour}{\outc{\cone_{i+1}}{\cfour}}}}};\\
\dupclient&\doteq\rest{\ctwo}{\outsc{\cone_{0}}{\ctwo}}.
\end{align*}
In \pidill, these processes can be typed as follows:
\begin{align*}
\tygcf{\emcon}{\cone_{i+1}:\bang{\unit}}{&\dupserver_i}{\cone_{i}:\bang{\unit}};\\
\tygcf{\emcon}{\cone_{0}:\bang{\unit}}{&\dupclient}{\cthree:\unit}.
\end{align*}
Then, for every $n\in\NN$ one can type the parallel composition
\condinc
{
$$\mulser_{n+1}\doteq\dotrest{\cone_{1}}{\cone_{n}}{(\dotpara{\dupserver_n}{\dupserver_0})}$$
}
{
$\mulser_{n+1}\doteq\dotrest{\cone_{1}}{\cone_{n}}{(\dotpara{\dupserver_n}{\dupserver_0})}$
}
as follows
$$
\tygcf{\emcon}{\cone_{n}:\bang{\unit}}{\mulser_n}{\cone_{0}:\bang{\unit}}.
$$
Informally, $\mulser_{n}$ is a persistent server which offers a session  type $\unit$ on a channel
$\cone_0$, provided a server with the same functionality is available on $\cone_n$. The
process $\mulser_{n}$ is the parallel composition of $n$ servers in
the form $\dupserver_i$, each spawning two different sessions provided
by $\dupserver_{i+1}$ on the same channel $\cone_{i+1}$.

The process $\mulser_n$ cannot be further reduced. But notice that, once
$\mulser_n$ and $\dupclient$ are composed, the following exponential blowup is bound to happen: 
\begin{align*}
  \rest{\cone_0}{(\para{\mulser_n}{\dupclient})}&\equiv\dotrest{\cone_{0}}{\cone_{n}}{(\para{\dotpara{\dupserver_n}{\dupserver_0}}{\dupclient})}\\
      &\reds\dotrest{\cone_{0}}{\cone_{n}}{(\para{\dotpara{\dupserver_n}{\dupserver_1}}{\procone_1})}\\
      &\reds^2\dotrest{\cone_{1}}{\cone_{n}}{(\para{\dotpara{\dupserver_n}{\dupserver_2}}{\para{\procone_2}{\procone_2}})}\\
      &\reds^4\dotrest{\cone_{2}}{\cone_{n}}{(\para{\dotpara{\dupserver_n}{\dupserver_3}}{\underbrace{\dotpara{\procone_3}{\procone_3}}_{\mbox{$4$ times}}})}\\
      &\reds^{*}\rest{\cone_n}{(\para{\dupserver_n}{\underbrace{\dotpara{\procone_n}{\procone_n}}_{\mbox{$2^n$ times}}})}\\
      &\reds^{2^n}\emproc.
\end{align*}
Here, for every $i\in\NN$ the process $\procone_i$ is simply $\rest{\ctwo}{\outc{\cone_{i}}{\ctwo}{\rest{\cthree}{\outwc{\cone_{i}}{\cthree}}}}$.
Notice that \emph{both} the number or reduction steps \emph{and} the size of intermediate
processes are exponential in $n$, while the size of the initial process is linear in $n$.
This is a perfectly legal process in \pidill. Moreover the type $\bang\unit$ of the channel
$\cone_0$ through which $\dupclient$ and $\mulser_n$ communicate does not contain any information
about the ``complexity'' of the interaction: it is the same for every $n$.

The deep reasons why this phenomenon can happen lie in the very general (and ``generous'') 
rules governing the behavior of the exponential modality $\bang$ in linear logic. It is this generality
that allows the embedding of propositional intuitionistic logic into linear logic.
Since the complexity of normalization for the former~\cite{Statman79,Mairson92} is nonelementary, the 
exponential blowup described above is not a surprise.

It would be desirable, on the other hand, to be sure that the interaction caused by any process $\procone$ 
is bounded: whenever $\procone\reds^\natone\proctwo$, then there's a \emph{reasonably low}
upper bound to both $\natone$ and $\size{\proctwo}$. This is precisely what we achieve by restricting
\pidill\ into \pidsll.

%%%%%%%%%%%%%%%%%%%%%%%%%%%%%%%%%%%%%%%%%%%%%%%%%
\section{\pidsll: Syntax and Main Properties}
%%%%%%%%%%%%%%%%%%%%%%%%%%%%%%%%%%%%%%%%%%%%%%%%%

In this section, the syntax of \pidsll\ will be introduced.
Moreover, some basic operational properties will be given.

%%%%%%%%%%%%%%%%%%%%%%%%%%%%%%%%
\subsection{The Process Algebra}
%%%%%%%%%%%%%%%%%%%%%%%%%%%%%%%%
\pidsll\ is a type system for a fairly standard $\pi$-calculus, exactly the one on top of which
\pidill\ is defined:
\begin{definition}[Processes]\label{def:scon}
Given an infinite set of \emph{names} or \emph{channels} $\cone,\ctwo,\cthree,\ldots$, the set of
\emph{processes} is defined as follows: 
$$
\procone::=\emproc\midd\para{\procone}{\proctwo}\midd\rest{\cone}{\procone}\midd\inc{\cone}{\ctwo}{\procone}\midd
   \outc{\cone}{\ctwo}{\procone}\midd\binc{\cone}{\ctwo}{\procone}\midd\inl{\cone}{\procone}\midd\inr{\cone}{\procone}
   \midd\case{\cone}{\procone}{\proctwo}
$$
\end{definition}
The only non-standard constructs are the last three, which allow to define a choice mechanism: 
the process $\case{\cone}{\procone}{\proctwo}$ can evolve as $\procone$ or as $\proctwo$ \emph{after}
having received a signal in the form $\inlm$ o $\inrm$ through $\cone$. Processes sending such
a signal through the channel $\cone$, then continuing like $\procone$ are, respectively,
$\inl{\cone}{\procone}$ and $\inr{\cone}{\procone}$. The set of names occurring free in the process
$\procone$ (hereby denoted $\fn{\procone}$) is defined as usual. The same holds for the capture
avoiding substitution of a name $\cone$ for $\ctwo$ in a process $\procone$ (denoted $\subst{\procone}{\cone}{\ctwo}$),
and for $\alpha$-equivalence between processes (denoted $\aequ$).

Structural congruence is an equivalence relation identifying those processes which are syntactically
different but can be considered equal for very simple structural reasons:
\begin{definition}[Structural Congruence]
The relation $\scon$, called \emph{structural congruence},  
is the least congruence on processes satisfying the following seven axioms:
\begin{align*}
\procone&\scon\proctwo\quad\mbox{whenever $\procone\aequ\proctwo$}; & \rest{\cone}{\emproc}&\scon\emproc;\\
\para{\procone}{\emproc}&\scon\procone; & \rest{\cone}{\rest{\ctwo}{\procone}}&\scon\rest{\ctwo}{\rest{\cone}{\procone}};\\
\para{\procone}{\proctwo}&\scon\para{\proctwo}{\procone}; & \para{(\rest{\cone}{\procone})}{\proctwo}&\scon\rest{\cone}{(\para{\procone}{\proctwo}})\quad\mbox{whenever $\cone\notin\fn{\proctwo}$};\\
\para{\procone}{(\para{\proctwo}{\procthree})}&\scon\para{(\para{\procone}{\proctwo})}{\procthree}. & &
\end{align*}
\end{definition}
Formal systems for reduction and labelled semantics can be defined in a standard way. We refer the
reader to~\cite{Caires10} for more details.

A quantitative attribute of processes which is delicate to model in process algebras is their
\emph{size}: how can we measure the size of a process? In particular, it is not straightforward
to define a measure which both reflects the ``number of symbols'' in the process and is invariant
under structural congruence (this way facilitating all proofs). A good compromise is the following:
\begin{definition}[Process Size]
The \emph{size} $\size{\procone}$ of a process $\procone$ is defined by induction on the structure of 
$\procone$ as follows:
\begin{align*}
\size{\emproc}&=0; & \size{\inc{\cone}{\ctwo}{\procone}}&=\size{\procone}+1; & \size{\inl{\cone}{\procone}}&=\size{\procone}+1; \\
\size{\para{\procone}{\proctwo}}&=\size{\procone}+\size{\proctwo}; & \size{\outc{\cone}{\ctwo}{\procone}}&=\size{\procone}+1;
   & \size{\inr{\cone}{\procone}}&=\size{\procone}+1;\\
\size{\rest{\cone}{\procone}}&=\size{\procone}; & \size{\binc{\cone}{\ctwo}{\procone}}&=\size{\procone}+1;
   & \size{\case{\cone}{\procone}{\proctwo}}&=\size{\procone}+\size{\proctwo}+1.
\end{align*}
\end{definition}
According to the definition above, the empty process $\emproc$ has null size, while restriction
does not increase the size of the underlying process. This allows for a definition of size which 
remains invariant under structural congruence. The price to pay is the following: the ``number
of symbols'' of a process $\procone$ can be arbitrarily bigger than $\size{\procone}$ (e.g.
for every $\natone\in\NN$, $\size{\restp{\cone}{\natone}{\procone}}=\size{\procone}$). However, we have 
the following:
\begin{lemma}
   For every $\procone,\proctwo$, $\size{\procone}=\size{\proctwo}$ whenever
   $\procone\scon\proctwo$.
   Moreover, there is a polynomial $\polyone$ such that for every $\procone$,
   there is $\proctwo$ with $\procone\scon\proctwo$ and the number
   of symbols in $\proctwo$ is at most $\polyone(\size{\proctwo})$.
\end{lemma}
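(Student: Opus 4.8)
The plan is to prove the two statements separately by elementary means. For the first, let $R=\{(\procone,\proctwo)\mid\size{\procone}=\size{\proctwo}\}$; this is plainly an equivalence relation, and it is a congruence because $\size{-}$ is defined compositionally ($\size{\para{\procone}{\proctwo}}$, $\size{\rest{\cone}{\procone}}$, etc. depend only additively on the sizes of the immediate subterms). Hence it suffices to check that each of the seven generating axioms of $\scon$ relates processes of equal size, for then $\scon$, being the least congruence with that property, is contained in $R$. The $\alpha$-equivalence axiom is immediate since $\size{-}$ never inspects names; the associativity and commutativity laws for $|$ and the commutation law for $\nu$ follow directly from the defining clauses; $\size{\rest{\cone}{\emproc}}=\size{\emproc}=0$ takes care of $\rest{\cone}{\emproc}\scon\emproc$; $\size{\para{\procone}{\emproc}}=\size{\procone}+0$ of $\para{\procone}{\emproc}\scon\procone$; and for scope extrusion both $\para{(\rest{\cone}{\procone})}{\proctwo}$ and $\rest{\cone}{(\para{\procone}{\proctwo})}$ evaluate to $\size{\procone}+\size{\proctwo}$, precisely because restriction adds nothing to the size.

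For the second statement I would first record a \emph{derived} law: $\rest{\cone}{\procone}\scon\procone$ whenever $\cone\notin\fn{\procone}$. This is not an axiom, but follows by instantiating scope extrusion at $\procone:=\emproc$ and simplifying, namely $\procone\scon\para{(\rest{\cone}{\emproc})}{\procone}\scon\rest{\cone}{(\para{\emproc}{\procone})}\scon\rest{\cone}{\procone}$ when $\cone\notin\fn{\procone}$. Next, observe that the only features that can make the number of symbols of a process exceed its size are occurrences of $\emproc$ and occurrences of the restriction operator, neither of which contributes to $\size{-}$. Call $\proctwo$ \emph{reduced} if either $\proctwo=\emproc$, or $\proctwo$ has no occurrence of $\emproc$ sitting immediately below a $|$ or a $\nu$ and no subterm $\rest{\cone}{\proctwo'}$ with $\cone\notin\fn{\proctwo'}$. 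Treating $\para{\procone}{\emproc}\scon\procone$ (up to commutativity of $|$), $\rest{\cone}{\emproc}\scon\emproc$ and the derived law above as left-to-right rewrites, every process is structurally congruent to a reduced one: any non-reduced process exhibits a redex, and the rewriting terminates because each step strictly decreases the number of $\emproc$- and $\nu$-occurrences. By the first statement the reduced process has the same size as the original, so it remains to find $\polyone$ bounding the number of symbols of a reduced $\proctwo$ by $\polyone(\size{\proctwo})$.

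This last step is a counting argument on the syntax tree of a reduced $\proctwo$, which we may also take to have all bound names pairwise distinct (legitimate since $\aequ\subseteq\scon$). By the defining clauses of size, the number of prefix, $\casem$, $\inlm$ and $\inrm$ nodes in $\proctwo$ is \emph{exactly} $\size{\proctwo}$, these being the only constructs that contribute a ``$+1$''. Contracting every $\nu$-node yields a tree whose internal nodes are unary (prefixes, $\inlm$, $\inrm$) or binary ($|$, $\casem$) and whose leaves are exactly the occurrences of $\emproc$; in a reduced $\proctwo$ each such $\emproc$ lies directly below a prefix, a $\casem$, an $\inlm$/$\inrm$, or is the root, so there are at most $2\size{\proctwo}+1$ leaves, and since a tree with only unary and binary internal nodes has one fewer binary node than leaves, at most $2\size{\proctwo}$ occurrences of $|$. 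For the restrictions: in a reduced $\proctwo$ every $\nu$-bound name occurs, non-bindingly, in its scope, and each non-binding occurrence of a name is attached to a prefix (as channel or object) or to a $\casem$/$\inlm$/$\inrm$ (as channel); mapping each $\nu$-binder to one occurrence of the name it binds is injective, so the number of restrictions is at most the number of non-binding name occurrences, which is at most $2\size{\proctwo}$. Summing up, $\proctwo$ has $O(\size{\proctwo})$ nodes, each carrying $O(1)$ names and punctuation symbols, so its number of symbols is $O(\size{\proctwo})$ and $\polyone$ may even be taken linear.

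The only genuine difficulty is in the second part, and it is bookkeeping rather than conceptual: one must pick the ``reduced'' conditions so that they are simultaneously strong enough for the counting to go through — in particular so that no vacuous restriction survives, which is exactly where the derived law $\rest{\cone}{\procone}\scon\procone$ is indispensable, since the raw axioms offer no way to erase a restriction — and weak enough to be reachable by a terminating sequence of $\scon$-rewrites. Everything else is a direct structural induction.
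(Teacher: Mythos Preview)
Your argument is correct. For the first claim you do exactly what the paper does---check the seven axioms and invoke that $\scon$ is the least congruence---only spelled out more fully.

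For the second claim your route differs from the paper's. The paper merely asserts that an induction on $\procone$ works once one fixes $\polyone(x)=x^2$, without further detail. You instead (i) derive the law $\rest{\cone}{\procone}\scon\procone$ for $\cone\notin\fn{\procone}$, (ii) rewrite to a ``reduced'' form with no idle $\emproc$ under $|$ or $\nu$ and no vacuous restrictions, and (iii) do a direct node count on the resulting syntax tree. This is more work, but it buys you two things: an explicit, self-contained argument where the paper gives only a hint, and a strictly better bound---your count yields $\polyone$ linear rather than quadratic. Your observation that vacuous restrictions must be eliminable via $\scon$ (and that the raw axioms alone do not provide this, so the derived law is essential) is exactly the point that any honest proof of the second claim must address; the paper's one-line sketch leaves this implicit. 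One small imprecision: you write that ``the only features that can make the number of symbols exceed the size are $\emproc$ and restriction'', but parallel composition also contributes symbols without contributing to $\size{-}$; your actual counting argument handles $|$ correctly, so this does not affect the proof.
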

\condinc{
\begin{proof}
The fact $\procone\scon\proctwo$ implies $\size{\procone}=\size{\proctwo}$ can be proved
by a simple inspection of Definition~\ref{def:scon}. The second part of the lemma
can be proved by induction on $\procone$ once $\polyone$ is fixed as $\procone(x)=x^2$.
\end{proof}}{}

%%%%%%%%%%%%%%%%%%%%%%%%%%%%
\subsection{The Type System}
%%%%%%%%%%%%%%%%%%%%%%%%%%%%
The language of types of \pidsll\ is exactly the same as the one of \pidill,
and the interpretation of type constructs does not change (see Section~\ref{sect:pidillia} for some informal
details). Typing judgments and typing rules, however, are significantly different, in particular, in the
treatment of the exponential connective $\bang$. 

Typing judgments become syntactical expressions in the form
$$
\tyg{\conone}{\contwo}{\conthree}{\procone}{\cone:\typone}.
$$
First of all, observe how the context is divided into \emph{three} chunks now:
$\conone$ and $\contwo$ have to be interpreted as exponential contexts, while $\conthree$
is the usual linear context from \pidill. The necessity of having \emph{two} exponential
contexts is a consequence of the finer, less canonical exponential discipline of \sll\ compared to the
one of \lilo. We use the following terminology: $\conone$ is said to be the \emph{auxiliary}
context, while $\contwo$ is the \emph{multiplexor} context.

Typing rules are in Figure~\ref{fig:pidslltr}.
\begin{figure}
\begin{center}
\fbox{\begin{minipage}[c]{.99\textwidth}%\footnotesize
$$
\begin{array}{ccccc}
\infer[\Lone]
  {\tyg{\conone}{\contwo}{\conthree,\cone:\unit}{\procone}{\tcone}}
  {\tyg{\conone}{\contwo}{\conthree}{\procone}{\tcone}}
&
\hspace{10pt}
&
\infer[\Rone]
  {\tyg{\conone}{\contwo}{\emcon}{\emproc}{\cone:\unit}}
  {}
\end{array}
$$
\vspace{2pt}
$$
\begin{array}{ccc}
\infer[\Lten]
 {\tyg{\conone}{\contwo}{\conthree, \cone : A \tens B}{\inc{\cone}{\ctwo}{\procone}}{\tcone}}
 {\tyg{\conone}{\contwo}{\conthree, \ctwo: A, \cone: B}{\procone}{\tcone}}
&
\hspace{10pt}
&
\infer[\Rten]
 {\tyg{\conone_1,\conone_2}{\contwo}{\conthree_1,\conthree_2}{\rest{\ctwo}{\outc{\cone}{\ctwo}{(\para{\procone}{\proctwo})}}}{\cone: A \tens B}}
 {\tyg{\conone_1}{\contwo}{\conthree_1}{\procone}{\ctwo:A} & \tyg{\conone_2}{\contwo}{\conthree_2}{\proctwo}{\cone:B}}
\end{array}
$$
\vspace{2pt}
$$
\begin{array}{ccc}
\infer[\Llin]
 {\tyg{\conone_1,\conone_2}{\contwo}{\conthree_1,\conthree_2, \cone: A \lin B}
   {\rest{\ctwo}{\outc{\cone}{\ctwo}{(\para{\procone}{\proctwo})}}}{\tcone}}
 {\tyg{\conone_1}{\contwo}{\conthree_1,\ctwo: A}{\procone}{\tcone} & \tyg{\conone_2}{\contwo}{\conthree_2, \cone:B}{\proctwo}{\tcone}}
&
\hspace{-5pt}
&
\infer[\Rlin]
 {\tyg{\conone}{\contwo}{\conthree}{\inc{\cone}{\ctwo}{\procone}}{\cone : A \lin B}}
 {\tyg{\conone}{\contwo}{\conthree, \ctwo: A }{\procone}{\cone: B}}
\end{array}
$$
\vspace{2pt}
$$
\begin{array}{ccc}
\infer[\Lplus]
 {\tyg{\conone}{\contwo}{\cone:\typone\plus\typtwo,\conthree}{\case{\ctwo}{\procone}{\proctwo}}{\tcone}}
 {
   \tyg{\conone}{\contwo}{\conthree,\cone:\typone}{\procone}{\tcone} &
   \tyg{\conone}{\contwo}{\conthree,\cone:\typtwo}{\procone}{\tcone}
 }
&
\hspace{10pt}
&
\infer[\Rplusone]
  {\tyg{\conone}{\contwo}{\conthree}{\inl{\cone}{\procone}}{\cone:\typone\plus\typtwo}}
  {\tyg{\conone}{\contwo}{\conthree}{\procone}{\cone:\typone}} 
\end{array}
$$
\vspace{2pt}
$$
\begin{array}{ccc}
\infer[\Rplustwo]
  {\tyg{\conone}{\contwo}{\conthree}{\inr{\cone}{\procone}}{\cone:\typone\plus\typtwo}}
  {\tyg{\conone}{\contwo}{\conthree}{\procone}{\cone:\typtwo}} 
&
\hspace{10pt}
&
\infer[\Lwithone]
  {\tyg{\conone}{\contwo}{\conthree,\cone:\typone\with\typtwo}{\inl{\cone}{\procone}}{\tcone}}
  {\tyg{\conone}{\contwo}{\conthree,\cone:\typone}{\procone}{\tcone}} 
\end{array}
$$
\vspace{2pt}
$$
\begin{array}{ccc}
\infer[\Lwithtwo]
  {\tyg{\conone}{\contwo}{\conthree,\cone:\typone\with\typtwo}{\inr{\cone}{\procone}}{\tcone}}
  {\tyg{\conone}{\contwo}{\conthree,\cone:\typtwo}{\procone}{\tcone}} 
&
\hspace{10pt}
&
\infer[\Rwith]
 {\tyg{\conone}{\contwo}{\conthree}{\case{\ctwo}{\procone}{\proctwo}}{\cone:\typone\with\typtwo}}
 {
   \tyg{\conone}{\contwo}{\conthree}{\procone}{\cone:\typone} &
   \tyg{\conone}{\contwo}{\conthree}{\procone}{\cone:\typtwo}
 }
\end{array}
$$
\vspace{2pt}
$$
\begin{array}{ccc}
\infer[\bemd]
 {\tyg{\conone}{\contwo,\cone:A}{\conthree}{\rest{\ctwo}{\outc{\cone}{\ctwo}{\procone}}}{\tcone}}
 {\tyg{\conone}{\contwo,\cone:A}{\conthree,\ctwo: A}{\procone}{\tcone}}
&
\hspace{10pt}
&
\infer[\bemb]
 {\tyg{\conone,\cone:A}{\contwo}{\conthree}{\rest{\ctwo}{\outc{\cone}{\ctwo}{\procone}}}{\tcone}}
 {\tyg{\conone}{\contwo}{\conthree,\ctwo: A}{\procone}{\tcone}}
\end{array}
$$
\vspace{2pt}
$$
\begin{array}{ccccc}
\infer[\Lbangd]
{\tyg{\conone}{\contwo}{\conthree, \cone: \bang A}{\procone}{\tcone}}
{\tyg{\conone}{\contwo, \cone : A}{\conthree}{\procone}{\tcone}}
&
\hspace{10pt}
&
\infer[\Lbangb]
{\tyg{\conone}{\contwo}{\conthree, \cone: \bang A}{\procone}{\tcone}}
{\tyg{\conone, \cone : A}{\contwo}{\conthree}{\procone}{\tcone}}
&
\hspace{10pt}
&
\infer[\Rbang]
{\tyg{\emcon}{\contwo}{\bang\conone}{\binc{\cone}{\ctwo}{\proctwo}}{\cone: \bang A}} 
{\tyg{\conone}{\emcon}{\emcon}{\proctwo}{\ctwo:A}}
\end{array}
$$
\vspace{2pt}
\condinc{
$$
\infer[\cut]
 {\tyg{\conone_1,\conone_2}{\contwo}{\conthree_1,\conthree_2}{\rest{\cone}{(\para{\procone}{\proctwo})}}{\tcone}}
 {\tyg{\conone_1}{\contwo}{\conthree_1}{\procone}{\cone:A} & \tyg{\conone_2}{\contwo}{\conthree_2, \cone:A}{\proctwo}{\tcone}}
$$
\vspace{2pt}
$$
\infer[\cutd]
 {\tyg{\conone}{\contwo}{\conthree}{\rest{\cone}{(\para{\binc{\cone}{\ctwo}{\procone}}{\proctwo})}}{\tcone}}
 {\tyg{\contwo}{\emcon}{\emcon}{\procone}{\ctwo:A} & \tyg{\conone}{\contwo,\cone:A}{\conthree}{\proctwo}{\tcone}}
$$
}
{
$$
\begin{array}{ccc}
\infer[\cut]
 {\tyg{\conone_1,\conone_2}{\contwo}{\conthree_1,\conthree_2}{\rest{\cone}{(\para{\procone}{\proctwo})}}{\tcone}}
 {\tyg{\conone_1}{\contwo}{\conthree_1}{\procone}{\cone:A} & \tyg{\conone_2}{\contwo}{\conthree_2, \cone:A}{\proctwo}{\tcone}}
&
\hspace{-3pt}
&
\infer[\cutd]
 {\tyg{\conone}{\contwo}{\conthree}{\rest{\cone}{(\para{\binc{\cone}{\ctwo}{\procone}}{\proctwo})}}{\tcone}}
 {\tyg{\contwo}{\emcon}{\emcon}{\procone}{\ctwo:A} & \tyg{\conone}{\contwo,\cone:A}{\conthree}{\proctwo}{\tcone}}
\end{array}
$$
}
\vspace{2pt}
$$
\infer[\cutb]
 {\tyg{\conone_1,\conone_2}{\contwo}{\conthree}{\rest{\cone}{(\para{\binc{\cone}{\ctwo}{\procone}}{\proctwo})}}{\tcone}}
 {\tyg{\conone_1}{\emcon}{\emcon}{\procone}{\ctwo:A} & \tyg{\conone_2, \cone:A}{\contwo}{\conthree}{\proctwo}{\tcone}}
$$
\end{minipage}}
\end{center}
\caption{Typing rules for \pidsll.}\label{fig:pidslltr} 
\end{figure}
The rules governing the typing constant $\unit$, the multiplicatives ($\tens$ and $\lin$) and the
additives ($\plus$ and $\with$) are exact analogues of the ones from \pidill. The only differences
come from the presence of two exponential contexts: in binary multiplicative
rules ($\Rten$ and  $\Llin$) the auxiliary context is treated multiplicatively, 
while the multiplexor context is treated additively,
as in \pidill\footnote{The reader familiar with linear logic and proof nets will recognize 
in the different treatment of the auxiliary and multiplexor contexts, one of the basic 
principles of $\sll$: \emph{contraction is forbidden on the auxiliary doors of exponential boxes}. 
The channel names contained in the auxiliary context correspond to the auxiliary doors of 
exponential boxes, so we treat them multiplicatively. The contraction effect induced by 
the additive treatment of the channel names in the multiplexor context corresponds to 
the multiplexing rule of $\sll$. }. Now, consider the rules governing the exponential 
connective $\bang$, which are $\bemb$, $\bemd$, $\Lbangb$, $\Lbangd$ and $\Rbang$:
\begin{varitemize}
\item
  The rules $\bemb$ and $\bemd$ both allow to spawn a server. This corresponds to turning
  an assumption $\cone:A$ in the linear context into one $\ctwo:A$ in one of the exponential
  contexts; in $\bemd$, $\cone:A$ could be already present in the multiplexor context, while
  in $\bemb$ this cannot happen;
\item
  The rules $\Lbangb$ and $\Lbangd$ lift an assumption in the exponential contexts to
  the linear context; this requires changing its type from $\typone$ to $\bang\typone$;
\item
  The rule $\Rbang$ allows to turn an ordinary process into a server, by packaging it into
  a replicated input and modifying its type.
\end{varitemize}
Finally there are \emph{three} cut rules in the system, namely $\cut$, $\cutb$ and $\cutd$:% and $\cutw$:
\begin{varitemize}
\item
  $\cut$ is the usual linear cut rule, i.e. the natural generalization of the one from \pidill.
\item
  $\cutb$ and $\cutd$ allow to eliminate an assumption in one of the the two exponential contexts. In both
  cases, the process which allows to do that must be typable with empty linear and multiplexor
  contexts.
\end{varitemize}

%%%%%%%%%%%%%%%%%%%%%%%%%%%%%%%%%
\subsection{Back to Our Example}\label{sect:btoe}
%%%%%%%%%%%%%%%%%%%%%%%%%%%%%%%%%
Let us now reconsider the example processes introduced in Section~\ref{sect:onbint}. The basic
building block over which everything is built was the process
$\dupserver_i=\binc{\cone_i}{\ctwo}{\rest{\cthree}{\outc{\cone_{i+1}}{\cthree}{\rest{\cfour}{\outc{\cone_{i+1}}{\cfour}}}}}$.
We claim that for every $i$, the process $\dupserver_i$ is \emph{not} typable in \pidsll. To understand why, observe
that the only way to type a replicated input like $\dupserver_i$ is by the typing
rule $\Rbang$, and that its premise requires the body of the replicated input to be typable with empty linear and multiplexor
contexts. A quick inspection on the typing rules reveals that every name in the \emph{auxiliary} context
occurs (free) exactly once in the underlying process (provided we count two occurrences in the
branches of a $\casem$ as just \emph{a single} occurrence). However, the name $\cone_{i+1}$ appears
\emph{twice} in the body of $\dupserver_i$.

A slight variation on the example above, on the other hand, \emph{can} be typed in \pidsll, but this
requires changing its type. 
\condinc
{}
{See~\cite{DalLagoDiGiambe2011ev} for more details.}
%It is still a \emph{typable} process in \pidsll\ but, as can be
%easily checked, it cannot receive \emph{the same} types it had in \pidill. Indeed, it
%can be typed as follows:
%$$
%\tyg{\emcon}{\emcon}{\cone_{i+1}:\bang{^{i+1}\unit}}{\dupserver_i}{\cone_{i}:\bang{^{i}\unit}};\\
%$$
%The parallel composition 
%$\mulser_n\doteq\dotrest{\cone_{1}}{\cone_{n}}{(\dotpara{\dupserver_n}{\dupserver_0})}$
%can still be typed, but it receives a different type itself:
%$$
%\tyg{\emcon}{\emcon}{\cone_{n}:\bang{^{n}\unit}}{\mulser_n}{\cone_{0}:\unit}
%$$
%As a consequence, the exponential behavior caused by allowing
%$\mulser_n$ to interact with a client can \emph{still} take place.
%However, and this is what makes $\pidsll$ different from $\pidill$, the type $\bang{^n\unit}$
%assigned to $\cone_n$ in the typing judgment for $\mulser_n$ somehow reflects 
%the exponent of a polynomial governing the interaction complexity of $\mulser_n$.
%Please notice how in $\pidill$, every process in the form $\mulser_n$ received
%\emph{exactly} the same type.

%%%%%%%%%%%%%%%%%%%%%%%%%%%%%%%%%%%%%%%%%%%%%
\subsection{Subject Reduction}
%%%%%%%%%%%%%%%%%%%%%%%%%%%%%%%%%%%%%%%%%%%%%
A basic property most type systems for functional languages satisfy is subject reduction: typing is preserved along
reduction. For processes, this is often true for internal reduction: if $\procone\reds\proctwo$ and $\vdash\procone:\typone$,
then $\vdash\proctwo:\typone$. In this section, a subject reduction result for \pidsll\ will be given and some ideas 
on the underlying proof will be described. Some concepts outlined here will become necessary ingredients in the
proof of bounded interaction, to be done in Section~\ref{sect:boundint} below. Subject reduction is proved by
closely following the path traced by Caires and Pfenning; as a consequence, we proceed quite quickly, concentrating our attention on
the differences with their proof.

\newcommand{\pttotd}[1]{\widehat{#1}}
When proving subject reduction, one constantly work with type derivations. This is 
particularly true here, where (internal) reduction corresponds to the cut-elimination process. A linear
notation for proofs in the form of \emph{proof terms} can be easily defined, allowing for more compact 
descriptions. As an example, a proof in the form
$$
\infer[\cut]
 {\tyg{\conone_1,\conone_2}{\contwo}{\conthree_1,\conthree_2}{\rest{\cone}{(\para{\procone}{\proctwo})}}{\tcone}}
 {\tdone:\tyg{\conone_1}{\contwo}{\conthree_1}{\procone}{\cone:A} & \tdtwo:\tyg{\conone_2}{\contwo}{\conthree_2, \cone:A}{\proctwo}{\tcone}}
$$
corresponds to the proof term $\Pcut{\ptone}{\cone}{\pttwo}$, where $\ptone$ is the proof term
for $\tdone$ and $\pttwo$ is the proof term for $\tdtwo$. If $\ptone$ is a proof term corresponding
to a type derivation for the process $\procone$, we write $\pttotd{\ptone}=\procone$. From now
on, proof terms will often take the place of processes: $\tyg{\conone}{\contwo}{\conthree}{\ptone}{\tcone}$ 
stands for the existence of a type derivation $\ptone$ with conclusion
$\tyg{\conone}{\contwo}{\conthree}{\pttotd{\ptone}}{\tcone}$. A proof term $\ptone$ is said
to be normal if it does not contain any instances of cut rules.
\condinc{In Figure~\ref{fig:typextr}  we show in detail how processes are associated with proof terms.
\begin{figure}
\begin{center}
  \begin{tabular}[t]{lll}
    $\PLone{\cone}{\ptone}$      & $\rightsquigarrow$ & $\widehat{\ptone}^{z}$\\
    $\PRone$			    & $\rightsquigarrow$ &  $0$\\
    $\PLten{\cone}{\ctwo}{\cthree}{\pttwo}$ & $\rightsquigarrow$ &    $\inc{\cone}{\ctwo}{\widehat{\pttwo}^{z}}$\\ 
    $\PRten{\ptone}{\pttwo}$  & $\rightsquigarrow$ &   $\rest{\ctwo}{\outc{\cone}{\ctwo}{(\para{\widehat{\ptone}^{y}}{\widehat{\pttwo}^{x}})}}$ \\
    $\PLlin{\cone}{\ptone}{\ctwo}{\pttwo}$ &  & $\rest{\ctwo}{\outc{\cone}{\ctwo}{(\para{\widehat{\ptone}^{y}}{\widehat{\pttwo}^{z}})}}$ \\
    $\PRlin{\cone}{\ptone}$ & $\rightsquigarrow$ &   $\inc{\cone}{\ctwo}{\widehat{\pttwo}^{x}}$  \\
    $\Pcut{\ptone}{\cone}{\pttwo}$ & $\rightsquigarrow$ &  $\rest{\cone}{(\para{\widehat{\ptone}^{x}}{\widehat{\pttwo}^{z}})}$ \\
    $\Pcutb{\ptone}{\cone}{\pttwo}$ & $\rightsquigarrow$ &   $\rest{\cone}(\para{\binc{\cone}{\ctwo}{\widehat{\ptone}^{y}}}{\widehat{\pttwo}^{z}})$\\
    $\Pcutd{\ptone}{\cone}{\pttwo}$ & $\rightsquigarrow$ &  $\rest{\cone}(\para{\binc{\cone}{\ctwo}{\widehat{\ptone}^{y}}}{\widehat{\pttwo}^{z}})$ \\
    % $\Pcutw{\ptone}{\cone}{\pttwo}$ &  &  $\rest{\cone}(\para{\binc{\cone}{\ctwo}{\widehat{\ptone}^{y}}}{\widehat{\pttwo}^{z}})$ \\
    $\Pbemb{\cone}{\ctwo}{\pttwo}$  & $\rightsquigarrow$ &  $\rest{\ctwo}{\outc{\cone}{\ctwo}{\widehat{\pttwo}^{z}}}$\\ 
    $\Pbemd{\cone}{\ctwo}{\pttwo}$  & $\rightsquigarrow$ &  $\rest{\ctwo}{\outc{\cone}{\ctwo}{\widehat{\pttwo}^{z}}}$ \\ 
    $\PRbang{\ptone}{\cone_1,\ldots,\cone_n}$  & $\rightsquigarrow$ &   $\binc{\cone}{\ctwo}{\widehat{\ptone}^{y}}$\\
    $\PLbangb{\cone}{\ptone}$  & $\rightsquigarrow$ & $\widehat{\ptone}^{z}$  \\
    $\PLbangd{\cone}{\ptone}$  & $\rightsquigarrow$ & $\widehat{\ptone}^{z}$   \\
    $\PLplus{\cone}{\ctwo}{\ptone}{\cthree}{\pttwo}$  & $\rightsquigarrow$ &  $\case{\ctwo}{\widehat{\ptone}^{x}}{\widehat{\pttwo}^{z}}$ \\
    $\PRplusone{\ptone}$ & $\rightsquigarrow$ &    $\inl{\cone}{\widehat{\ptone}^{x}}$\\
    $\PRplustwo{\ptone}$ & $\rightsquigarrow$ &    $\inr{\ctwo}{\widehat{\ptone}^{y}}$ \\
    $\PLwithone{\cone}{\ctwo}{\pttwo}$  & $\rightsquigarrow$  &  $\inl{\cone}{\widehat{\ptone}^{z}}$ \\
    $\PLwithtwo{\cone}{\ctwo}{\ptone}$  &  $\rightsquigarrow$ &  $\inr{\ctwo}{\widehat{\ptone}^{z}}$ \\
    $\PRwith{\ptone}{\pttwo}$  &  $\rightsquigarrow$  &   $\case{\cthree}{\widehat{\ptone}^{z}}{\widehat{\pttwo}^{z}}$\\
  \end{tabular}
\end{center}
\caption{Extraction of processes from proof terms. }\label{fig:typextr}
\end{figure}
}{}

Subject reduction will be proved by showing that if $\procone$ is typable by a type derivation
$\ptone$ and $\procone\reds\proctwo$, then a type derivation $\pttwo$ for $\proctwo$ exists. Actually,
$\pttwo$ can be obtained by manipulating $\ptone$ using techniques derived from cut-elimination.
Noticeably, not every cut-elimination rule is necessary to prove subject reduction. In other words,
we are in presence of a weak correspondence between proof terms and processes, and remain
far from a genuine Curry-Howard correspondence. 

Those manipulations of proof-terms which are necessary to prove subject reduction can be classified
as follows:
\begin{varitemize}
\item
  First of all, a binary relation $\cred$ on proof terms called \emph{computational reduction} can
  be defined. At the logical level, this corresponds to proper cut-elimination steps, i.e. those
  cut-elimination steps in which two rules introducing the same connective interact. At the
  process level, computational reduction correspond to internal reduction. $\cred$ is not
  symmetric. \condinc{Computational reduction rules are given in Figure \ref{fig:comred}.}{}
\item
  A binary relation $\cpred$ on proof terms called \emph{shift reduction}, distinct from $\cred$
  must be introduced. At the process level, it corresponds to structural congruence.
  As $\cred$, $\cpred$ is not a symmetric relation. \condinc{Shift reduction rules are given in Figure \ref{fig:shiftred}.}{}
\item
  Finally, an equivalence relation $\cequ$ on proof terms called \emph{proof equivalence} is necessary.
  At the logical level, this corresponds to the so-called commuting conversions, while at the process
  level, the induced processes are either structurally congruent or strongly bisimilar. \condinc{Equivalence rules are given in Figure \ref{fig:equred}.}{}
\end{varitemize}
\condinc{
\begin{figure}
  \begin{center}
    \begin{tabular}[t]{crcl}
      $(\cut/\Rten/\Lten):$ & $\Pcut{(\PRten{\ptone}{\pttwo})}{\cone}{\PLten{\cone}{\ctwo}{\cone}{\ptthree}}$  & $\cred$ &
      $\Pcut{\ptone}{\ctwo}{\Pcut{\pttwo}{\cone}{\ptthree}}$\\

      $(\cut/\Llin/\Rlin):$ &
      $\Pcut{\PRlin{\ctwo}{\ptone}}{\cone}{\PLlin{\cone}{\pttwo}{\cone}{\ptthree}}$ &  $\cred$ &
      $\Pcut{\Pcut{\pttwo}{\ctwo}{\ptone}}{\cone}{\ptthree}$ \\

      $(\cut/\Rwith/\Lwithone):$ &
      $\Pcut{\PRwith{\ptone}{\pttwo}}{\cone}{\PLwithone{\cone}{\ctwo}{\ptthree}}$ &
      $\cred$ &  
      $\Pcut{\ptone}{\cone}{\ptthree}$\\

      $(\cut/\Rwith/\Lwithtwo):$ &
      $\Pcut{\PRwith{\ptone}{\pttwo}}{\cone}{\PLwithtwo{\cone}{\ctwo}{\ptthree}}$ & 
      $\cred$ &  
      $\Pcut{\pttwo}{\cone}{\ptthree}$\\

      $(\cut/\Rplusone/\Lplus):$ &
      $\Pcut{\PRplusone{\ptone}}{\cone}{\PLplus{\cone}{\ctwo}{\pttwo}{\cthree}{\ptthree}}$ & $\cred$ & 
      $\Pcut{\ptone}{\cone}{\pttwo}$\\
     
      $(\cut/\Rplustwo/\Lplus):$ &
      $\Pcut{\PRplustwo{\ptone}}{\cone}{\PLplus{\cone}{\ctwo}{\pttwo}{\cthree}{\ptthree}}$ &
      $\cred$ &
      $\Pcut{\ptone}{\cone}{\ptthree}$\\

      $(\cutb/-/\bemb):$ & 
      $\Pcutb{\ptone}{\cone}{\Pbemb{\cone}{\ctwo}{\pttwo}}$ & 
      $\cred$ &
      $\Pcut{\lift{\ptone}}{\ctwo}{\Pcutd{\ptone}{\cone}{\lift{\pttwo}}}$\\
        %  & & & \quad where $\cone \notin FV (\widehat{\pttwo})$\\

      $(\cutd/-/\bemd):$ &
      $\Pcutd{\ptone}{\cone}{\Pbemd{\cone}{\ctwo}{\pttwo}}$ &
      $\cred$ & 
      $\Pcut{\lift{\ptone}}{\ctwo}{\Pcutd{\ptone}{\cone}{\pttwo}}$\\
    \end{tabular}
  \end{center}
  \caption{Computational reduction rules}\label{fig:comred}
\end{figure}
\begin{figure}
  \begin{center}
    \begin{tabular}[t]{llll}
      $(\cut/\Rbang/\Lbangb):$ &
      $\Pcut{\PRbang{\ptone}{\cone_1,\ldots,\cone_n}}{\cone}{\PLbangb{\cone}{\pttwo}}$ & 
      $\cpred$ &   
      $\PLbangb{\cone_1}{\PLbangb{\cone_2}{\ldots\PLbangb{\cone_n}{\Pcutd{\ptone}{\ctwo}{\pttwo}}\ldots}}$\\

      $(\cut/\Rbang/\Lbangd):$ &
      $\Pcut{\PRbang{\ptone}{\cone_1,\ldots,\cone_n}}{\cone}{\PLbangd{\cone}{\pttwo}}$ & 
      $\cpred$ &   
      $\PLbangb{\cone_1}{\PLbangb{\cone_2}{\ldots\PLbangb{\cone_n}{\Pcutd{\ptone}{\ctwo}{\pttwo}}\ldots}}$\\ 
    \end{tabular}
  \end{center}
  \caption{Shift reduction rules}\label{fig:shiftred}
\end{figure}
\begin{figure}
  \begin{center}
    \textbf{Structural Conversions}
  \end{center}
  \begin{center}
    \begin{tabular}[t]{crcl}
      $(\cut/-/\cut_1):$ &
      $\Pcut{\ptone}{\cone}{\Pcut{\pttwo_\cone}{\ctwo}{\ptthree_\ctwo}}$ &  $\cequ$ &
      $\Pcut{\Pcut{\ptone}{\cone}{\pttwo_\cone}}{\ctwo}{\ptthree_{\ctwo}}$\\
      $(\cut/-/\cut_2):$ &
      $\Pcut{\ptone}{\cone}{\Pcut{\pttwo}{\ctwo}{\ptthree_{\cone\ctwo}}}$ &  $\cequ$ &
      $\Pcut{\pttwo}{\cone}{\Pcut{\ptone}{\ctwo}{\ptthree_{\cone\ctwo}}}$\\
      $(\cut/-/\cutb):$ &
      $\Pcut{\ptone}{\cone}{\Pcutb{\pttwo}{\ctwo}{\ptthree_{\cone\ctwo}}}$ &  $\cequ$ &
      $\Pcutb{\pttwo}{\ctwo}{\Pcut{\ptone}{\cone}{\ptthree_{\cone\ctwo}}}$\\
      $(\cut/\cutb/-):$&
      $\Pcut
      {\Pcutb{\ptone}{\ctwo}{\pttwo_{\ctwo}}}
      {\cone}{\ptthree_{\cone}}$ &
      $\cequ$ & 
      $\Pcutb
      {\ptone}{\ctwo}{\Pcut{{\pttwo_{\ctwo}}}
        {\cone}{\ptthree_{\cone}}}$\\
      $(\cut/-/\cutd):$ &
      $\Pcut{\ptone}{\cone}{\Pcutd{\pttwo}{\ctwo}{\ptthree_{\cone\ctwo}}}$ & $\cequ$ & 
      $\Pcutd{\pttwo}{\ctwo}{\Pcut{\ptone}{\cone}{\ptthree_{\cone\ctwo}}}$\\
      $(\cut/\cutd/-):$&
      $\Pcut
      {\Pcutd{\ptone}{\ctwo}{\pttwo_{\ctwo}}}
      {\cone}{\ptthree_{\cone}}$&
      $\cequ$ & 
      $\Pcutd
      {\ptone}{\ctwo}{\Pcut{{\pttwo_{\ctwo}}}
        {\cone}{\ptthree_{\cone}}}$\\
      $(\cut/\Rone/\Lone):$&
      $\Pcut
      {\PRone}
      {\cone}{\PLone{\cone}{\ptone}}$&
      $\cequ$ &
      $\ptone$
    \end{tabular}
  \end{center}
  \begin{center}
    \textbf{Strong Bisimilarities}
  \end{center}
  \begin{center}
    \begin{tabular}[t]{crcl}
      $(\cutd/-/\cut):$ &
      $\Pcutd
      {\ptone}{\cone}{\Pcut{{\pttwo_{\cone}}}
        {\ctwo}{\ptthree_{\cone \ctwo}}}$
      & $\cequ$ & 
      $\Pcut
      {\Pcutd{\ptone}{\cone}{\pttwo_{\cone}}}
      {\ctwo}
      {\Pcutd{\ptone}{\cone}{\ptthree_{\cone \ctwo}}}$\\
      $(\cutd/- /\cutd):$&
      $\Pcutd
      {\ptone}{\cone}{\Pcutd{{\pttwo_{\cone}}}
        {\ctwo}{\ptthree_{\cone \ctwo}}}$
      & $\cequ$ & $\Pcutd
      {\ptone}
      {\cone}
      {\Pcutd{\pttwo_{\cone}}{\ctwo}
        {\Pcutd{\ptone}{\cone}{\ptthree_{\cone \ctwo}}}}$\\
      $(\cutd/- /\cutb):$ &
      $\Pcutd
      {\ptone}{\cone}{\Pcutb{{\pttwo_{\cone}}}
        {\ctwo}{\ptthree_{\cone \ctwo}}}$
      & $\cequ$ & $\Pcutb
      {\pttwo_{\cone}}
      {\ctwo}
      {\Pcutd{\ptone}{\cone}{\ptthree_{\cone \ctwo}}}
      $\\
      $(\cutb/-/\cut_1):$ &
      $\Pcutb{\ptone}{\cone}{\Pcut{\pttwo_\cone}{\ctwo}{\ptthree_\ctwo}}$ &
      $\cequ$ &
      $\Pcut{\Pcutb{\ptone}{\cone}{\pttwo_\cone}}{\ctwo}{\ptthree_{\ctwo}}$ \\
      $(\cutb/-/\cut_2):$&
      $\Pcutb{\ptone}{\cone}{\Pcut{\pttwo}{\ctwo}{\ptthree_{\cone\ctwo}}}$ &$\cequ$ & 
      $\Pcut{\pttwo}{\ctwo}{\Pcutb{\ptone}{\cone}{\ptthree_{\cone\ctwo}}}$ \\
      $(\cutb/-/\cutb)_1:$&
      $\Pcutb{\ptone}{\cone}{\Pcutb{\pttwo_\cone}{\ctwo}{\ptthree_\ctwo}}$ & $\cequ$&
      $\Pcutb{\Pcutb{\ptone}{\cone}{\pttwo_\cone}}{\ctwo}{\ptthree_{\ctwo}}$\\
      $(\cutb/-/\cutb)_2:$&
      $\Pcutb{\ptone}{\cone}{\Pcutb{\pttwo}{\ctwo}{\ptthree_{\cone\ctwo}}}$ &$\cequ$&
      $\Pcutb{\pttwo}{\cone}{\Pcutb{\ptone}{\ctwo}{\ptthree_{\cone\ctwo}}}$\\
      $(\cutb/- /\cutd):$&
      $\Pcutb
      {\ptone}{\cone}{\Pcutd{{\pttwo_{\cone}}}
        {\ctwo}{\ptthree_{\cone \ctwo}}}$
      & $\cequ$ &
      $\Pcutd
      {\pttwo_{\cone}}
      {\ctwo}
      {\Pcutb{\ptone}{\cone}{\ptthree_{\cone \ctwo}}}$\\
      $(\cutd/- /\cutd)_0:$&
      $\Pcutd
      {\ptone}{\cone}{\Pcutd{{\pttwo_{\cone}}}
        {\ctwo}{\ptthree_{\cone \ctwo}}}$
      & $\cequ$ &
      $\Pcutd
      {\pttwo_{\cone}}
      {\ctwo}
      {\Pcutd{\ptone}{\cone}{\ptthree_{\cone \ctwo}}}$ (if $\ctwo \notin FV(\widehat{\ptthree}?)$)\\
      $(\cutd/-/-_{0}):$ &
      $\Pcutd{\ptone}{\cone}{\pttwo}$ & $\cequ$ &
      $\pttwo$  (if $x \notin FN(\widehat{\pttwo})$)\\
    \end{tabular}  
  \end{center}
  \begin{center}
    \textbf{Commuting Conversions}
  \end{center}
  \begin{center}
    \begin{tabular}[t]{crcl}
      $(\cut/-/\Lone):$&
      $\Pcut{\ptone}{\cone}{\PLone{\ctwo}{\pttwo_{\cone}}}$ & $\cequ$  & $\PLone{\ctwo}{\Pcut{\ptone}{\cone}{\pttwo_{\cone}}} $\\
      $(\cut/-/\Lbangb):$&
      $\Pcut{\ptone}{\cone}{\PLbangb{\ctwo}{\pttwo_{\cone \cthree}}}$ & $\cequ$  & $\PLbangb{\ctwo}{\Pcut{\ptone}{\cone}{\pttwo_{\cone \cthree}}} $\\
      $(\cut/-/\Lbangd):$&
      $\Pcut{\ptone}{\cone}{\PLbangd{\ctwo}{\pttwo_{\cone \cthree}}}$ & $\cequ$  & $\PLbangd{\ctwo}{\Pcut{\ptone}{\cone}{\pttwo_{\cone \cthree}}} $\\
      $(\cut/\Lone/-):$&
      $\Pcut{\PLone{\ctwo}{\ptone}}{\cone}{\pttwo_{\cone}}$ & $\cequ$  & $\PLone{\ctwo}{\Pcut{\ptone}{\cone}{\pttwo_{\cone}}} $\\
      $(\cut/\Lbangb/-):$&
      $\Pcut{\PLbangb{\ctwo}{\ptone_{\cthree}}}{\cone}{\pttwo_{\cone}}$ & $\cequ$  & $\PLbangb{\ctwo}{\Pcut{\ptone_{\cthree}}{\cone}{\pttwo_{\cone \cthree}}} $\\
      $(\cut/\Lbangd/-):$&
      $\Pcut{\PLbangd{\ctwo}{\ptone_{\cthree}}}{\cone}{\pttwo_{\cone}}$ & $\cequ$  & $\PLbangd{\ctwo}{\Pcut{\ptone_{\cthree}}{\cone}{\pttwo_{\cone \cthree}}} $\\
      $(\cutb/-/\Lone):$&
      $\Pcutb{\ptone}{\cone}{\PLone{\ctwo}{\pttwo_{\cone}}}$ & $\cequ$  & $\PLone{\ctwo}{\Pcutb{\ptone}{\cone}{\pttwo_{\cone}}} $\\
      $(\cutb/-/\Lbangb):$&
      $\Pcutb{\ptone}{\cone}{\PLbangb{\ctwo}{\pttwo_{\cone \cthree}}}$ & $\cequ$  & $\PLbangb{\ctwo}{\Pcutb{\ptone}{\cone}{\pttwo_{\cone \cthree}}} $\\
      $(\cutb/-/\Lbangd):$&
      $\Pcutb{\ptone}{\cone}{\PLbangd{\ctwo}{\pttwo_{\cone \cthree}}}$ & $\cequ$  & $\PLbangd{\ctwo}{\Pcutb{\ptone}{\cone}{\pttwo_{\cone \cthree}}} $\\
      $(\cutd/-/\Lone):$&
      $\Pcutd{\ptone}{\cone}{\PLone{\ctwo}{\pttwo_{\cone}}}$ & $\cequ$  & $\PLone{\ctwo}{\Pcutd{\ptone}{\cone}{\pttwo_{\cone}}} $\\
      $(\cutd/-/\Lbangb):$&
      $\Pcutd{\ptone}{\cone}{\PLbangb{\ctwo}{\pttwo_{\cone \cthree}}}$ & $\cequ$  & $\PLbangb{\ctwo}{\Pcutd{\ptone}{\cone}{\pttwo_{\cone \cthree}}} $\\
      $(\cutd/-/\Lbangd):$&
      $\Pcutd{\ptone}{\cone}{\PLbangd{\ctwo}{\pttwo_{\cone \cthree}}}$ & $\cequ$  & $\PLbangd{\ctwo}{\Pcutd{\ptone}{\cone}{\pttwo_{\cone \cthree}}} $\\
    \end{tabular}
  \end{center}\caption{Equivalence rules}\label{fig:equred}
\end{figure}
}{}
The reflexive and transitive closure of $\cpred\cup\cequ$ is denoted with $\cpredequ$,
i.e. $\cpredequ=(\cpred\cup\cequ)^*$. \condinc{To help the reader understand the rules defining $\cred$, $\cpred$ and $\cequ$, let us give some relevant examples:}
{There is not enough space here to give the rules defining $\cred$, $\cpred$ and $\cequ$. Let us give only some relevant examples:}
\begin{varitemize}
\item
  Let us consider the proof term $\ptone = \Pcut{(\PRten{\ptthree}{\ptfour})}{\cone}{\PLten{\cone}{\ctwo}{\cone}{\ptfive}}$ 
  which corresponds to the $\tens$-case of cut elimination. By a computational reduction rule, 
  % $(\cut/\Rten/\Lten)$,  
  $\ptone \cred \pttwo =\Pcut{\ptthree}{\ctwo}{\Pcut{\ptfour}{\cone}{\ptfive}}$. 
  From the process side, 
  $\widehat{\ptone} = \rest{\cone}{((\para{\rest{\ctwo}{\outc{\cone}{\ctwo}{(\para{\widehat{\ptthree}}
          {\widehat{\ptfour}}))}}}{\inc{\cone}{\ctwo}{\widehat{\ptfive}}})}$ and  
  $\widehat{\pttwo}= \rest{\cone}\rest{\ctwo}{(\para{(\para{\widehat{\ptthree}}{\widehat{\ptfour}})}}{\widehat{\ptfive})}$, 
  where $\widehat{\pttwo}$  is the process obtained from $\widehat{\ptone}$ by internal passing the 
  channel $\ctwo$ through the channel $\cone$.
\item
  Let $\ptone=\Pcut{\PRbang{\ptthree}{\cone_1,\ldots,\cone_n}}{\cone}{\PLbangb{\cone}{\ptfour}}$ be 
  the proof obtained by composing a proof $\ptthree$ (whose last rule is $\Rbang$) with a proof $\ptfour$ 
  (whose last rule is $\Lbangb$) through a $\cut$ rule. A shift reduction rule tells us that  %$(\cut/\Rbang/\Lbangb)$, 
  $\ptone \cpred \pttwo=\PLbangb{\cone_1}{\PLbangb{\cone_2}{\ldots\PLbangb{\cone_n}{\Pcutb{\ptthree}
        {\ctwo}{\ptfour}}\ldots}}$, which corresponds to the opening of a box in
  $\sll$. The shift reduction does not have a corresponding reduction step at process level, since 
  $\widehat{\ptone} \cequ \widehat{\pttwo}$; nevertheless, it is defined as an asymmetric relation, 
  for technical reasons connected to the proof of bounded interaction.
\item
  Let $\ptone=\Pcutd{\ptthree}{\cone}{\Pcut{{\ptfour}}{\ctwo}{\ptfive}}$. A defining rule %$(\cutd/-/\cut)$ 
  for proof equivalence $\cequ$, states that in $\ptone$ the $\cutd$ rule can be permuted 
  over the $\cut$ rule, by duplicating $\ptthree$; namely 
  $\ptone \cequ \pttwo=\Pcut{\Pcutd{\ptthree}{\cone}{\ptfour}}{\ctwo}{\Pcutd{\ptthree}{\cone}{\ptfive}}$. 
  This is possible because the channel $\cone$ belongs to the multiplexor contexts of both $\ptfour,\ptfive$, 
  such  contexts being treated additively. At the process level, 
  $\widehat{\ptone}=\rest{\cone}{(\para{\para{(\binc{\cone}{\ctwo}{\widehat{\ptthree}})}
      {\rest{\ctwo}(\widehat{\ptfour}}}}{\widehat{\ptfive}))}$ , while 
  $\widehat{\pttwo}=\rest{\ctwo}{((\para{\rest{\cone}{\para{(\binc{\cone}{\ctwo}{\widehat{\ptthree}})}
        {\widehat{\ptfour}}}))}{(\rest{\cone}{\para{(\binc{\cone}{\ctwo}{\widehat{\ptthree}})}{\widehat{\ptfive})))}}}}$, 
  $\widehat{\ptone}$ and $\widehat{\pttwo}$ being strongly bisimilar.
\end{varitemize}
\condinc{The following propositions state the correspondences between the proof terms manipulation rules described above and relations over processes: we omit the 
proofs, leaving to the reader the verification of each case. 
\begin{proposition}
 Let $\tyg{\conone}{\contwo}{\conthree}{\ptone}{\tcone}$ and $\tyg{\conone'}{\contwo'}{\conthree'}{\pttwo}{\tcone'}$. If $\ptone \cred \pttwo$, then 
$\widehat{\ptone} \reds \widehat{\pttwo}$.
\end{proposition}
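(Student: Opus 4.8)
The plan is to argue by inspection of the computational reduction rules: there are eight of them, and for each I would unfold the process-extraction map $\widehat{\cdot}$ on both the redex $\ptone$ and the contractum $\pttwo$ and exhibit a single $\pi$-calculus reduction step, possibly preceded and followed by rewriting with the axioms of structural congruence $\scon$ (under which $\reds$ is closed). If $\cred$ is taken to be closed under proof-term contexts, the compound cases are immediate, since $\widehat{\cdot}$ is compositional and $\reds$ is closed under parallel composition, name restriction and $\scon$; so it genuinely suffices to treat the base rules. The typing hypotheses on $\ptone$ and $\pttwo$ play no active role in the reduction correspondence; they only guarantee that the two proof terms are genuine derivations, so that $\widehat{\ptone}$ and $\widehat{\pttwo}$ are the processes displayed in the extraction table.

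For the multiplicative rules the step is a communication. For $(\cut/\Rten/\Lten)$, with $\ptone=\Pcut{(\PRten{\ptthree}{\ptfour})}{\cone}{\PLten{\cone}{\ctwo}{\cone}{\ptfive}}$, one has $\widehat{\ptone}=\rest{\cone}{(\para{\rest{\ctwo}{\outc{\cone}{\ctwo}{(\para{\widehat{\ptthree}}{\widehat{\ptfour}})}}}{\inc{\cone}{\ctwo}{\widehat{\ptfive}}})}$; a single synchronisation on $\cone$, passing the bound name $\ctwo$, followed by scope extrusion and reassociation via $\scon$, yields $\rest{\cone}{\rest{\ctwo}{(\para{(\para{\widehat{\ptthree}}{\widehat{\ptfour}})}{\widehat{\ptfive}})}}=\widehat{\Pcut{\ptthree}{\ctwo}{\Pcut{\ptfour}{\cone}{\ptfive}}}=\widehat{\pttwo}$, exactly the ``internal passing of $\ctwo$ through $\cone$'' of the examples above. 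The rule $(\cut/\Llin/\Rlin)$ is the mirror image, the output now sitting on the $\lin$-typed channel. For the four additive rules the step is a choice: in $(\cut/\Rwith/\Lwithone)$, $\widehat{\ptone}$ is a restriction on the cut channel $\cone$ over the parallel composition of a $\casem$-process (from $\Rwith$) and an $\inlm$-prefixed process (from $\Lwithone$); the choice reduction keeps the left branch of the $\casem$ and discards the right one, producing $\widehat{\pttwo}$ up to $\scon$. The rules $(\cut/\Rwith/\Lwithtwo)$, $(\cut/\Rplusone/\Lplus)$ and $(\cut/\Rplustwo/\Lplus)$ are obtained from this one by replacing $\inlm$ with $\inrm$ and/or swapping which side carries the process that makes the choice, and are checked identically.

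The two exponential rules $(\cutb/-/\bemb)$ and $(\cutd/-/\bemd)$ are the only delicate ones, and I expect them to be the main obstacle. For $(\cutb/-/\bemb)$, the redex $\Pcutb{\ptone}{\cone}{\Pbemb{\cone}{\ctwo}{\pttwo}}$ extracts to a restriction on $\cone$ over the parallel composition of the replicated server $\binc{\cone}{\ctwo}{\widehat{\ptone}}$ with a process that outputs a fresh name on $\cone$; the replication rule of the $\pi$-calculus fires, spawning one fresh copy of $\widehat{\ptone}$ while the replicated input persists. To identify the result with the extraction of the contractum $\Pcut{\lift{\ptone}}{\ctwo}{\Pcutd{\ptone}{\cone}{\lift{\pttwo}}}$ one needs two observations: (i) the lifting operation leaves the underlying process untouched up to $\scon$, i.e. $\widehat{\lift{\ptone}}\scon\widehat{\ptone}$ and $\widehat{\lift{\pttwo}}\scon\widehat{\pttwo}$ — it only re-decorates a derivation, migrating linear assumptions into the exponential contexts; and (ii) the residual $\Pcutd{\ptone}{\cone}{-}$ extracts to precisely the surviving replicated input placed in parallel with $\widehat{\pttwo}$. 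Granting these, it remains to push the nested restrictions and parallel compositions into the canonical shape prescribed by $\widehat{\cdot}$, again using the commutativity, associativity and scope-extrusion axioms of $\scon$ together with $\alpha$-renaming of the spawned bound parameter. The case $(\cutd/-/\bemd)$ is analogous, the only difference being that $\cone$ may already occur in the multiplexor context, which affects the two derivations but neither the extracted processes nor the single reduction step. In short, the whole argument is one forced reduction per rule, with all the non-routine work — chiefly the invariance $\widehat{\lift{\cdot}}\scon\widehat{\cdot}$ and keeping track of which occurrence of the cut name receives the spawned copy — concentrated in these last two cases.
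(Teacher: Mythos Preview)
Your proposal is correct and follows exactly the approach the paper intends: the paper explicitly omits this proof, stating ``we omit the proofs, leaving to the reader the verification of each case,'' and your rule-by-rule verification is precisely that verification. One small remark: by the Lifting Lemma you have $\widehat{\lift{\ptone}}=\widehat{\ptone}$ on the nose, not merely up to $\scon$, which slightly simplifies the exponential cases.
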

\begin{proposition}
 Let $\tyg{\conone}{\contwo}{\conthree}{\ptone}{\tcone}$ and $\tyg{\conone'}{\contwo'}{\conthree'}{\pttwo}{\tcone'}$. If $\ptone \cpred \pttwo$, then 
$\widehat{\ptone}$ is equivalent to $\widehat{\pttwo}$ modulo structural congruence.
\end{proposition}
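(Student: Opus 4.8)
The plan is to proceed by induction on the derivation of $\ptone\cpred\pttwo$. Shift reduction is generated by the two rules $(\cut/\Rbang/\Lbangb)$ and $(\cut/\Rbang/\Lbangd)$ of Figure~\ref{fig:shiftred}, closed under proof-term contexts, so one has two base cases and a contextual step. Both $\ptone$ and $\pttwo$ are assumed typable; since a subderivation of a valid derivation is valid, every proof term occurring in the $\cpred$-derivation is typable, so the induction hypothesis will be applicable to the subterms met in the contextual step, and beyond this types play no role here, as structural congruence does not inspect them.

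For the base cases I would simply read off the extractions of Figure~\ref{fig:typextr} on the two sides of each rule. Consider $(\cut/\Rbang/\Lbangb)$, where $\ptone=\Pcut{\PRbang{\ptthree}{\cone_1,\ldots,\cone_n}}{\cone}{\PLbangb{\cone}{\ptfour}}$ rewrites to $\pttwo$, the latter being a stack of $\Lbangb$-applications --- one for each of $\cone_1,\ldots,\cone_n$ --- wrapped around the exponential cut ($\cutb$) of $\ptthree$ against $\ptfour$ on the name $\cone$. Now $\Rbang$ extracts to a replicated input $\binc{\cone}{\ctwo}{\widehat{\ptthree}}$ on the cut name; $\cut$ extracts to the restriction on $\cone$ of the parallel composition of its two subextractions; every $\Lbangb$, like every $\Lbangd$, leaves the underlying process unchanged; and the exponential cut itself extracts precisely to the restriction on $\cone$ of the parallel composition of $\binc{\cone}{\ctwo}{\widehat{\ptthree}}$ with $\widehat{\ptfour}$. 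Assembling these, $\widehat{\ptone}$ and $\widehat{\pttwo}$ are both, up to renaming of bound names, the process $\rest{\cone}{(\para{\binc{\cone}{\ctwo}{\widehat{\ptthree}}}{\widehat{\ptfour}})}$; hence $\widehat{\ptone}\aequ\widehat{\pttwo}$ and, a fortiori, $\widehat{\ptone}\scon\widehat{\pttwo}$. The rule $(\cut/\Rbang/\Lbangd)$ is treated identically, its exponential cut being a $\cutd$ instead of a $\cutb$; but $\cutb$ and $\cutd$ have the same process extraction, so nothing changes. This is exactly the slogan already observed for ``box opening'': shift reduction is invisible at the process level.

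The contextual step is routine: inspecting Figure~\ref{fig:typextr}, the extraction of every proof-term constructor is obtained by applying $\pi$-calculus constructors --- input, output, replicated input, restriction, parallel composition and the choice construct --- to the extractions of its immediate subterms, so the extraction $\widehat{\mathcal{C}[\cdot]}$ of a one-hole proof-term context is again a one-hole process context. As structural congruence is a congruence for all of these constructors, the induction hypothesis $\widehat{\ptthree}\scon\widehat{\ptfour}$ propagates to $\widehat{\mathcal{C}[\ptthree]}\scon\widehat{\mathcal{C}[\ptfour]}$, i.e. $\widehat{\ptone}\scon\widehat{\pttwo}$, closing the induction. The single point requiring care is the channel-name bookkeeping in the base cases: one has to check that the replicated input synthesised by the $\cutb$/$\cutd$ extraction on the right uses precisely the bound input name that $\Rbang$ produces on the left --- so that the two sides are $\alpha$-equivalent rather than merely strongly bisimilar --- and that the auxiliary names $\cone_1,\ldots,\cone_n$ never occur in the extracted process, so that stacking $\Lbangb$-rules is truly a no-op. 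Both facts follow mechanically from the $\Rbang$, $\cutb$ and $\cutd$ typing rules and the extraction table, and it is this that makes the present statement deliver $\scon$ rather than the weaker ``structurally congruent or strongly bisimilar'' one must accept for proof equivalence $\cequ$.
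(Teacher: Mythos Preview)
Your proposal is correct and follows the same approach the paper intends: the paper itself omits this proof, explicitly ``leaving to the reader the verification of each case'', so your case analysis on the two shift rules (plus contextual closure) is exactly what is expected, and your computation of the extractions on both sides is right.

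One small correction: your justification that stacking the $\Lbangb$-rules is a no-op ``because the auxiliary names $\cone_1,\ldots,\cone_n$ never occur in the extracted process'' is not accurate --- those names are precisely the free auxiliary channels of $\widehat{\ptthree}$, so they \emph{do} occur. The reason the stack of $\Lbangb$'s is a no-op is simply that the extraction of $\PLbangb{\cdot}{\cdot}$ is the identity on processes by definition (Figure~\ref{fig:typextr}); no condition on the names is needed. This does not affect the validity of your argument.
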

\begin{proposition}
 Let $\tyg{\conone}{\contwo}{\conthree}{\ptone}{\tcone}$ and $\tyg{\conone'}{\contwo'}{\conthree'}{\pttwo}{\tcone'}$. If $\ptone \cequ \pttwo$, then 
$\widehat{\ptone}$ is equivalent to $\widehat{\pttwo}$ modulo structural congruence or strong bisimilarity.
\end{proposition}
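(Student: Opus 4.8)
The plan is to argue by induction on the derivation that $\ptone \cequ \pttwo$, viewing $\cequ$ as the least equivalence relation on typed proof terms that is closed under proof-term contexts (as its use in subject reduction requires) and that contains the base pairs of the three families of equivalence rules: the structural conversions (permutations of $\cut$ over $\cut$, $\cutb$, $\cutd$ and the pair $\Pcut{\PRone}{\cone}{\PLone{\cone}{\ptone}}\cequ\ptone$), the strong bisimilarities (pushing $\cutd$ or $\cutb$ over a cut, plus garbage collection), and the commuting conversions (permuting one of $\Lone$, $\Lbangb$, $\Lbangd$ over a cut). A first, simplifying observation collapses most of the induction: in this $\pi$-calculus structural congruence is contained in strong bisimilarity, $\scon \subseteq \sim$, so ``$\widehat{\ptone}$ structurally congruent or strongly bisimilar to $\widehat{\pttwo}$'' is exactly ``$\widehat{\ptone}\sim\widehat{\pttwo}$''. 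Since $\sim$ is already reflexive, symmetric and transitive, the equivalence-closure steps of the induction are free; and since $\sim$ is a congruence for every process constructor (parallel composition, restriction, the input/output/replicated-input prefixes and $\casem$) while the extraction map $\widehat{\,\cdot\,}$ is defined by structural recursion --- so each proof-term context $C[\cdot]$ induces a process context $D_C[\cdot]$ with $\widehat{C[\ptone]}=D_C[\widehat{\ptone}]$ --- the context-closure step is free as well. What remains is to check each base pair by unfolding $\widehat{\,\cdot\,}$ on both sides.

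For the structural conversions, unfolding $\widehat{\,\cdot\,}$ yields two processes that differ only by the structural-congruence axioms: associativity and commutativity of $\mid$, commutativity of consecutive $\nu$-binders, scope extrusion $\para{(\rest{\cone}{\procone})}{\proctwo}\scon\rest{\cone}{(\para{\procone}{\proctwo})}$, and $\para{\procone}{\emproc}\scon\procone$. The side condition of scope extrusion --- the restricted name not occurring free in the sibling --- is always available because the channel being cut is linear and hence fresh for the other premise; and for $\Pcut{\PRone}{\cone}{\PLone{\cone}{\ptone}}$ one additionally uses $\rest{\cone}{\procone}\scon\procone$ with $\cone\notin\fn{\procone}$, valid since the name introduced by $\Lone$ does not occur in $\widehat{\ptone}$. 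So these cases in fact land inside $\scon$. The commuting conversions are even easier: $\widehat{\,\cdot\,}$ erases each of $\Lone$, $\Lbangb$, $\Lbangd$ (their extraction is that of the premise, up to renaming of the distinguished channel), so permuting such a rule past a cut leaves the extracted process unchanged and the two sides are $\alpha$-equivalent, a fortiori structurally congruent.

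The only genuinely delicate cases are the strong bisimilarities. Unfolding $\widehat{\,\cdot\,}$ turns each of them into an instance of a standard behavioural law for replicated inputs. The prototype, coming from $\cutd$ pushed over $\cut$, is server sharing: because a replicated input never changes state, one server shared by two clients behaves exactly as two private copies,
\[
\rest{\cone}{(\para{\binc{\cone}{\ctwo}{\procthree}}{(\para{\procone}{\proctwo})})} \;\sim\; \para{\rest{\cone}{(\para{\binc{\cone}{\ctwo}{\procthree}}{\procone})}}{\rest{\cone}{(\para{\binc{\cone}{\ctwo}{\procthree}}{\proctwo})}}
\]
whenever $\cone$ occurs in $\procone,\proctwo$ only as the subject of outputs --- which is precisely what the additive (multiplexor) treatment of exponential channels guarantees. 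Garbage collection unfolds to $\rest{\cone}{(\para{\binc{\cone}{\ctwo}{\procone}}{\proctwo})}\sim\proctwo$ for $\cone\notin\fn{\proctwo}$, and the remaining pairs unfold to iterated or permuted combinations of server sharing, server garbage collection and the $\nu$-/$\mid$-rearrangements already used; in each case the side condition attached to the rule is exactly the behavioural hypothesis that law needs. I would discharge these laws either by exhibiting the evident syntactic bisimulations relating a shared server to its copies, or by appealing to the standard replication laws of the $\pi$-calculus. Establishing those replication laws, and checking that the typing side conditions do translate into their behavioural preconditions, is the one step I expect to require real care; everything else is a mechanical unfolding of $\widehat{\,\cdot\,}$ together with the structural-congruence axioms and the fact that $\sim$ is a congruence containing $\scon$.
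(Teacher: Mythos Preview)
The paper explicitly omits this proof, saying only that the reader should verify each defining case of $\cequ$. Your proposal is exactly such a verification, organised cleanly: you dispose of the closure steps (reflexivity, symmetry, transitivity, context) by appealing to the congruence properties of $\sim$ and the compositionality of $\widehat{\,\cdot\,}$, and then discharge each axiom by unfolding the extraction map. This matches both the spirit and the letter of what the paper intends, and is in fact more explicit than anything the paper provides.

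One small caveat worth tightening. Your reduction of ``structurally congruent or strongly bisimilar'' to ``strongly bisimilar'' relies on $\sim$ being a congruence for \emph{every} process constructor, in particular input prefix. In the $\pi$-calculus this is not automatic: late bisimilarity famously fails it, and even early bisimilarity needs care with name substitution. In the Caires--Pfenning setting one can justify it via the typed discipline, but you should either cite the relevant congruence result or, more conservatively, keep the two relations separate as the paper's own classification in Figure~\ref{fig:equred} does: the structural conversions and the commuting conversions land in $\scon$, which \emph{is} unconditionally a congruence, so context closure for those blocks is free; only the ``strong bisimilarities'' block needs the replication laws (server sharing and garbage collection) you describe, and there the contexts that actually arise in the subject-reduction argument are restriction and parallel composition, for which $\sim$ is uncontroversially preserved. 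With that adjustment the argument is complete.
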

}{}
\noindent
Before proceeding to Subject Reduction, we give the following two lemmas, concerning structural properties of the type system:
\begin{lemma}[Weakening lemma]\label{weaklemma}
If $\tyg{\conone}{\contwo}{\conthree}{\ptone}{\tcone}$ and
whenever $\contwo\subseteq\confour$, it holds
that $\tyg{\conone}{\confour}{\conthree}{\ptone}{\tcone}$.
\end{lemma}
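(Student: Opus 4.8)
The natural approach is an induction on the structure of the type derivation $\ptone$, showing that enlarging the multiplexor context from $\contwo$ to an arbitrary $\confour\supseteq\contwo$ leaves the conclusion derivable; throughout, $\alpha$-conversion lets us assume that the fresh names of $\confour\setminus\contwo$ clash with none of the other names occurring in the derivation, and in particular with none of the bound channels introduced by the rules.

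For the bulk of the rules the argument is purely mechanical. In the axiom $\Rone$ the multiplexor context is unconstrained, so the conclusion is immediately derivable with $\confour$. In the single-premise rules $\Lone$, $\Lten$, $\Rlin$, $\Rplusone$, $\Rplustwo$, $\Lwithone$, $\Lwithtwo$, $\Lbangb$, $\bemb$ and $\bemd$ the multiplexor context of the premise coincides with that of the conclusion; in the binary rules $\Rten$, $\Llin$, $\Lplus$, $\Rwith$ and in the linear cut $\cut$ it is shared additively by the two premises; in $\Lbangd$ the premise's multiplexor context is $\contwo$ extended by one distinguished assignment $\cone\!:\!A$. In each case one invokes the induction hypothesis on every premise --- with multiplexor context $\confour$, respectively $\confour,\cone\!:\!A$, the latter being legitimate since $\cone\notin\confour$ --- and re-applies the same rule. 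The rule $\Rbang$ is immediate, its conclusion placing no constraint on the multiplexor context; $\cutb$ is easy as well, since there the left premise has an empty multiplexor context while the right premise shares $\contwo$ with the conclusion.

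The one case that calls for genuine work --- and which I expect to be the main obstacle --- is the soft exponential cut $\cutd$. Its left premise is $\tyg{\contwo}{\emcon}{\emcon}{\procone}{\ctwo:A}$, so the conclusion's multiplexor context $\contwo$ reappears there as the \emph{auxiliary} context; re-applying $\cutd$ with multiplexor context $\confour=\contwo,\confive$ would therefore require the left premise re-derived with \emph{auxiliary} context $\confour$, and weakening of the auxiliary context is not sound in general --- the rule $\Rbang$ forces its conclusion's auxiliary context to be empty, and otherwise every auxiliary name has to occur free in the underlying process. The point that rescues the case is that the left premise of $\cutd$ carries \emph{both} an empty linear and an empty multiplexor context. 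My plan is to thread through the same induction a second clause, proved simultaneously: for every derivation with empty linear and multiplexor contexts the auxiliary context may be enlarged by fresh names. This holds because every leaf of such a derivation is an instance of $\Rone$ --- which tolerates an arbitrary auxiliary context --- and a fresh name inserted at the root can be pushed down to such a leaf, being routed at each multiplicative branching into a sub-derivation able to absorb it, the only potential obstruction $\Rbang$ never lying on the route precisely because it demands an empty conclusion-auxiliary context. The residual degenerate case, in which $\contwo$ is empty and the left premise genuinely cannot receive $\confive$ in its auxiliary context (for instance when it is itself an $\Rbang$), must be handled on its own --- e.g.\ by re-assembling the conclusion through $\cutb$ in place of $\cutd$, when available --- and it is here that the bookkeeping is most delicate.

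In short, every rule other than $\cutd$ is routine; the substance of the lemma is confined to $\cutd$, and the clean way to manage it is the simultaneous auxiliary-weakening clause for ``closed'' derivations, which parallels the way the analogous ``copy'' rule is treated by Caires and Pfenning for the single exponential context of the system from which this one descends.
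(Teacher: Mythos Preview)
The paper's own proof is the single sentence ``By a simple induction on the structure of $\ptone$,'' so you have already gone well beyond it. You are right that every case except $\cutd$ is mechanical, and you are right that $\cutd$ is the crux: weakening the conclusion's multiplexor context from $\contwo$ to $\confour$ forces the \emph{auxiliary} context of the left premise to grow from $\contwo$ to $\confour$, and auxiliary weakening is obstructed precisely by $\Rbang$, whose conclusion pins that context to $\emcon$.

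Your resolution, however, does not close the gap---and in fact the ``degenerate case'' you flag is a counterexample to the lemma \emph{as literally stated}. Take $\contwo=\emcon$ and let the left premise of $\cutd$ be an instance of $\Rbang$, for example the derivation of $\tyg{\emcon}{\emcon}{\emcon}{\binc{\scone}{\sctwo}{\emproc}}{\scone:\bang\unit}$. The resulting proof term has outer constructor $\cutd$, and the only rule that types a $\cutd$-term is $\cutd$ itself; its left premise must then be this $\Rbang$ instance with auxiliary context equal to the conclusion's multiplexor context, but $\Rbang$ forces that auxiliary context to be $\emcon$. Hence for no $\confour\supsetneq\emcon$ can the \emph{same} proof term be re-derived. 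Your suggested fallback of re-assembling the conclusion via $\cutb$ (or via $\cut$ together with $\Rbang$ and $\Lbangd$) does recover the process with the enlarged multiplexor context, but yields a different proof term, which the lemma as written forbids. Separately, the simultaneous auxiliary-weakening clause you propose does not recurse cleanly: premises of a derivation whose root has empty multiplexor and linear contexts---for instance the right premise of an internal $\cut$---need not themselves have empty linear context. In short, you have not merely located the hard case, you have uncovered that the paper's one-line proof glosses over a genuine issue; the statement that actually holds, and that suffices for everything downstream, is the weaker one in the style of the Lifting Lemma: there exists $\pttwo$ with $\widehat{\pttwo}=\widehat{\ptone}$ and $\tyg{\conone}{\confour}{\conthree}{\pttwo}{\tcone}$.
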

\begin{proof}
 By a simple induction on the structure of $\ptone$.
\end{proof}
\begin{lemma}[Lifting lemma]\label{liftlemma}
If $\tyg{\conone}{\contwo}{\conthree}{\ptone}{\tcone}$
then there exists an  $\pttwo$ such that
$\tyg{\emcon}{\conone,\contwo}{\conthree}{\pttwo}{\tcone}$
where $\widehat{\pttwo}=\widehat{\ptone}$.
We denote $\pttwo$ by $\lift{\ptone}$.
\end{lemma}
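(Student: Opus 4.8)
The plan is to argue by induction on the type derivation $\ptone$, taking each typing rule that can be its last step, using the induction hypothesis to empty the auxiliary context of the immediate subderivation(s) into their multiplexor contexts, and rebuilding the conclusion by an instance of a (possibly different) rule whose process extraction agrees with the original one. Keeping $\widehat{\lift{\ptone}}=\widehat{\ptone}$ along the way is essentially free, since every reconstruction step used leaves the extracted process untouched.

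For the rules not mentioning the exponential contexts I rebuild with the same rule applied to the lifted premises. The one thing to watch is that for the binary multiplicative rules $\Rten$, $\Llin$ and for $\cut$ the two lifted premises carry different multiplexor contexts $\conone_1,\contwo$ and $\conone_2,\contwo$; there I first use the Weakening lemma (Lemma~\ref{weaklemma}) to bring both to the common context $\conone_1,\conone_2,\contwo$ before applying the rule, which lands exactly on the judgement with empty auxiliary context. For $\Rone$ and $\Rbang$ nothing is needed, since their conclusions already have an empty auxiliary context, so $\lift{\ptone}=\ptone$. For $\bemb$ and $\Lbangb$, which in $\ptone$ push a name $\cone$ into the auxiliary context, I reroute $\cone$ into the multiplexor context instead: after the induction hypothesis I close with $\bemd$ in place of $\bemb$ (prefixed by one appeal to Lemma~\ref{weaklemma} installing $\cone:A$ in the multiplexor context) and with $\Lbangd$ in place of $\Lbangb$; in each case the extracted process is unchanged.

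The real work is in the two exponential cuts. In $\cutd$, the induction hypothesis applied to the client premise $\tyg{\conone}{\contwo,\cone:A}{\conthree}{\proctwo}{\tcone}$ gives $\tyg{\emcon}{\conone,\contwo,\cone:A}{\conthree}{\lift{\proctwo}}{\tcone}$, still with $\cone:A$ in the multiplexor context, so the reconstruction is again a $\cutd$, now with conclusion multiplexor context $\conone,\contwo$; this forces the server premise to read $\tyg{\conone,\contwo}{\emcon}{\emcon}{\procone}{\ctwo:A}$, whereas $\ptone$ only provides $\tyg{\contwo}{\emcon}{\emcon}{\procone}{\ctwo:A}$. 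The $\cutb$ case is the same phenomenon: the induction hypothesis on the client premise drives its whole auxiliary context --- which in $\cutb$ already contains the cut name $\cone:A$ --- into the multiplexor context, so the reconstruction again proceeds through $\cutd$ rather than $\cutb$, once more needing the server premise widened on the auxiliary side (plus one use of Lemma~\ref{weaklemma} to deposit the leftover auxiliary names of the conclusion into the multiplexor context). Thus the crux, and the step I expect to demand the most care, is to widen the auxiliary context of a server derivation by a name that does not occur in its body; I would factor this out as an auxiliary-weakening lemma, namely that $\tyg{\conone}{\contwo}{\conthree}{\ptone}{\tcone}$ implies $\tyg{\conone,a:B}{\contwo}{\conthree}{\pttwo}{\tcone}$ with $\widehat{\pttwo}=\widehat{\ptone}$ for $a$ fresh, proved by its own induction on $\ptone$, the delicate sub-case being subderivations whose conclusion is a replicated input (where $\Rbang$ pins the auxiliary context to be empty). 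Granting this, the $\cutd$ and $\cutb$ cases close, and a final inspection of all cases confirms $\widehat{\lift{\ptone}}=\widehat{\ptone}$.
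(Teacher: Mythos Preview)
Your overall plan is the paper's: an induction on the structure of $\ptone$. The paper records nothing beyond ``a simple induction on the structure of the proof term $\ptone$'', so your case analysis is considerably more detailed than what the paper provides, and your treatment of the non-exponential rules (rebuild with the same rule, inserting multiplexor weakening via Lemma~\ref{weaklemma} for the multiplicative binaries and $\cut$) is correct and matches what the paper presumably intends.

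There is, however, a genuine gap in the exponential-cut cases. The auxiliary-weakening lemma you propose (adding a fresh $a{:}B$ to the auxiliary context while keeping the extracted process unchanged) fails precisely at the point you flagged as ``delicate'': if the server's derivation is rooted at $\Rbang$, the conclusion's auxiliary context is forced to be empty, and no rule of the system re-introduces a name into the auxiliary context without altering the process (only $\bemb$ does, and it adds an output). So ``granting this'' cannot be granted. A repair that avoids auxiliary weakening altogether is to reconstruct $\cutd$ (and analogously $\cutb$) as a linear $\cut$: wrap the server with $\Rbang$ (its premise already has empty multiplexor and linear contexts), then use $\Lbangd$ to discharge each resulting $!$-typed linear hypothesis into the multiplexor, obtaining a left premise with empty auxiliary context and multiplexor $\conone,\contwo$; on the client side apply the induction hypothesis and then $\Lbangd$ on the cut name. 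The extracted process is unchanged since $\Rbang$ supplies exactly the replicated input that $\cutd$ would have added, and $\Lbangd$ is process-transparent.
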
  \
\begin{proof}
Again, a simple induction on the structure of the proof term $\ptone$.
\end{proof}
Finally:
\begin{theorem}[Subject Reduction]\label{theo:subjred}
Let $\tyg{\conone}{\contwo}{\conthree}{\ptone}{\tcone}$. 
Suppose that $\widehat{\ptone}=\procone\reds\proctwo$. Then there
is $\pttwo$ such that $\widehat{\pttwo}=\proctwo$, 
$\ptone\cpredequ\cred\cpredequ\pttwo$ and
$\tyg{\confour}{\confive}{\conthree}{\pttwo}{\tcone}$, where 
$\conone,\contwo=\confour,\confive$.
\end{theorem}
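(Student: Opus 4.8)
The plan is to follow the path traced by Caires and Pfenning, arguing by induction on the derivation of the reduction $\procone\reds\proctwo$. Recall that $\reds$ (internal reduction) is generated by the base communication rules --- synchronisation on a $\tens$/$\lin$ channel, synchronisation of an $\inlm$/$\inrm$ signal with a $\casem$, and synchronisation of an output with a replicated input --- together with closure under the static contexts $\rest{\cone}{[\cdot]}$ and $\para{[\cdot]}{\procthree}$ and closure under $\scon$. A preliminary remark: only the computational reduction rules $\cred$ --- never a full cut-elimination procedure --- will be used to manufacture $\pttwo$; the relations $\cpred$ and $\cequ$ are invoked solely to reshape $\ptone$, and one checks that they leave the extracted process unchanged up to $\scon$ and strong bisimilarity (this is the routine content behind the ``$\cpredequ$'' decorations; see~\cite{DalLagoDiGiambe2011ev} for the complete rule tables).

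The congruence cases are straightforward. In a \pidsll\ derivation a parallel composition, and a restriction, can only be produced within the conclusion of a cut ($\cut$, $\cutb$, $\cutd$) or of a two-premise rule ($\Rten$, $\Llin$, $\bemb$, $\bemd$, and $\Rbang$ for replicated inputs); see Figure~\ref{fig:pidslltr}. Hence a reduction firing strictly inside $\rest{\cone}{[\cdot]}$ or $\para{[\cdot]}{\procthree}$ takes place inside an immediate subderivation, and we conclude by the induction hypothesis, re-applying the last rule and, if the recursive call returned rearranged exponential contexts, using the Weakening Lemma (Lemma~\ref{weaklemma}) to recover the shape that rule demands. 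For the $\scon$-closure case we first prove, by an easy induction on the generation of $\scon$, that $\widehat{\ptone}\scon\procthree$ entails the existence of $\pttwo$ with $\widehat{\pttwo}=\procthree$, $\ptone\cpredequ\pttwo$ and a conclusion differing from that of $\ptone$ only by a rearrangement of the two exponential contexts; prefixing and postfixing such $\cpredequ$-paths to the genuine reduction step yields the required $\cpredequ\cred\cpredequ$.

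The core is the base case, where $\procone$ itself is a communication redex on some channel $\cone$. Then the last rule of $\ptone$ must be a cut acting on $\cone$, but the two premises need not end with the dual introductions of the connective at $\cone$: left rules and further cuts may be interposed along the $\cone$-thread. The first step is therefore to bring this cut into \emph{principal} position, commuting it upward past the interposed rules by the commuting conversions of $\cequ$ and, when $\cone$ carries a $\bang$, by the shift reductions $\cpred$, which ``open a box'' --- turning a $\cut$ against $\Rbang$/$\Lbangb$/$\Lbangd$ into a $\cutb$ or a $\cutd$. As these steps leave $\widehat{\ptone}$ fixed up to $\scon$ and strong bisimilarity, they account for the leading $\cpredequ$; their termination follows from a suitable lexicographic measure on derivations. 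With the cut principal, a single $\cred$ step fires, and one checks case by case on the connective that it realises precisely the $\pi$-calculus reduction at hand under the extraction map $\widehat{\cdot}$: the $(\cut/\Rten/\Lten)$ and $(\cut/\Llin/\Rlin)$ cases give the channel-passing synchronisations, the $\plus$ and $\with$ cases the choice reductions, and the $(\cutb/-/\bemb)$ and $(\cutd/-/\bemd)$ cases the spawning of a replicated server, where the Lifting Lemma (Lemma~\ref{liftlemma}) supplies $\lift{\ptone}$ and the server derivation is duplicated through a $\cutd$. A trailing $\cpredequ$ then re-associates or discards residual cuts so that $\widehat{\pttwo}$ becomes $\proctwo$ --- on the nose, modulo a structural congruence that $\cpredequ$ itself subsumes and that in any case leaves sizes untouched.

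The hard part will be twofold. First, the ``permute-to-principal'' analysis must be shown exhaustive: every configuration blocking a cut has to be treated, in particular every interaction among the three cut forms $\cut$, $\cutb$, $\cutd$ and the exponential left rules $\Lbangb$, $\Lbangd$ --- which is exactly where the bulk of the (space-constrained) shift and equivalence rules reside --- and the measure witnessing termination must decrease through all of them, box-opening included. Second, and genuinely peculiar to the soft discipline, one must verify that the conclusion keeps the form announced by the theorem, i.e. linear context $\conthree$ and exponential contexts $\confour,\confive$ with $\confour,\confive=\conone,\contwo$: opening a box relocates a formerly-auxiliary assumption into the multiplexor context (the premise of $\Lbangb$ lives in the auxiliary context, that of $\Lbangd$ in the multiplexor one) and creates spurious multiplexor assumptions to be absorbed by Weakening, while the $\cutd$-style duplication of a server appends that server's auxiliary context \emph{multiplicatively} to the conclusion. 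Tracking this exponential-context bookkeeping across every rewriting step is the delicate point --- and it is precisely the invariant that the forthcoming proof of bounded interaction will exploit.
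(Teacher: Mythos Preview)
Your proposal is in the right spirit and would ultimately succeed, but it is organised differently from the paper's proof, and there are a couple of points worth flagging.

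\textbf{Induction structure.} The paper proceeds by induction on the \emph{proof term} $\ptone$, not on the derivation of $\procone\reds\proctwo$. This matters because of the ``silent'' rules $\Lone$, $\Lbangb$, $\Lbangd$: they leave the underlying process unchanged, so when the last rule of $\ptone$ is one of these, the reduction derivation does not get smaller and your stated induction hypothesis is unavailable. The paper explicitly lists these three rules (together with the three cuts) as the only possible last rules when $\widehat{\ptone}$ performs an internal step, and for them simply descends into the premise by the induction hypothesis on $\ptone$. Your claim that ``the last rule of $\ptone$ must be a cut acting on $\cone$'' in the base case is therefore not quite right: one may have to peel off several silent rules first. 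This is easily repaired by switching the primary induction to $\ptone$, or by adding a secondary induction on $\ptone$, but it should be made explicit.

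\textbf{Modularisation via labelled transitions.} Where you describe a single global ``permute-to-principal'' procedure terminating by a lexicographic measure, the paper factors the work into one auxiliary lemma per connective (Lemma~\ref{lemma:tens} for $\tens$, and analogues for $\lin$, $\with$, $\plus$, $\bang$, plus Lemma~\ref{lemma:cutbang} and Lemma~\ref{lemma:cutd} and their corollaries for the exponential cuts). Each such lemma assumes that the two cut premises exhibit the relevant dual \emph{labelled transitions} on the cut channel and is proved by simultaneous induction on those premises; this is where the commuting conversions and shift reductions are actually discharged, case by case. Your monolithic account is equivalent in content, but the paper's decomposition avoids having to argue termination of an unspecified rewriting system and makes the exponential-context bookkeeping (which you rightly identify as the delicate point) local to each lemma. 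In particular, the $\cutb$ and $\cutd$ cases are handled by dedicated lemmas whose statements already record the context rearrangement $\conone,\contwo=\confour,\confive$.
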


\condinc{
In order to prove Theorem \ref{theo:subjred} we need the followings auxiliary results:
}
{Let us 
give a sketch of the proof of Theorem \ref{theo:subjred}. We reason by induction on 
the structure of $\ptone$. Since $\widehat{\ptone}=\procone\reds\proctwo$ the only possible last rules of
$\ptone$ can be: $\Lone, \Lbangb, \Lbangd,$, a linear cut ($\cut$) or an exponential cut ($\cutb$ or $\cutd$).
In all the other cases, the underlying process can only perform a visible action, as can be easily
verified by inspecting the rules from Figure~\ref{fig:pidslltr}. With this observation in mind, let
us inspect the operational semantics derivation proving that $\procone\reds\proctwo$. At some point
we will find two subprocesses of $\procone$, call them $\procthree$ and $\procfour$, which
communicate, causing an internal reduction. We here claim that this can only happen 
in presence of a cut, and only the communication between $\procthree$ and $\procfour$ must
occur along the channel involved in the cut. Now, it's only a matter of showing that the
just described situation can be ``resolved'' preserving types. And this can be done by 
way of several lemmas, like the following:}

\condinc{
\begin{lemma}\label{lem:visible}
Let $\tyg{\conone}{\contwo}{\conthree}{\ptone \rightsquigarrow \procone}{\cone : \tcone}$.
\begin{varenumerate}
 \item If $\procone \xrightarrow{\alpha} \proctwo$ and $\tcone = \unit$ then $s(\alpha) \neq \cone$.
 \item If $\procone \xrightarrow{\alpha} \proctwo$ and $\ctwo : \unit \in \conthree$ then $s(\alpha) \neq \ctwo$.
 \item If $\procone \xrightarrow{\alpha} \proctwo$ and $s(\alpha) = \cone$ and $\tcone = \typone \otimes \typtwo$ then $\alpha = \overline{\rest{\ctwo}{\outsc{\cone}{\ctwo}}}$.
  \item If $\procone \xrightarrow{\alpha} \proctwo$ and $s(\alpha) = \ctwo$ and $\ctwo : \typone \otimes \typtwo \in \conthree$ then $\alpha = \insc{\ctwo}{\cthree}$.
  \item If $\procone \xrightarrow{\alpha} \proctwo$ and $s(\alpha) = \cone$ and $\tcone = \typone \lin \typtwo$ then $\alpha = \insc{\cone}{\ctwo}$.
  \item If $\procone \xrightarrow{\alpha} \proctwo$ and $s(\alpha) = \ctwo$ and $\ctwo : \typone \lin \typtwo \in \conthree$ then $\alpha = \overline{\rest{\cthree}{\outsc{\ctwo}{\cthree}}}$.
  \item If $\procone \xrightarrow{\alpha} \proctwo$ and $s(\alpha) = \cone$ and $\tcone = \typone \with \typtwo$ then $\alpha = \inl{\cone}{}$ or $\alpha = \inr{\cone}{}$.
  \item If $\procone \xrightarrow{\alpha} \proctwo$ and $s(\alpha) = \ctwo$ and $\ctwo : \typone \with \typtwo \in \conthree$ then $\alpha = \overline{\inl{\ctwo}{}}$ or $\alpha = \overline{\inr{\ctwo}{}}$.
  \item If $\procone \xrightarrow{\alpha} \proctwo$ and $s(\alpha) = \cone$ and $\tcone = \typone \plus \typtwo$ then
$\alpha = \overline{\inl{\cone}{}}$ or $\alpha = \overline{\inr{\cone}{}}$.
  \item If $\procone \xrightarrow{\alpha} \proctwo$ and $s(\alpha) = \ctwo$ and $\ctwo : \typone \plus \typtwo \in \conthree$ then  $\alpha = \inl{\ctwo}{}$ or $\alpha = \inr{\ctwo}{}$
  \item If $\procone \xrightarrow{\alpha} \proctwo$ and $s(\alpha) = \cone$ and $\tcone = \bang \typone$ then $\alpha = \insc{\cone}{\ctwo}$.
  \item If $\procone \xrightarrow{\alpha} \proctwo$ and $s(\alpha) = \ctwo$ and $\ctwo : \bang \typone$ or $\ctwo \in \conone$ or $\ctwo \in \contwo$ or $\ctwo \in \confour$ then $\alpha = \overline{\rest{\cthree}{\outsc{\ctwo}{\cthree}}}$.
\end{varenumerate}
\end{lemma}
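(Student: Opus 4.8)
The plan is to establish all twelve statements simultaneously by induction on the structure of the proof term $\ptone$, reading off the shape of $\widehat{\ptone}$ from the last typing rule and confronting it with the labelled transition system. We will use two standard facts about the latter: a visible action has as subject a free name of the process; and if $\procone=\rest{\cone'}{(\para{\procone_1}{\procone_2})}$ (possibly with one of the two components a replicated input on $\cone'$) and $\procone\xrightarrow{\alpha}\proctwo$ with $s(\alpha)\neq\cone'$, then $\alpha$ is inherited from a transition, carrying the same label, of either $\procone_1$ or $\procone_2$. We also use the invariant --- immediate by induction on $\ptone$ --- that every free name of $\widehat{\ptone}$ either occurs in $\conone$, $\contwo$ or $\conthree$ or equals the offered channel, and that the domains of $\conone$, $\contwo$, $\conthree$ are pairwise disjoint.

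The base case is $\Rone$, where $\widehat{\PRone}=\emproc$ has no transitions and $\conthree$ is empty, so every statement holds vacuously. The inductive step splits into three families of rules. The \emph{introduction rules} --- the right rules $\Rten$, $\Rlin$, $\Rplusone$, $\Rplustwo$, $\Rwith$, $\Rbang$ and the left rules $\Lten$, $\Llin$, $\Lwithone$, $\Lwithtwo$, $\Lplus$, together with the spawning rules $\bemb$ and $\bemd$ --- all produce a process consisting of a single guarded prefix (an input, a bound output, a choice, a $\casem$, or a replicated input) whose subject is exactly the channel named in the corresponding statement and whose type in the conclusion is exactly the one required there; the transition rules then allow only the transition(s) that fire this prefix, which have precisely the claimed shape, while every other statement is vacuous because any other subject is guarded (here one uses disjointness of the contexts to exclude name clashes). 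For $\bemb$ and $\bemd$ the prefix is a bound output on a channel of the auxiliary or multiplexor context, matching statement~12.

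The \emph{structural rules} $\Lone$, $\Lbangb$, $\Lbangd$ leave both $\widehat{\ptone}$ and the offered type unchanged, merely relocating or adjoining a context formula; hence every statement for $\widehat{\ptone}$ follows from the induction hypothesis applied to the (unique) premise, recording that when $\Lbangb$ or $\Lbangd$ acts on the very channel named in statement~12 one invokes the induction hypothesis in its ``exponential-context'' form, and that the fresh $\cone':\unit$ adjoined by $\Lone$ does not occur free in $\widehat{\ptone}$, which gives statement~2. The \emph{cut rules} $\cut$, $\cutb$, $\cutd$ produce $\rest{\cone'}{(\para{\procone_1}{\procone_2})}$ with $\cone'$ bound, so by the inheritance fact any visible transition with a subject relevant to one of the twelve statements comes from a transition of $\procone_1$ or $\procone_2$ with the same label; since the context splittings of the three cut rules place the channel in question --- and each formula of $\conthree$ --- in the contexts of exactly one premise, and since in $\cutb$ and $\cutd$ the replicated component can fire only on $\cone'$ and mentions no name relevant to the statement, the conclusion follows from the induction hypothesis on the appropriate premise.

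The main obstacle is the bookkeeping in the cut cases: for each of the three cut rules and each of the twelve statements one must verify that the channel mentioned lies in the part of the split context carried by the premise from which the transition originates, and that the other component --- in particular the replicated wrapper $\binc{\cone'}{\ctwo}{\procone_1}$ in $\cutb$ and $\cutd$, whose free names are confined to one exponential context plus $\ctwo$ --- cannot perform a competing visible action on that channel. Everything else is a direct comparison of the typing rules of Figure~\ref{fig:pidslltr} with the transition rules.
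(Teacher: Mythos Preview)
Your proposal is correct and is precisely the routine induction the paper has in mind: the paper's own proof reads, in its entirety, ``Trivial from definitions.'' Your case analysis on the last rule of $\ptone$, together with the two standard LTS facts you isolate, is the natural way to unpack that one-liner, and nothing more is needed.
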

\begin{proof}
Trivial from definitions.
\end{proof}}{}

\begin{lemma}\label{lemma:tens}
 Assume that:
\begin{varenumerate}
  \item   
    $\tyg{\conone_1}{\contwo}{\conthree_1}{\ptone}{\cone : \typone\tens \typtwo}$ 
    with $\widehat{\ptone}=\procone\xrightarrow{\overline{\rest{\ctwo}{\outsc{\cone}{\ctwo}}}} \proctwo$;      
  \item 
    $\tyg{\conone_2}{\contwo}{\conthree_2,\cone : \typone \tens \typtwo}{\pttwo}{\cthree : \typthree}$ 
    with $\widehat{\pttwo}=\procthree \xrightarrow{\insc{\cone}{\ctwo}} \procfour$.
\end{varenumerate}
Then:
\begin{varenumerate}
 \item 
   $\Pcut{\ptone}{\cone}{\pttwo}\cpredequ\cred\cpredequ\ptthree$ for some $\ptthree$;
 \item 
   $\tyg{\conone_1, \conone_2}{\contwo}{\conthree_1, \conthree_2}{\ptthree}{\cthree : \typthree}$,
   where $\widehat{\ptthree}\cequ\rest{\cone}{(\para{\proctwo}{\procfour})}$. 
\end{varenumerate}
\end{lemma}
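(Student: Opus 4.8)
The plan is to reduce the cut $\Pcut{\ptone}{\cone}{\pttwo}$ to a $\tens$-principal cut by an inversion argument, fire the matching computational reduction rule, and then check the type and the underlying process of the residue.

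\emph{Inversion and firing the cut.} Because $\widehat{\ptone}=\procone$ performs an output on $\cone$ and $\cone$ carries the type $\typone\tens\typtwo$ on the right, a routine inversion on the rules of Figure~\ref{fig:pidslltr} (via the proof-term/process correspondence) forces $\ptone$ to be, up to proof equivalence $\cequ$, of the form $\mathcal{C}[\PRten{\ptone_1}{\ptone_2}]$, where $\mathcal{C}$ is a context built only from the ``silent'' left rules $\Lone,\Lbangb,\Lbangd$ (the ones leaving the extracted process untouched) and from cuts whose cut-channel differs from $\cone$, and where
$$
\tyg{\conone_{11}}{\contwo}{\conthree_{11}}{\ptone_1}{\ctwo:\typone},\qquad \tyg{\conone_{12}}{\contwo}{\conthree_{12}}{\ptone_2}{\cone:\typtwo},
$$
with $\conone_1=\conone_{11},\conone_{12}$ and $\conthree_1=\conthree_{11},\conthree_{12}$. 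Symmetrically, since $\widehat{\pttwo}=\procthree$ performs an input on $\cone$ with $\cone:\typone\tens\typtwo$ in its linear context, and $\Lten$ is the only rule that can introduce this capability, $\pttwo\cequ\mathcal{D}[\PLten{\cone}{\ctwo}{\cone}{\pttwo_1}]$ for a similar context $\mathcal{D}$, with $\tyg{\conone_2}{\contwo}{\conthree_2,\ctwo:\typone,\cone:\typtwo}{\pttwo_1}{\cthree:\typthree}$. Using the commuting conversions of $\cequ$ (for $\cut$, $\cutb$, $\cutd$ against $\Lone,\Lbangb,\Lbangd$) and the structural conversions relocating cuts past cuts, all rules of $\mathcal{C}$ and $\mathcal{D}$ can be floated below the top-level cut, so that — by an induction on $\size{\ptone}+\size{\pttwo}$ handling the case where a derivation ends in a cut — we may reason as if $\ptone=\PRten{\ptone_1}{\ptone_2}$ and $\pttwo=\PLten{\cone}{\ctwo}{\cone}{\pttwo_1}$, wrapping the floated rules back around the final result. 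In that reduced situation $\widehat{\ptone}=\rest{\ctwo}{\outc{\cone}{\ctwo}{(\para{\widehat{\ptone_1}}{\widehat{\ptone_2}})}}$, so $\proctwo=\para{\widehat{\ptone_1}}{\widehat{\ptone_2}}$, and $\widehat{\pttwo}=\inc{\cone}{\ctwo}{\widehat{\pttwo_1}}$, so $\procfour=\widehat{\pttwo_1}$; applying the computational reduction rule $\Pcut{(\PRten{\ptone_1}{\ptone_2})}{\cone}{\PLten{\cone}{\ctwo}{\cone}{\pttwo_1}}\cred\Pcut{\ptone_1}{\ctwo}{\Pcut{\ptone_2}{\cone}{\pttwo_1}}$ and taking $\ptthree=\Pcut{\ptone_1}{\ctwo}{\Pcut{\ptone_2}{\cone}{\pttwo_1}}$ (plus the floated rules in general) yields $\Pcut{\ptone}{\cone}{\pttwo}\cpredequ\cred\cpredequ\ptthree$, which is the first conclusion.

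\emph{Type and process of $\ptthree$.} Cutting $\ptone_2$ (of right type $\cone:\typtwo$) against $\pttwo_1$ on $\cone$ types as $\tyg{\conone_{12},\conone_2}{\contwo}{\conthree_{12},\conthree_2,\ctwo:\typone}{\Pcut{\ptone_2}{\cone}{\pttwo_1}}{\cthree:\typthree}$, and then cutting $\ptone_1$ (of right type $\ctwo:\typone$) against this on $\ctwo$ gives $\tyg{\conone_1,\conone_2}{\contwo}{\conthree_1,\conthree_2}{\ptthree}{\cthree:\typthree}$; note that the multiplexor context $\contwo$ stays shared throughout because $\Rten$ treats it additively, whereas the auxiliary and linear contexts split multiplicatively, and re-inserting the floated rules of $\mathcal{C},\mathcal{D}$ restores exactly the contexts $\conone_1,\conone_2;\contwo;\conthree_1,\conthree_2$ claimed. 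On the process side, $\widehat{\ptthree}=\rest{\ctwo}{(\para{\widehat{\ptone_1}}{\rest{\cone}{(\para{\widehat{\ptone_2}}{\widehat{\pttwo_1}})}})}$; since the right channel of $\ptone_1$ is $\ctwo$ and $\cone$ occurs in none of its contexts, $\cone\notin\fn{\widehat{\ptone_1}}$, so scope extrusion gives $\widehat{\ptthree}\scon\rest{\cone}{(\para{(\para{\widehat{\ptone_1}}{\widehat{\ptone_2}})}{\widehat{\pttwo_1}})}=\rest{\cone}{(\para{\proctwo}{\procfour})}$ (in the general case $\widehat{\ptthree}\cequ\rest{\cone}{(\para{\proctwo}{\procfour})}$, the floated restrictions being pushed outward by structural congruence).

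The main obstacle is the inversion step: turning the bare operational facts ``$\procone$ outputs on $\cone$'' and ``$\procthree$ inputs on $\cone$'' into the statement that $\ptone$ and $\pttwo$ are $\cequ$-equivalent to derivations with $\Rten$/$\Lten$ in principal position — especially the case where a derivation ends in a cut, whose extracted process is a restricted parallel composition able to mimic the action, which forces the cut-floating machinery and the induction, together with carefully tracking through all those $\cequ$-rewritings which residual ($\proctwo$ or $\procfour$) descends from which subderivation. Everything after that is the routine bookkeeping of context splits carried out in the last step.
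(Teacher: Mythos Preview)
Your approach is essentially the paper's: a simultaneous induction on $\ptone$ and $\pttwo$ that uses the commuting conversions in $\cequ$ to float the non-principal last rules ($\Lone$, $\Lbangb$, $\Lbangd$, and the three cuts) past the top cut until the $\Rten/\Lten$ principal configuration is exposed, then fires the $(\cut/\Rten/\Lten)$ computational step---exactly what the paper summarizes as ``by simultaneous induction on $\ptone,\pttwo$'' and defers to~\cite{Caires10}. One small slip: scope extrusion alone does not discard the $\rest{\ctwo}$ binder, so your computed $\widehat{\ptthree}$ is really $\rest{\cone}{\rest{\ctwo}{(\para{\proctwo}{\procfour})}}$, in line with the analogous Lemmas for $\lin$ and $\bang$; the missing $\rest{\ctwo}$ in the $\tens$-lemma's stated conclusion appears to be a typo in the paper rather than something you should try to match.
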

\condinc{
\begin{proof}
By simultaneous induction on $\ptone_1, \ptone_2$. The property stated in the lemma holds also for the system $\pidill$ (see ~\cite{Caires10}); 
since the proof technique is essentially the same modulo some minor details, we omit the proof.
\end{proof}
\begin{lemma}\label{lemma:lin}
 Assume 
\begin{varenumerate}
\item $\tyg{\conone_1}{\contwo}{\conthree_1}{\ptone_1 \rightsquigarrow \procone_1}{\cone : \typone \lin \typtwo}$ with $\procone_1 \xrightarrow{\insc{\cone}{\ctwo}} \procone'_1$
\item $\tyg{\conone_2}{\contwo}{\conthree_2,\cone : \typone \lin \typtwo}{\ptone_2 \rightsquigarrow \proctwo_1}{\cthree : \typthree}$ with $\proctwo_1 \xrightarrow{\overline{\rest{\ctwo}{\outsc{\cone}{\ctwo}}}} \proctwo'_1$
\end{varenumerate}
Then
\begin{varenumerate}
 \item $\Pcut{\ptone_1}{\cone}{\ptone_2} \cpredequ\cred\cpredequ \ptone $ for some $\ptone$;
 \item $\tyg{\conone_1, \conone_2}{\contwo}{\conthree_1, \conthree_2}{\ptone \rightsquigarrow \proctwo_2}{\cthree : \typthree}$ for some $\proctwo_2 \cequ \rest{\cone}{\rest{\ctwo}{\para{(\procone'_1}{\proctwo'_1)}}}$. 
\end{varenumerate}
\end{lemma}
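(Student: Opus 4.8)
The plan is to follow, step for step, the structure of the proof of Lemma~\ref{lemma:tens}, arguing by simultaneous induction on the structure of $\ptone_1$ and $\ptone_2$. The first thing I would do is determine, by inspecting the rules of Figure~\ref{fig:pidslltr} (together with Lemma~\ref{lem:visible}), which rules can be the last one of each derivation. Since $\procone_1$ offers $\cone:\typone\lin\typtwo$ and performs the input $\insc{\cone}{\ctwo}$ on $\cone$ itself, the last rule of $\ptone_1$ is forced to be either $\Rlin$ acting on $\cone$, or one of $\Lone,\Lbangb,\Lbangd,\cut,\cutb,\cutd$; no other rule produces a process that can perform an input on its own subject channel. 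Symmetrically, since $\proctwo_1$ carries $\cone:\typone\lin\typtwo$ in its linear context and performs the bound output $\overline{\rest{\ctwo}{\outsc{\cone}{\ctwo}}}$, the last rule of $\ptone_2$ is forced to be $\Llin$ acting on $\cone$ or, again, one of $\Lone,\Lbangb,\Lbangd,\cut,\cutb,\cutd$. These observations make the subsequent case analysis exhaustive.

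The \emph{principal case} is the one in which the last rule of $\ptone_1$ is $\Rlin$ on $\cone$ and the last rule of $\ptone_2$ is $\Llin$ on $\cone$; here I would conclude with a single computational reduction step, the rule for the $\lin$ cut, which rewrites
$$
\Pcut{\ptone_1}{\cone}{\ptone_2}\ \cred\ \ptone,\qquad\text{with}\quad\ptone=\Pcut{\Pcut{\pttwo'}{\ctwo}{\ptone_1'}}{\cone}{\pttwo''},
$$
where $\ptone_1'$ is the premise of the $\Rlin$ (so $\procone'_1=\widehat{\ptone_1'}$ and $\tyg{\conone_1}{\contwo}{\conthree_1,\ctwo:\typone}{\ptone_1'}{\cone:\typtwo}$) and $\pttwo',\pttwo''$ are the premises of the $\Llin$, providing $\ctwo:\typone$ and using $\cone:\typtwo$ respectively, with $\conone_2$ and $\conthree_2$ split multiplicatively between them and the multiplexor context $\contwo$ shared. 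It then remains only to check two routine facts: that two applications of the $\cut$ rule, splitting contexts exactly as the two new cuts dictate, give $\tyg{\conone_1,\conone_2}{\contwo}{\conthree_1,\conthree_2}{\ptone}{\cthree:\typthree}$; and that $\widehat{\ptone}=\rest{\cone}{(\para{\rest{\ctwo}{(\para{\widehat{\pttwo'}}{\widehat{\ptone_1'}})}}{\widehat{\pttwo''}})}$ is structurally congruent to $\rest{\cone}{\rest{\ctwo}{\para{(\procone'_1}{\proctwo'_1)}}}$, which follows by scope extrusion and by commutativity and associativity of parallel composition, since $\proctwo'_1\cequ\para{\widehat{\pttwo'}}{\widehat{\pttwo''}}$.

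In the remaining \emph{commutative cases} the last rule $r$ of $\ptone_1$ (or of $\ptone_2$) is one of $\Lone,\Lbangb,\Lbangd,\cut,\cutb,\cutd$. Such an $r$ does not act on $\cone$, so the transition of the underlying process is performed in the unique premise of $r$ in which $\cone$ still occurs --- and it is unique, because the linear context is split multiplicatively at a $\cut$ and a replicated-input subterm of a $\cutb$ or $\cutd$ cannot move. The plan is: first permute the linear cut on $\cone$ past $r$ --- this is exactly what the commuting conversions and strong bisimilarities defining proof equivalence $\cequ$, together with the shift reductions covering the $\Lbangb,\Lbangd$ cases, are for --- obtaining $\Pcut{\ptone_1}{\cone}{\ptone_2}\cpredequ\ptfour$ with $\ptfour$ containing a strictly smaller instance of the cut under analysis; then apply the induction hypothesis to that inner cut; then re-apply $r$ and concatenate the resulting $\cpredequ\cred\cpredequ$ chain, tracking the process side through the propositions relating $\cred,\cpred,\cequ$ to reduction, to structural congruence and to strong bisimilarity, which yields the required $\proctwo_2\cequ\rest{\cone}{\rest{\ctwo}{\para{(\procone'_1}{\proctwo'_1)}}}$. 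I expect the delicate part to be precisely the $\cutb$ and $\cutd$ sub-cases: before the induction hypothesis can be invoked one must use the Weakening Lemma~\ref{weaklemma} to adjoin the freshly cut exponential assumption to the multiplexor context of the operand that does not already carry it (which, since weakening does not alter the shape of a proof term, does not change the induction measure), and one must check that permuting the linear cut past an exponential cut keeps every auxiliary assumption multiplicative and every multiplexor assumption additive, so that the rebuilt derivation is still well-formed. This bookkeeping is the only real point of divergence from the \pidill\ argument of Caires and Pfenning; everything else transfers verbatim from the proof of Lemma~\ref{lemma:tens}.
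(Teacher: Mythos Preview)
Your proposal is correct and follows exactly the paper's approach: the paper simply defers to the proof of Lemma~\ref{lemma:tens}, which is by simultaneous induction on $\ptone_1$ and $\ptone_2$ (with details omitted and a pointer to Caires--Pfenning), precisely the scheme you spell out. One minor point: permuting the linear cut past $\Lbangb$ or $\Lbangd$ is handled by the commuting conversions in $\cequ$ (rules $(\cut/-/\Lbangb)$, $(\cut/\Lbangb/-)$, etc.), not by shift reductions, but since both are contained in $\cpredequ$ this does not affect your argument.
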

\begin{proof}
See the proof of Lemma \ref{lemma:tens}.
\end{proof}

\begin{lemma}\label{lemma:bang}
 Assume 
\begin{varenumerate}
\item $\tyg{\conone_1}{\contwo}{\conthree_1}{\ptone_1 \rightsquigarrow \procone_1}{\cone : \bang \typone }$ with $\procone_1 \xrightarrow{\insc{\cone}{\ctwo}} \procone'_1$
\item $\tyg{\conone_2}{\contwo}{\conthree_2,\cone : \bang \typone}{\ptone_2 \rightsquigarrow \proctwo_1}{\cthree : \typthree}$ with $\proctwo_1 \xrightarrow{\overline{\rest{\ctwo}{\outsc{\cone}{\ctwo}}}} \proctwo'_1$
\end{varenumerate}
Then
\begin{varenumerate}
 \item $\Pcut{\ptone_1}{\cone}{\ptone_2} \cpredequ\cred\cpredequ \ptone $ for some $\ptone$;
 \item $\tyg{\conone_1, \conone_2}{\contwo}{\conthree_1, \conthree_2}{\ptone \rightsquigarrow \proctwo_2}{\cthree : \typthree}$ for some $\proctwo_2 \cequ \rest{\cone}{\rest{\ctwo}{\para{(\procone'_1}{\proctwo'_1)}}}$. 
\end{varenumerate}

\end{lemma}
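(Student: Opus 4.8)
The plan is to argue by simultaneous induction on the structure of $\ptone_1$ and $\ptone_2$, exactly in the style of Lemmas~\ref{lemma:tens} and~\ref{lemma:lin} (hence, ultimately, of the subject-reduction argument of Caires and Pfenning for $\pidill$); the only genuinely new ingredients are the presence of the two exponential contexts and the fact that one process step is now realised by a proof-term reduction of shape $\cpredequ\cred\cpredequ$ rather than by a bare $\cred$. Since $\cpred$ and $\cequ$ are silent at the process level (they induce structural congruence or strong bisimilarity, never a reduction), the unique genuine step $\procone\reds\proctwo$ will come from the single $\cred$, which in this case is the computational reduction $(\cutd/-/\bemd)$ or $(\cutb/-/\bemb)$ isolated in the principal case below.

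\textbf{Non-principal cases.} Suppose the last rule of $\ptone_1$ does not introduce $\cone:\bang\typone$ on the right (so it is $\Lone$, $\Lbangb$, $\Lbangd$, a left rule on a channel other than $\cone$, or one of $\cut,\cutb,\cutd$ whose cut channel is neither $\cone$ nor captures $\cone,\ctwo$), or symmetrically that the last rule of $\ptone_2$ does not act on $\cone:\bang\typone$. Then the relevant action on $\cone$ is already performed by a strictly smaller subderivation, so we commute the outer $\cut$ on $\cone$ past that rule using the structural and commuting conversions in $\cequ$ (such as $(\cut/-/\Lbangb)$, $(\cut/\Lone/-)$, $(\cut/-/\cutd)$, $(\cut/\cutb/-)$, and so on), apply the induction hypothesis to the resulting strictly smaller cut, and reassemble. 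When a commuting conversion against a $\cutd$ duplicates a subproof, Lemmas~\ref{weaklemma} and~\ref{liftlemma} are used to repair the exponential contexts of the two copies. All these steps only contribute to the $\cpredequ$ prefix and suffix and are invisible at the process level.

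\textbf{Principal case.} Here $\ptone_1$ ends with $\Rbang$, so $\procone_1=\binc{\cone}{\ctwo}{\procthree}$ for some body $\procthree$ whose typing carries only $\ctwo:\typone$ and the server's auxiliary context, and (once the commuting conversions above have exposed it) $\ptone_2$ ends with $\Lbangb$ or $\Lbangd$ acting on $\cone$. We apply the corresponding shift reduction, i.e.\ the one turning $\Pcut{\PRbang{\ptone_1}{\cone_1,\ldots,\cone_n}}{\cone}{\PLbangb{\cone}{\pttwo}}$, resp.\ its $\Lbangd$-variant, into a stack of $\Lbangb$'s over an exponential cut on $\cone:\typone$; this $\cpred$-step is still silent. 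The body of that exponential cut must still perform the bound output $\overline{\rest{\ctwo}{\outsc{\cone}{\ctwo}}}$ on $\cone$, so we push the exponential cut downwards past the intervening rules, again using the equivalences in $\cequ$, now including the strong-bisimilarity rules (such as $(\cutd/-/\cut)$ and $(\cutd/-/\cutd)$) that \emph{duplicate} the server when $\cone$ lies in a multiplexor context, until it meets the $\bemd$ or $\bemb$ instance that actually emits the bound output. At that point the computational reduction $(\cutd/-/\bemd)$, resp.\ $(\cutb/-/\bemb)$, fires, yielding a freshly lifted copy $\lift{\ptthree}$ of the server body cut on $\ctwo$, in parallel with the persisting server, i.e.\ a proof term $\Pcut{\lift{\ptthree}}{\ctwo}{\Pcutd{\ptthree}{\cone}{\ptfour}}$ for suitable $\ptthree,\ptfour$; this is the single $\cred$. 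Reading off the underlying process yields, up to $\scon$, $\rest{\ctwo}{(\para{\procthree}{\rest{\cone}{(\para{\procone_1}{\proctwo'_1})}})}$; since $\cone\notin\fn{\procthree}$ and $\procone'_1\scon\para{\procthree}{\procone_1}$ by the labelled semantics of replicated input, this is $\scon$, hence $\cequ$, to $\rest{\cone}{\rest{\ctwo}{\para{\procone'_1}{\proctwo'_1}}}$; when a $\cutd$ had to be copied on the way down the correspondence is only up to strong bisimilarity, which is exactly why part~2 asserts $\cequ$ and not syntactic identity. Type preservation at each permutation and reduction step is immediate from the shapes of the rules, so the resulting $\ptone$ has the announced typing, with the contexts as stated.

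\textbf{Main obstacle.} The delicate part is the second half of the principal case: after the shift reduction the redex is an exponential cut buried above a possibly long chain of rules (among them further exponential cuts whose multiplexor assumption must be copied), and one must show that pushing it down to the generating $\bemd/\bemb$ terminates, preserves typing through \emph{both} exponential contexts, and leaves the underlying process fixed up to $\scon$ and strong bisimilarity. Concretely this amounts to proving, as a strengthened induction hypothesis, the analogue of the present lemma with the outer $\cut$ replaced by a $\cutb$ or a $\cutd$ (and with $\ptone_1$ no longer constrained to end in $\Rbang$); the cleanest organisation carries $\cut$, $\cutb$ and $\cutd$ through the induction uniformly, mirroring the treatment of exponential boxes in cut-elimination for $\sll$.
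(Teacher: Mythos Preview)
Your proposal is correct and follows essentially the same approach as the paper, which proves this lemma simply by reference to Lemma~\ref{lemma:tens} (itself deferred to Caires and Pfenning's simultaneous induction on $\ptone_1,\ptone_2$). Your detailed account of the principal case---apply the shift reduction $(\cut/\Rbang/\Lbangb)$ or $(\cut/\Rbang/\Lbangd)$, then drive the resulting exponential cut down to the $\bemb/\bemd$ instance via $\cequ$-conversions before firing the computational reduction---is exactly the mechanism the paper relies on; the paper merely organises your ``strengthened induction hypothesis'' for $\cutb$ and $\cutd$ as the separate Lemmas~\ref{lemma:cutbang} and~\ref{lemma:cutd} (with Corollaries~\ref{cor:cutb} and~\ref{cor:cutd}) rather than carrying all three cut forms through a single induction as you suggest.
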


\begin{proof}
See the proof of Lemma \ref{lemma:tens}.
\end{proof}

\begin{lemma}\label{lemma:with}
 Assume 
\begin{varenumerate}
\item $\tyg{\conone_1}{\contwo}{\conthree_1}{\ptone_1 \rightsquigarrow \procone_1}{\cone : \typone \with \typtwo}$ with $\procone_1 \xrightarrow{\inl{\cone}{}} \procone'_1$
\item $\tyg{\conone_2}{\contwo}{\conthree_2,\cone : \typone \with \typtwo}{\ptone_2 \rightsquigarrow \proctwo_1}{\cthree : \typthree}$ with $\proctwo_1 \xrightarrow{\overline{\inl{\cone}{}}} \proctwo'_1$
\end{varenumerate}
Then
\begin{varenumerate}
 \item $\Pcut{\ptone_1}{\cone}{\ptone_2} \cpredequ\cred\cpredequ \ptone $ for some $\ptone$;
 \item $\tyg{\conone_1, \conone_2}{\contwo}{\conthree_1, \conthree_2}{\ptone \rightsquigarrow \proctwo_2}{\cthree : \typthree}$ for some $\proctwo_2 \cequ \rest{\cone}{\para{(\procone'_1}{\proctwo'_1)}}$. 
\end{varenumerate}

\end{lemma}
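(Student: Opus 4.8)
The plan is to proceed by simultaneous induction on the structure of the proof terms $\ptone_1$ and $\ptone_2$, exactly as for Lemma~\ref{lemma:tens} (and hence as for the corresponding statement about \pidill), the induction being driven by the last rules of the two derivations. The one genuinely computational case is the \emph{principal} one, in which both last rules act on $\cone$: the last rule of $\ptone_1$ is then necessarily $\Rwith$ and that of $\ptone_2$ is $\Lwithone$, the ``left'' selector being imposed by the hypothesised labels $\inl{\cone}{}$ and $\overline{\inl{\cone}{}}$. One checks by inspecting Figure~\ref{fig:pidslltr} together with the labelled transition system that $\procone'_1$ must be, up to $\cequ$, the process of the left premise of that $\Rwith$, and $\proctwo'_1$ the process of the premise of that $\Lwithone$. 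A single computational reduction step then discards the right premise of the $\Rwith$ and re-cuts its left premise against the continuation of the $\Lwithone$, yielding a $\ptone$ with $\Pcut{\ptone_1}{\cone}{\ptone_2}\cred\ptone$; applying the $\cut$ rule to the appropriate premises gives $\tyg{\conone_1,\conone_2}{\contwo}{\conthree_1,\conthree_2}{\ptone}{\cthree:\typthree}$, and on the process side $\widehat{\ptone}=\rest{\cone}{(\para{\procone'_1}{\proctwo'_1})}$, which settles both parts of the conclusion.

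In every other case at least one of the two last rules acts on a channel other than $\cone$, or is one of the rules that consumes no visible prefix, namely $\Lone$, $\Lbangb$, $\Lbangd$, $\cut$, $\cutb$ or $\cutd$. In each of these the strategy is the standard one: apply the appropriate proof-equivalence step — a commuting conversion, a structural conversion, or a strong bisimilarity (or, when a box has to be traversed, a shift reduction) — so as to permute the outer $\cut$ on $\cone$ past that last rule, until it sits over a strictly smaller subderivation that still offers $\typone\with\typtwo$ on $\cone$ (respectively still uses it) and still exhibits the relevant transition, the latter because the prefix introduced by the permuted rule does not mention $\cone$; then invoke the induction hypothesis on the smaller configuration and re-apply the permuted rule on the outside. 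The shape $\cpredequ\cred\cpredequ$ of the conclusion is exactly what accommodates these leading and trailing permutation steps, and since each of them induces a structural congruence or a strong bisimilarity on the underlying process, the requirement $\proctwo_2\cequ\rest{\cone}{(\para{\procone'_1}{\proctwo'_1})}$ is preserved throughout.

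The step I expect to be the main obstacle is the bookkeeping for the exponential cuts $\cutb$ and $\cutd$ interposed between the two halves of the $\with$-redex. Permuting the linear cut on $\cone$ past a $\cutd$ can \emph{duplicate} the server subderivation — this is one of the strong-bisimilarity equivalences, e.g. the one commuting $\cutd$ over $\cut$ — so one must check both that the two copies stay well typed (they do: a $\cutd$ feeds the multiplexor context, which is treated additively and hence remains available to both premises) and that the weaker process relation one then obtains, strong bisimilarity rather than plain structural congruence, still closes the goal. Apart from these diagrams, together with those involving the auxiliary exponential context, nothing new arises with respect to the $\tens$ case of Lemma~\ref{lemma:tens} or to the treatment of \pidill\ by Caires and Pfenning, so what remains is a routine, if tedious, check of the individual cases.
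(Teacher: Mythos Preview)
Your proposal is correct and follows exactly the approach the paper takes: the paper's proof simply refers back to Lemma~\ref{lemma:tens}, which in turn is a simultaneous induction on $\ptone_1,\ptone_2$ carried out just as in Caires and Pfenning's treatment of \pidill, with the principal case handled by the computational rule $(\cut/\Rwith/\Lwithone)$ and all remaining cases by the permutation machinery you describe. Your identification of the exponential-cut commutations as the only place requiring extra care beyond the \pidill\ argument is also in line with the paper's remarks.
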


\begin{proof}
See the proof of Lemma \ref{lemma:tens}.
\end{proof}

\begin{lemma}\label{lemma:plus}
 Assume 
\begin{varenumerate}
\item $\tyg{\conone_1}{\contwo}{\conthree_1}{\ptone_1 \rightsquigarrow \procone_1}{\cone : \typone \plus \typtwo}$ with $\procone_1 \xrightarrow{\overline{\inl{\cone}{}}} \procone'_1$
\item $\tyg{\conone_2}{\contwo}{\conthree_2,\cone : \typone \plus \typtwo}{\ptone_2 \rightsquigarrow \proctwo_1}{\cthree : \typthree}$ with $\proctwo_1 \xrightarrow{\inl{\cone}{}} \proctwo'_1$
\end{varenumerate}
Then
\begin{varenumerate}
 \item $\Pcut{\ptone_1}{\cone}{\ptone_2} \cpredequ\cred\cpredequ \ptone $ for some $\ptone$;
 \item $\tyg{\conone_1, \conone_2}{\contwo}{\conthree_1, \conthree_2}{\ptone \rightsquigarrow \proctwo_2}{\cthree : \typthree}$ for some $\proctwo_2 \cequ \rest{\cone}{\para{(\procone'_1}{\proctwo'_1)}}$. 
\end{varenumerate}

\end{lemma}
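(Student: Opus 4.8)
The plan is to replicate the proof scheme of Lemma~\ref{lemma:tens} and its siblings, reasoning by simultaneous induction on the structures of $\ptone_1$ and $\ptone_2$ and by case analysis on their last rules. First I would note that the two hypotheses already restrict these last rules drastically: inspecting Figure~\ref{fig:pidslltr} together with the labelled transition rules, a proof term whose conclusion offers $\cone:\typone\plus\typtwo$ and whose extracted process fires $\overline{\inl{\cone}{}}$ as its first action must end either with $\Rplusone$ or with one of the ``commuting'' rules $\Lone$, $\Lbangb$, $\Lbangd$, $\cut$, $\cutb$, $\cutd$; symmetrically, a proof term carrying $\cone:\typone\plus\typtwo$ in its linear context and whose process fires $\inl{\cone}{}$ must end with $\Lplus$ acting on $\cone$ itself or with one of those same six rules. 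Hence either (i) one of $\ptone_1,\ptone_2$ ends with a commuting rule, or (ii) $\ptone_1=\PRplusone{\ptthree}$ and $\ptone_2$ ends with $\Lplus$ on $\cone$.

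Case (ii) is the only genuine cut-elimination step, handled by the computational reduction $(\cut/\Rplusone/\Lplus)$. With $\tyg{\conone_1}{\contwo}{\conthree_1}{\ptthree}{\cone:\typone}$, and $\ptfour,\ptfive$ the $\typone$- and $\typtwo$-branches of the $\Lplus$ (so $\tyg{\conone_2}{\contwo}{\conthree_2,\cone:\typone}{\ptfour}{\cthree:\typthree}$), one gets $\Pcut{\ptone_1}{\cone}{\ptone_2}\cred\Pcut{\ptthree}{\cone}{\ptfour}$. The right-hand side already has the conclusion $\tyg{\conone_1,\conone_2}{\contwo}{\conthree_1,\conthree_2}{\Pcut{\ptthree}{\cone}{\ptfour}}{\cthree:\typthree}$ demanded by item~2, and it extracts to $\rest{\cone}{(\para{\widehat{\ptthree}}{\widehat{\ptfour}})}$; since $\procone_1=\inl{\cone}{\widehat{\ptthree}}\xrightarrow{\overline{\inl{\cone}{}}}\widehat{\ptthree}=\procone'_1$ and $\proctwo_1=\case{\cone}{\widehat{\ptfour}}{\widehat{\ptfive}}\xrightarrow{\inl{\cone}{}}\widehat{\ptfour}=\proctwo'_1$, this process is exactly $\rest{\cone}{(\para{\procone'_1}{\proctwo'_1})}$. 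So both items hold with the single step $\cred$ (which is an instance of $\cpredequ\cred\cpredequ$).

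In case (i) I would peel off the topmost commuting rule $R$ on the side where it occurs. For each such $R$ there is a matching $\cequ$-rule — a commuting conversion such as $(\cut/-/\Lone)$, $(\cut/\Lbangb/-)$ or $(\cut/-/\Lbangd)$, a structural conversion such as $(\cut/\cutb/-)$, $(\cut/\cutd/-)$ or the symmetric reading of $(\cut/-/\cut_1)$, or a strong bisimilarity for the remaining $\cutb$/$\cutd$ configurations — which pushes the linear $\cut$ on $\cone$ past $R$, so that it sits directly above a strictly smaller subderivation; that subderivation still performs the required visible action on $\cone$ (peeling past a left rule or a cut does not disturb it). The induction hypothesis then yields a reduct $\ptone'$ of the right conclusion with $\widehat{\ptone'}\cequ\rest{\cone}{(\para{\procone'_1}{\proctwo'_1})}$; re-applying $R$ on top of $\ptone'$ — using Lemmas~\ref{weaklemma} and~\ref{liftlemma} to realign the exponential contexts when $R\in\{\Lbangb,\Lbangd,\cutb,\cutd\}$ — gives a derivation with the conclusion required by item~2, and concatenating the initial $\cequ$ step with the $\cpredequ\cred\cpredequ$ sequence coming from the induction hypothesis (all these relations being closed under proof-term contexts) yields item~1. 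Shift reduction $\cpred$ plays no role here, since the cut-channel $\cone$ is never exponentially typed.

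The main obstacle is not any individual case but the uniformity of this bookkeeping, and precisely the point where \pidsll\ departs from the \pidill\ development of Caires and Pfenning~\cite{Caires10}: one has to check that every way a linear $\cut$ can be permuted past a left rule or another cut is covered by a rule with a compatible \emph{three-way} context shape (auxiliary context treated multiplicatively, multiplexor context shared additively, linear context split multiplicatively), and in particular that no duplicating permutation is triggered here — the duplicating $\cequ$-rules all have a $\cutd$ or $\cutb$, not a linear $\cut$, as their outer rule, so the induction measure is safe. Once this is discharged the remaining verification is routine and, exactly as for Lemmas~\ref{lemma:tens}--\ref{lemma:with}, follows the \pidill\ argument essentially verbatim.
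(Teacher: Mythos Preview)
Your proposal is correct and follows exactly the approach the paper indicates: simultaneous induction on $\ptone_1,\ptone_2$ with the principal case handled by the computational reduction $(\cut/\Rplusone/\Lplus)$ and the remaining cases by commuting the cut via the $\cequ$-rules, all in the style of Caires and Pfenning~\cite{Caires10}. The paper's own proof of Lemma~\ref{lemma:plus} simply defers to Lemma~\ref{lemma:tens}, which in turn defers to~\cite{Caires10}, so you have in fact supplied more detail than the paper does.
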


\begin{proof}
See the proof of Lemma \ref{lemma:tens}.
\end{proof}

\begin{lemma}\label{lemma:cutbang}
 Assume 
\begin{varenumerate}
\item $\tyg{\conone_1}{\emcon}{\emcon}{\ptone_1 \rightsquigarrow \procone_1}{\cone : \typone}$
\item $\tyg{\conone_2, \cone : \typone}{\contwo}{\conthree}{\ptone_2 \rightsquigarrow \proctwo_1}{\cthree : \typthree}$ with $\proctwo_1 \xrightarrow{\overline{\rest{\ctwo}{\outsc{\cone}{\ctwo}}}} \proctwo'_1$
\end{varenumerate}
Then
\begin{varenumerate}
 \item $\Pcutb{\ptone_1}{\cone}{\ptone_2} \cpredequ\cred\cpredequ \Pcutd{\ptone_1}{\cone}{\ptone} $ for some $\ptone$ where $\cone \notin FV(\widehat{\ptone})$;
 \item $\tyg{\conone}{\confour}{\conthree}{\ptone \rightsquigarrow \proctwo_2}{\cthree : \typthree}$ for some $\proctwo_2 \cequ \rest{\ctwo}{\para{(\procone_1}{\proctwo'_1)}}$, where $\conone, \confour =\conone_1, \conone_2, \cone : \typone, \contwo$.
\end{varenumerate}

\end{lemma}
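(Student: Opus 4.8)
The plan is to argue by induction on the structure of $\ptone_2$, following the method of Lemma~\ref{lemma:tens}; the one genuinely new phenomenon here is that firing the \emph{exponential} cut $\cutb$ turns it into a $\cutd$, reflecting the fact that opening a box moves its interface from the auxiliary zone to the multiplexor zone. First I would locate, inside $\ptone_2$, the rule responsible for the action $\overline{\rest{\ctwo}{\outsc{\cone}{\ctwo}}}$. Since $\cone$ occurs in the auxiliary context of $\ptone_2$, an inspection of Figure~\ref{fig:pidslltr} shows that this action can only be produced by a $\bemb$ applied to $\cone$ itself: the rule $\Rbang$ is excluded since its conclusion has empty auxiliary context, and no other rule uses a name of the auxiliary context as the subject of a top-level prefix. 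Either that $\bemb$ is the last rule of $\ptone_2$ (the principal case), or it is buried under some other rule, which is then either a rule acting on a channel other than $\cone$ or a cut whose non-server premise carries $\cone$ (the commutation cases).

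\emph{Principal case.} Here $\ptone_2=\Pbemb{\cone}{\ctwo}{\ptthree}$, so that $\widehat{\ptone_2}=\rest{\ctwo}{\outc{\cone}{\ctwo}{\widehat{\ptthree}}}$, $\proctwo'_1=\widehat{\ptthree}$ and, inverting $\bemb$, $\tyg{\conone_2}{\contwo}{\conthree,\ctwo:\typone}{\ptthree}{\cthree:\typthree}$. I would then apply the computational reduction that fires a $\cutb$ against a $\bemb$ on $\cone$, namely
$$\Pcutb{\ptone_1}{\cone}{\ptone_2}\;\cred\;\Pcut{\lift{(\ptone_1)}}{\ctwo}{\Pcutd{\ptone_1}{\cone}{\lift{\ptthree}}}.$$
Because the $\bemb$ rule has consumed the unique occurrence of $\cone$ in $\widehat{\ptone_2}$, we have $\cone\notin\fn{\widehat{\ptthree}}$, and moreover $\cone\notin\fn{\procone_1}$ since $\ptone_1$ is typed in the context $\conone_1$, which does not mention $\cone$. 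Hence, using repeatedly the proof equivalences that erase a vacuous $\cutd$ and that distribute $\cutd$ over $\cut$, the right-hand side is $\cequ$ to $\Pcutd{\ptone_1}{\cone}{\Pcut{\lift{(\ptone_1)}}{\ctwo}{\lift{\ptthree}}}$. Taking $\ptone:=\Pcut{\lift{(\ptone_1)}}{\ctwo}{\lift{\ptthree}}$ yields the reduct $\Pcutd{\ptone_1}{\cone}{\ptone}$ required by item~1, with $\cone\notin\fn{\widehat{\ptone}}$. For item~2, the Lifting lemma (Lemma~\ref{liftlemma}) gives $\widehat{\ptone}=\rest{\ctwo}{(\para{\procone_1}{\proctwo'_1})}=:\proctwo_2$, while typability of $\ptone$ in the three required zones follows by first weakening (Lemma~\ref{weaklemma}) the multiplexor zones of $\lift{(\ptone_1)}$ and $\lift{\ptthree}$ so that they coincide and contain $\cone:\typone$, and then applying $\cut$ on $\ctwo$.

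\emph{Commutation cases.} There is, for each such last rule, a structural conversion or a commuting conversion that permutes the $\cutb$ on $\cone$ past it, down into the subderivation that contains $\cone$. I would apply it, invoke the induction hypothesis on that smaller subderivation (obtaining a reduction to some $\Pcutd{\ptone_1}{\cone}{\ptfour}$ with $\cone\notin\fn{\widehat{\ptfour}}$, together with the typing of $\ptfour$ and the shape of $\widehat{\ptfour}$), and then reassemble with the original rule, using the distributivity and garbage-collection laws for $\cutd$ to float the residual $\cutd$ back to the outermost position. The process-level side condition $\widehat{\ptone}\cequ\rest{\ctwo}{(\para{\procone_1}{\proctwo'_1})}$ is preserved throughout, since these conversions correspond at the process level to structural congruence or strong bisimilarity and, because $\cone\notin\fn{\procone_1}$, carrying the restricted and henceforth unused replicated server $\binc{\cone}{\ctwo}{\procone_1}$ along a derivation changes the underlying process only up to bisimilarity.

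The step I expect to be the main obstacle is precisely this context bookkeeping in the commutation cases: a $\cutb$ on an auxiliary name cannot simply be slid past an arbitrary rule, so one must check, rule by rule, that the required permutation is available and that, once the induction hypothesis has fired, the resulting $\cutd$ can still be pushed back outside; this is most delicate for $\bemd$, $\Lbangd$ and the binary multiplicatives, where the splitting of the auxiliary context and the lifting from the auxiliary zone to the multiplexor zone interact. Apart from this, the argument is the same as for Lemmas~\ref{lemma:tens}--\ref{lemma:plus}, and ultimately the one of~\cite{Caires10}; the only essentially new ingredient is the $\cutb\rightsquigarrow\cutd$ transition carried out in the principal case.
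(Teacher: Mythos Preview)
Your proposal is correct and follows essentially the same approach as the paper's proof: induction on $\ptone_2$, with the principal case $\ptone_2=\Pbemb{\cone}{\ctwo}{\ptthree}$ handled by firing the computational rule $(\cutb/-/\bemb)$ and then commuting the resulting $\cutd$ outward via $(\cut/-/\cutd)$, and the remaining cases handled by permuting the $\cutb$ inward, applying the induction hypothesis, and floating the produced $\cutd$ back out. One small simplification: in the principal case the paper gets by with a single application of $(\cut/-/\cutd)$ and does not need the vacuous-$\cutd$ elimination rule you invoke; since $\cone$ already sits in the multiplexor context of $\lift{\ptthree}$ after lifting, the equivalence $\Pcut{\lift{\ptone_1}}{\ctwo}{\Pcutd{\ptone_1}{\cone}{\lift{\ptthree}}}\cequ\Pcutd{\ptone_1}{\cone}{\Pcut{\lift{\ptone_1}}{\ctwo}{\lift{\ptthree}}}$ is directly an instance of that rule.
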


 \begin{proof}
By  induction on $ \ptone_2$. We have different cases, depending from the last rules of $\ptone_2$. Let us just write down some relevant case:

\begin{varitemize}
\item 
  Suppose $\ptone_2 = \Pbemb{\cone}{\ctwo}{\ptone'_2}$; then  $\proctwo_1 \cequ \rest{\ctwo}{\outc{\cone}{\ctwo}{\proctwo'_1}}$ and 
  $\tyg{ \conone_2, \cone : \typone}{\contwo}{\conthree}{\ptone'_2 \rightsquigarrow \proctwo'_1}{\cthree : \typthree}$ by inversion. Now 
  $ \Pcutb{\ptone_1}{\cone}{\Pbemb{\cone}{\ctwo}{\ptone'_2}} \cred \Pcut{\ptone_{1 \Downarrow }}{\ctwo}{\Pcutd{\ptone_1}{\cone}{\ptone'_{2 \Downarrow}}}$ 
  by ($\cutb/- /\bemb$) $\cequ \Pcutd{\ptone_1}{\cone}{\Pcut{\ptone_{1 \Downarrow}}{\ctwo}{\ptone'_{2 \Downarrow}}}$ by ($\cut/- /\cutd$). We pick 
  $D = \Pcut{\ptone_{1 \Downarrow}}{\ctwo}{\ptone'_{2 \Downarrow}}$; then $\tyg{\conone}{\confour}{\conthree}{\ptone \rightsquigarrow \proctwo_2}{\cthree : \typthree}$ 
  for some $\proctwo_2 \cequ \rest{\ctwo}{\para{(\procone_1}{\proctwo'_1)}}$,  where $\conone, \confour = \conone_1, \conone_2, \cone : \typone, \contwo$.
\item 
  Suppose $\ptone_2 = \Pcutd{\ptone'_1}{\ctwo}{\ptone'_2}$; then $\tyg{\contwo}{\emcon}{\emcon}{\ptone'_1 \rightsquigarrow \procthree_1}{\cfour : \typthree}$ 
  and $\tyg{ \conone_2, \cone : \typone}{\contwo}{\conthree}{\ptone'_2 \rightsquigarrow \procthree'_2}{\cthree : \typtwo}$ with 
  $\proctwo_1 \xrightarrow{\overline{\rest{\ctwo}{\outsc{\cone}{\ctwo}}}} \para{\procthree_1}{\procthree'_2}$,  by inversion.
  Now by induction hypothesis, $\Pcutb{\ptone_1}{\cone}{\ptone_2'} \cpredequ\cred\cpredequ \Pcutd{\ptone_1}{\cone}{\ptone^{*}} $ for some 
  $\ptone^{*}$ (where $\cone \notin FV(\widehat{\ptone^{*}})$, and $\tyg{\conone}{\confour}{\conthree_2}{\ptone^{*} \rightsquigarrow \procfour}{\cthree : \typtwo}$ 
  for some $\procfour = \rest{\ctwo}{\para{(\procone_1}{\procthree'_2)}}$. $\Pcutb{\ptone_1}{\cone}{\Pcutd{\ptone'_1}{\ctwo}{\ptone'_2}} \cequ 
  \Pcutd{\ptone'_1}{\ctwo}{\Pcutb{\ptone_1}{\cone}{\ptone'_2}}$ by $(\cutb/-/\cutd)$, $\cpredequ\cred\cpredequ \Pcutd{\ptone'_1}{\ctwo}{\Pcutd{\ptone_1}{\cone}{\ptone^{*}}}$  
  by congruence, $\cequ \Pcutd{\ptone_1}{\cone}{\Pcutd{\ptone'_1}{\ctwo}{\ptone^{*}}}$ by $(\cutd/-/\cutd)_0$. Pick $D = \Pcutd{\ptone'_1}{\ctwo}{\ptone^{*}}$. 
  Then $\proctwo_2=  \rest{\ctwo}{\para{\procthree_1}{\procfour}}$ by cut, and $\tyg{\conone}{\confour}{\conthree}{\ptone \rightsquigarrow \proctwo_2}{\cthree : \typthree}$ 
  for some $\proctwo_2 \cequ \rest{\ctwo}{\para{(\procone_1}{\proctwo'_1)}}$. 
\end{varitemize}
This concludes the proof.
\end{proof}

\begin{corollary}\label{cor:cutb}
 Assume 
\begin{varenumerate}
\item $\tyg{\conone_1}{\emcon}{\emcon}{\ptone_1 \rightsquigarrow \procone_1}{\cone : \typone}$
\item $\tyg{\conone_2, \cone : \typone}{\contwo}{\conthree}{\ptone_2 \rightsquigarrow \proctwo_1}{\cthree : \typthree}$ with $\proctwo_1 \xrightarrow{\overline{\rest{\ctwo}{\outsc{\cone}{\ctwo}}}} \proctwo'_1$
\end{varenumerate}
Then
\begin{varenumerate}
 \item $\Pcutb{\ptone_1}{\cone}{\ptone_2} \cpredequ\cred\cpredequ \ptone $ for some $\ptone$;
 \item $\tyg{\conone}{\confour}{\conthree}{\ptone \rightsquigarrow \proctwo_2}{\cthree : \typthree}$ for some $\proctwo_2 \cequ \rest{\cone}{(\binc{\cone}{\ctwo}{\para{\procone_1}{\rest{\ctwo}{\para{(\procone_1}{\proctwo'_1))}}}}}$,  where $\conone, \confour =\conone_1, \conone_2, \contwo$ 
\end{varenumerate}

\end{corollary}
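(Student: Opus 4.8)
The plan is to read this off almost directly from Lemma~\ref{lemma:cutbang}, whose two assumptions are exactly
the two assumptions of the present statement; the difference is only presentational, in that Lemma~\ref{lemma:cutbang}
exhibits the residual already in the shape $\Pcutd{\ptone_1}{\cone}{\ptfive}$ and records information about the
subterm $\ptfive$ nested inside that outermost $\cutd$, whereas here we want a statement phrased about the whole
residual. So the first step is to apply Lemma~\ref{lemma:cutbang} to the two hypotheses, obtaining a proof term
$\ptfive$ such that $\Pcutb{\ptone_1}{\cone}{\ptone_2}\cpredequ\cred\cpredequ\Pcutd{\ptone_1}{\cone}{\ptfive}$, such
that $\cone\notin FV(\widehat{\ptfive})$, such that $\widehat{\ptfive}\cequ\rest{\ctwo}{(\para{\procone_1}{\proctwo'_1})}$,
and such that $\ptfive$ is typable with linear context $\conthree$, conclusion $\cthree:\typthree$, and two
exponential contexts that together partition $\conone_1,\conone_2,\cone:\typone,\contwo$. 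I would then take the proof
term $\ptone$ required by the corollary to be exactly this $\Pcutd{\ptone_1}{\cone}{\ptfive}$; with that choice, item~1
of the corollary is literally item~1 of Lemma~\ref{lemma:cutbang}.

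For item~2, the derivation of $\ptone=\Pcutd{\ptone_1}{\cone}{\ptfive}$ has $\cutd$ as its last rule, applied with
hypothesis~1 as its left premise and the derivation of $\ptfive$ just obtained as its right premise. Reading off its
conclusion: the linear context stays $\conthree$, the conclusion stays $\cthree:\typthree$, and the two exponential
contexts together partition $\conone_1,\conone_2,\contwo$ (the assumption $\cone:\typone$ having been consumed by the
cut), which is exactly what the corollary asks for. For the process, the process-extraction clauses give
$\widehat{\ptone}=\rest{\cone}{(\para{\binc{\cone}{\ctwo}{\procone_1}}{\widehat{\ptfive}})}$ (up to the choice of the
bound name $\ctwo$), and since $\cequ$ is a congruence and $\widehat{\ptfive}\cequ\rest{\ctwo}{(\para{\procone_1}{\proctwo'_1})}$,
this is equivalent to $\rest{\cone}{(\para{\binc{\cone}{\ctwo}{\procone_1}}{\rest{\ctwo}{(\para{\procone_1}{\proctwo'_1})}})}$,
the process $\proctwo_2$ displayed in the statement.

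The one step I expect to be the real obstacle is the context bookkeeping that makes this outer $\cutd$ well-formed:
one has to check that $\cone:\typone$ and the pieces of $\conone_1,\conone_2,\contwo$ are distributed among the two
exponential contexts of $\ptfive$ in the way the $\cutd$ rule demands, since its left premise $\ptone_1$ forces the
conclusion's multiplexor context, hence $\ptfive$'s multiplexor context minus $\cone:\typone$, to be $\conone_1$.
Two facts should resolve this. First, $\cone:\typone$ must sit in the multiplexor context of $\ptfive$ rather than its
auxiliary context, because $\cone\notin FV(\widehat{\ptfive})$ whereas every name of an auxiliary context occurs free
in the underlying process. Second, the Weakening lemma (Lemma~\ref{weaklemma}) lets one enlarge the multiplexor
context of $\ptfive$ as needed, and in any case the sharper split is recoverable from the proof of
Lemma~\ref{lemma:cutbang}. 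Since the corollary, like the lemma, pins its exponential contexts down only up to their
union, this adjustment is harmless, and item~2 follows.
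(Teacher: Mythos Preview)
Your proposal is correct and follows exactly the paper's approach, which is the one-line ``Follows from Lemma~\ref{lemma:cutbang}.'' You simply make explicit what that one line leaves implicit: take the $\ptone$ of the corollary to be the full term $\Pcutd{\ptone_1}{\cone}{\ptfive}$ produced by the lemma, read off the process via the extraction clause for $\cutd$, and use congruence of $\cequ$ together with the lemma's description of $\widehat{\ptfive}$.

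One remark: your final paragraph on context bookkeeping is more cautious than necessary. Item~1 of Lemma~\ref{lemma:cutbang} already asserts that $\Pcutd{\ptone_1}{\cone}{\ptfive}$ is the target of $\cpredequ\cred\cpredequ$ starting from a well-typed term, and these relations preserve typability (Propositions~\ref{prop:creddecr}--\ref{prop:cequun}); hence $\Pcutd{\ptone_1}{\cone}{\ptfive}$ is itself a well-typed derivation, and the $\cutd$ instance at its root is automatically well-formed. You therefore need not argue separately that $\cone$ lands in the multiplexor context of $\ptfive$ or invoke weakening; the split of $\conone_1,\conone_2,\contwo$ between the two exponential contexts is already fixed by the shape of that derivation, and since the corollary only constrains their union, you are done.
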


\begin{proof}
 Follows from Lemma \ref{lemma:cutbang}.
\end{proof}

\begin{lemma}\label{lemma:cutd}
 Assume 
\begin{varenumerate}
\item $\tyg{\contwo}{\emcon}{\emcon}{\ptone_1 \rightsquigarrow \procone_1}{\cone : \typone}$
\item $\tyg{\conone}{\contwo, \cone : \typone}{\conthree}{\ptone_2 \rightsquigarrow \proctwo_1}{\cthree : \typthree}$ with $\proctwo_1 \xrightarrow{\overline{\rest{\ctwo}{\outsc{\cone}{\ctwo}}}} \proctwo'_1$
\end{varenumerate}
Then :

\begin{varenumerate}
 \item $\Pcutd{\ptone_1}{\cone}{\ptone_2} \cpredequ\cred\cpredequ \Pcutd{\ptone_1}{\cone}{\ptone} $ for some $\ptone$;
 \item $\tyg{\confour}{\confive, \cone : \typone}{\conthree}{\ptone \rightsquigarrow \proctwo_2}{\cthree : \typthree}$ for some $\proctwo_2 \cequ \rest{\cone}{\rest{\ctwo}{\para{(\procone_1}{\proctwo'_1)}}}$, where $\confour, \confive =\conone, \contwo$. 
\end{varenumerate}

\end{lemma}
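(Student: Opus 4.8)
The plan is to prove the lemma in exactly the same style as Lemma~\ref{lemma:cutbang}, by induction on the structure of the proof term $\ptone_2$ with a case analysis on its last rule. The first step is to cut down the possible last rules: since $\proctwo_1 = \widehat{\ptone_2}$ must be able to perform the bound output $\overline{\rest{\ctwo}{\outsc{\cone}{\ctwo}}}$ \emph{on} $\cone$, and $\cone : \typone$ occurs in the \emph{multiplexor} context of $\ptone_2$ (not in the linear context, and it is not the subject of the conclusion), a direct inspection of the rules in Figure~\ref{fig:pidslltr} shows that the last rule of $\ptone_2$ can only be: the rule $\bemd$ fired precisely on $\cone$ (the principal case); one of the structural rules $\Lone$, $\Lbangb$, $\Lbangd$, which leave the underlying process unchanged; or one of the three cut rules $\cut$, $\cutb$, $\cutd$. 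In all the remaining cases the process $\widehat{\ptone_2}$ could only offer a visible action different from an output on $\cone$, so those cases are vacuous.

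For the principal case $\ptone_2 = \Pbemd{\cone}{\ctwo}{\ptone'_2}$, inversion gives $\tyg{\conone}{\contwo,\cone:\typone}{\conthree,\ctwo:\typone}{\ptone'_2}{\cthree:\typthree}$, $\proctwo_1 \cequ \rest{\ctwo}{\outc{\cone}{\ctwo}{\widehat{\ptone'_2}}}$, and $\proctwo'_1 = \widehat{\ptone'_2}$. I would then fire the computational reduction $(\cutd/-/\bemd)$, giving $\Pcutd{\ptone_1}{\cone}{\ptone_2} \cred \Pcut{\lift{\ptone_1}}{\ctwo}{\Pcutd{\ptone_1}{\cone}{\ptone'_2}}$, and apply the structural conversion $(\cut/-/\cutd)$ to reinstate the outer $\cutd$, obtaining $\Pcutd{\ptone_1}{\cone}{\ptone}$ with $\ptone = \Pcut{\lift{\ptone_1}}{\ctwo}{\ptone'_2}$. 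Typing of $\ptone$ follows from the Lifting Lemma~\ref{liftlemma} (which also gives $\widehat{\lift{\ptone_1}} = \widehat{\ptone_1} = \procone_1$) and the Weakening Lemma~\ref{weaklemma}, used to reinsert $\cone:\typone$ into the multiplexor context so that $\confour,\confive = \conone,\contwo$; on the process side $\widehat{\ptone} = \rest{\ctwo}{(\para{\procone_1}{\proctwo'_1})}$, which is $\cequ$-equivalent (in fact structurally congruent, up to scope extrusion and the fact that this copy of $\procone_1$ does not mention $\cone$ freely) to the process displayed in the statement.

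For the inductive cases: when the last rule of $\ptone_2$ is $\Lone$, $\Lbangb$ or $\Lbangd$, the underlying process is untouched, so the output on $\cone$ is still the one performed by the process of the premise; I apply the induction hypothesis to that premise and restore the last rule with the matching commuting conversion $(\cutd/-/\Lone)$, $(\cutd/-/\Lbangb)$, $(\cutd/-/\Lbangd)$, none of which alters the process. When the last rule is a cut, $\widehat{\ptone_2}$ is a parallel composition and the output on $\cone$ is performed by exactly one of the two components; I push the $\cutd$ inside with the appropriate structural conversion or strong bisimilarity, apply the induction hypothesis to that component, and reassemble. The $\cut$ subcase is the representative one: there $\cone:\typone$ lies in the additively-treated multiplexor context of \emph{both} premises, so pushing $\cutd$ inward duplicates $\ptone_1$ through the strong bisimilarity $(\cutd/-/\cut)$ --- exactly the kind of permutation of a $\cutd$ over a $\cut$ by duplication discussed in the examples earlier --- and after the induction hypothesis the two copies are merged again by reading that same equivalence backwards, recovering a single $\Pcutd{\ptone_1}{\cone}{\ptone}$. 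The $\cutb$ and $\cutd$ subcases are treated symmetrically with $(\cutd/-/\cutb)$, $(\cutd/-/\cutd)$, $(\cutd/-/\cutd)_0$ and, to discard a spurious $\cutd$, $(\cutd/-/-_0)$.

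The main obstacle I expect is not conceptual but bureaucratic: keeping track at every step of which of the three contexts $\cone$ currently inhabits; checking that the side conditions of the $\cutd$-commutations and strong bisimilarities hold, in particular that the duplication of $\ptone_1$ in the $\cut$ and $\cutd$ subcases is legitimate (which rests on $\cone$ being treated additively in the multiplexor context); and discharging the process identities only modulo structural congruence and strong bisimilarity rather than literally. Since the argument is structurally identical to that of Lemma~\ref{lemma:cutbang}, which in turn follows Caires and Pfenning, I would write out only the principal case and one cut subcase in detail and dismiss the rest as routine.
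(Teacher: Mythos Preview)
Your proposal is correct and follows essentially the same approach as the paper: induction on the structure of $\ptone_2$ with a case analysis on its last rule, using the computational reduction $(\cutd/-/\bemd)$ for the principal case and the $(\cutd/-/\cdot)$ structural conversions and strong bisimilarities to commute the outer $\cutd$ past the last rule in the inductive cases. The paper in fact writes out only the $\cut$ and $\cutd$ subcases (exactly as you describe them, duplicating $\ptone_1$ via $(\cutd/-/\cut)$ and $(\cutd/-/\cutd)$, applying the induction hypothesis, and merging back), so your plan to detail the principal case plus one cut subcase and dismiss the rest as routine matches the paper's presentation almost verbatim.
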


\begin{proof}

By  induction on $ \ptone_2$. We have different cases, depending from the last rules of $\ptone_2$. Let us just write down some relevant cases:
\begin{varitemize}
\item 
  $\ptone_2 = \Pcut{\ptone'_1}{\ctwo}{\ptone'_2}$. Assume $\conone = \conone_1, \conone_2$ and $\conthree = \conthree_1, \conthree_2$. 
  Now $\tyg{\conone_1}{\contwo, \cone : \typone}{\conthree_1}{\ptone'_1 \rightsquigarrow \procthree_1}{\cfour : \typtwo}$ and 
  $\tyg{ \conone_2}{\contwo, \cone : \typone}{\conthree, \cfour : \typtwo}{\ptone'_2 \rightsquigarrow \procthree_2}{\cthree : \typthree}$ 
  by inversion. We have two cases:either $\proctwo_1 \xrightarrow{\overline{\rest{\ctwo}{\outsc{\cone}{\ctwo}}}} \para{\procthree_1}{\procthree'_2}$, 
  or $\proctwo_1 \xrightarrow{\overline{\rest{\ctwo}{\outsc{\cone}{\ctwo}}}} \para{\procthree'_1}{\procthree_2}$.
  First case:
  $\Pcutd{\ptone_1}{\cone}{\ptone'_1} \cpredequ\cred\cpredequ \Pcutd{\ptone_1}{\cone}{\ptone^{*}} $ for some $\ptone^{*}$; then      
  $\tyg{\conone_1}{\contwo, \cone : \typone}{\conthree_1}{\ptone^{*} \rightsquigarrow \procfour}{\cfour : \typtwo}$ for some 
  $\procfour = \rest{\ctwo}{\para{(\procone_1}{\procthree'_1)}}$ by induction hypothesis;
  $\Pcutd{\ptone_1}{\cone}{\Pcut{\ptone'_1}{\ctwo}{\ptone'_2}} \cequ 
  \Pcut
  {\Pcutd{\ptone_1}{\cone}{\ptone'_1}}
  {\ctwo}
  {\Pcutd{\ptone_1}{\cone}{\ptone'_2}}$ by $(\cutd/-/\cut)$, $\cpredequ\cred\cpredequ 
  \Pcut
  {\Pcutd{\ptone_1}{\cone}{\ptone^{*}}}
  {\ctwo}
  {\Pcutd{\ptone_1}{\cone}{\ptone'_2}}$  by congruence 
  $\cequ \Pcutd{\ptone_1}{\cone}{\Pcut{\ptone^{*}}{\ctwo}{\ptone'_2}}$ by $(\cutd/-/\cut)$. 
  Pick $D = \Pcut{\ptone^{*}}{\ctwo}{\ptone'_2}$; then $\proctwo_2=  \rest{\ctwo}{\para{\procfour}{\procthree_2}}$ by cut.
  Then $\tyg{\conone}{\contwo, \cone : \typone}{\conthree}{\ptone \rightsquigarrow \proctwo_2}{\cthree : \typthree}$ for some $\proctwo_2 \cequ \rest{\ctwo}{\para{(\procone_1}{\proctwo'_1)}}$. 
  Second case:
  $\Pcutd{\ptone_1}{\cone}{\ptone'_2} \cpredequ\cred\cpredequ \Pcutd{\ptone_1}{\cone}{\ptone^{*}} $ for some $\ptone^{*}$;
  then $\tyg{\conone_2}{\contwo, \cone : \typone}{\conthree_2}{\ptone^{*} \rightsquigarrow \procfour}{\cfour : \typtwo}$ for some $\procfour = \rest{\ctwo}{\para{(\procone_1}{\procthree'_2)}}$ 
  by induction hypothesis;
  $\Pcutd{\ptone_1}{\cone}{\Pcut{\ptone'_1}{\ctwo}{\ptone'_2}} \cequ 
  \Pcut
  {\Pcutd{\ptone_1}{\cone}{\ptone'_1}}
  {\ctwo}
  {\Pcutd{\ptone_1}{\cone}{\ptone'_2}}$ by $(\cutd/-/\cut)$, $\cpredequ\cred\cpredequ 
  \Pcut
  {\Pcutd{\ptone_1}{\cone}{\ptone'_1}}
  {\ctwo}
  {\Pcutd{\ptone_1}{\cone}{\ptone^{*}}}$  by congruence, $\cequ \Pcutd{\ptone_1}{\cone}{\Pcutd{\ptone'_1}{\ctwo}{\ptone^{*}}}$ by $(\cutd/-/\cut).$
  Pick $D = \Pcutd{\ptone'_1}{\ctwo}{\ptone^{*}}$; then $\proctwo_2=  \rest{\ctwo}{\para{\procthree_1}{\procfour}}$ by cut.
  Then $\tyg{\conone}{\contwo, \cone : \typone}{\conthree}{\ptone \rightsquigarrow \proctwo_2}{\cthree : \typthree}$ for some $\proctwo_2 \cequ \rest{\ctwo}{\para{(\procone_1}{\proctwo'_1)}}$. 
\item 
  $\ptone_2 = \Pcutd{\ptone'_1}{\ctwo}{\ptone'_2}$. $\tyg{\contwo}{\emcon}{\emcon}{\ptone'_1 \rightsquigarrow \procthree_1}{\cfour : \typtwo}$ 
  $\tyg{ \conone}{\contwo, \cone : \typone, \cfour : \typtwo}{\conthree}{\ptone'_2 \rightsquigarrow \procthree_2}{\cthree : \typthree}$ by inversion.
  Now $\proctwo_1 \xrightarrow{\overline{\rest{\ctwo}{\outsc{\cone}{\ctwo}}}} \para{\procthree_1}{\procthree'_2}$; 
  $\Pcutd{\ptone_1}{\cone}{\ptone'_2} \cpredequ\cred\cpredequ \Pcutd{\ptone_1}{\cone}{\ptone^{*}} $ for some $\ptone^{*}$ and 
  $\tyg{\conone}{\contwo, \cone : \typone, \cfour : \typtwo}{\conthree}{\ptone^{*} \rightsquigarrow \procfour}{\cfour : \typtwo}$ for some 
  $\procfour = \rest{\ctwo}{\para{(\procone_1}{\procthree'_2)}}$ by induction hypothesis. $\Pcutd{\ptone_1}{\cone}{\Pcutd{\ptone'_1}{\ctwo}{\ptone'_2}} \cequ 
  \Pcutd
  {\ptone_1}
  {\cone}
  {\Pcutd{\ptone'_1}{\ctwo}
    {\Pcutd{\ptone_1}{\cone}{\ptone'_2}}}$ by $(\cutd/- /\cutd)$ $\cpredequ\cred\cpredequ 
  \Pcutd
  {\ptone_1}
  {\cone}
  {\Pcutd{\ptone'_1}{\ctwo}
    {\Pcutd{\ptone_1}{\cone}{\ptone^{*}}}}$  by congruence, $\cequ \Pcutd{\ptone_1}{\cone}{\Pcutd{\ptone'_1}{\ctwo}{\ptone^{*}}}$ by $(\cutd/- /\cutd).$ 
  Pick $D = \Pcutd{\ptone'_1}{\ctwo}{\ptone^{*}}$; then  $\proctwo_2=  \rest{\ctwo}{\para{\procthree_1}{\procfour}}$ by cut.
  Then $\tyg{\conone}{\contwo, \cone : \typone}{\conthree}{\ptone \rightsquigarrow \proctwo_2}{\cthree : \typthree}$ for some $\proctwo_2 \cequ \rest{\ctwo}{\para{(\procone_1}{\proctwo'_1)}}$. 
\end{varitemize}
This concludes the proof.
\end{proof}

\begin{corollary}\label{cor:cutd}
 Assume 
\begin{varenumerate}
\item $\tyg{\contwo}{\emcon}{\emcon}{\ptone_1 \rightsquigarrow \procone_1}{\cone : \typone}$
\item $\tyg{\conone}{\cone : \typone, \contwo}{\conthree}{\ptone_2 \rightsquigarrow \proctwo_1}{\cthree : \typthree}$ with $\proctwo_1 \xrightarrow{\overline{\rest{\ctwo}{\outsc{\cone}{\ctwo}}}} \proctwo'_1$
\end{varenumerate}
Then
\begin{varenumerate}
 \item $\Pcutd{\ptone_1}{\cone}{\ptone_2} \cpredequ\cred\cpredequ \ptone $ for some $\ptone$;
 \item $\tyg{\confour}{\confive}{\conthree}{\ptone \rightsquigarrow \proctwo_2}{\cthree : \typthree}$ for some $\proctwo_2 \cequ \rest{\cone}{(\binc{\cone}{\ctwo}{\para{\procone_1}{\rest{\ctwo}{\para{(\procone_1}{\proctwo'_1))}}}}}$, where $\confour, \confive =\conone, \contwo$.
\end{varenumerate}
\end{corollary}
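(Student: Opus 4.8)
The plan is to obtain Corollary~\ref{cor:cutd} directly from Lemma~\ref{lemma:cutd}, the only additional work being a single re-application of the $\cutd$ rule, which closes back the exponential cut that the lemma leaves standing, together with the attendant reconciliation of the exponential contexts. This mirrors the way Corollary~\ref{cor:cutb} is derived from Lemma~\ref{lemma:cutbang}.

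First I would apply Lemma~\ref{lemma:cutd} to the two hypotheses. This yields a proof term $\pttwo$ such that $\Pcutd{\ptone_1}{\cone}{\ptone_2}\cpredequ\cred\cpredequ\Pcutd{\ptone_1}{\cone}{\pttwo}$, together with a derivation $\tyg{\confour}{\confive,\cone:\typone}{\conthree}{\pttwo}{\cthree:\typthree}$ in which $\confour,\confive=\conone,\contwo$, and with $\widehat{\pttwo}\cequ\rest{\cone}{\rest{\ctwo}{(\para{\procone_1}{\proctwo'_1})}}$ (up to the usual $\alpha$-renaming of restrictions). I then take the witness required by the corollary to be $\ptone:=\Pcutd{\ptone_1}{\cone}{\pttwo}$. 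Part~1 of the corollary is then literally the first conclusion of Lemma~\ref{lemma:cutd}, since $\Pcutd{\ptone_1}{\cone}{\ptone_2}\cpredequ\cred\cpredequ\ptone$.

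For Part~2 I observe that $\ptone=\Pcutd{\ptone_1}{\cone}{\pttwo}$ is precisely the conclusion of one instance of the $\cutd$ rule (Figure~\ref{fig:pidslltr}) whose left premise is the derivation of $\tyg{\contwo}{\emcon}{\emcon}{\ptone_1}{\cone:\typone}$ given by hypothesis~1 and whose right premise is the derivation of $\tyg{\confour}{\confive,\cone:\typone}{\conthree}{\pttwo}{\cthree:\typthree}$ delivered by the lemma; the conclusion of that rule instance is $\tyg{\confour}{\confive}{\conthree}{\ptone}{\cthree:\typthree}$, which, since $\confour,\confive=\conone,\contwo$, is exactly the claimed judgment. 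Reading the process off $\ptone$ via the translation $\widehat{\,\cdot\,}$ of proof terms into processes (the $\cutd$ clause of Figure~\ref{fig:pidslltr}) gives $\rest{\cone}{(\para{\binc{\cone}{\ctwo}{\widehat{\ptone_1}}}{\widehat{\pttwo}})}$; substituting $\widehat{\ptone_1}=\procone_1$ and the congruence for $\widehat{\pttwo}$ found above (and $\alpha$-renaming the inner restriction on $\cone$) yields the $\proctwo_2$ of the stated form, up to $\scon$.

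The step I expect to be delicate is the context matching inside this final $\cutd$ instance. The $\cutd$ rule is rigid: it forces the multiplexor context of its right premise to be $\contwo,\cone:\typone$ and the auxiliary context of its left premise to be that same $\contwo$, whereas Lemma~\ref{lemma:cutd} only guarantees $\confour,\confive=\conone,\contwo$ as a disjoint union and does not pin down how the split is distributed. To handle this one either invokes the Weakening Lemma (Lemma~\ref{weaklemma}) and the Lifting Lemma (Lemma~\ref{liftlemma}) to coerce the two derivations into the exact shape the rule requires, or --- more economically --- notes that $\Pcutd{\ptone_1}{\cone}{\pttwo}$ is already delivered by Lemma~\ref{lemma:cutd} as a legal proof term, so that its typing derivation necessarily ends with a $\cutd$ whose contexts are arranged as needed, and one simply records that conclusion. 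Everything else is routine unwinding of the definitions.
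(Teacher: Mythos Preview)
Your proposal is correct and follows the same route as the paper, which simply records ``This follows from Lemma~\ref{lemma:cutd}.'' You have merely spelled out the obvious elaboration: take the corollary's witness $\ptone$ to be the full term $\Pcutd{\ptone_1}{\cone}{\pttwo}$ that the lemma already names as the target of $\cpredequ\cred\cpredequ$, and read off its typing and underlying process. Your remark that the context-matching in the final $\cutd$ instance is the only place needing care, and that it is resolved either by Weakening/Lifting or by noting that the lemma already hands you a well-typed $\Pcutd{\ptone_1}{\cone}{\pttwo}$, is exactly the right observation.
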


\begin{proof}
 This follows from Lemma \ref{lemma:cutd}.
\end{proof}
}{The other lemmas can be found in \cite{DalLagoDiGiambe2011ev}. By the way, this proof technique
is very similar to the one introduced by Caires and Pfenning~\cite{Caires10}.
}

\condinc{

%\paragraph{Proof of Theorem \ref{theo:subjred}}

\begin{proof}[Theorem \ref{theo:subjred}]
 We reason by induction on 
the structure of $\ptone$. Since $\widehat{\ptone}=\procone\reds\proctwo$ the only possible last rules of
$\ptone$ can be: $\Lone, \Lbangb, \Lbangd,$, a linear cut ($\cut$) or an exponential cut ($\cutb$ or $\cutd$).
In all the other cases, the underlying process can only perform a visible action, as can be easily
verified by inspecting the rules from Figure~\ref{fig:pidslltr}. With this observation in mind, let
us inspect the operational semantics derivation proving that $\procone\reds\proctwo$. At some point
we will find two subprocesses of $\procone$, call them $\procthree$ and $\procfour$, which
communicate, causing an internal reduction. We here claim that this can only happen 
in presence of a cut, and only the communication between $\procthree$ and $\procfour$ must
occur along the channel involved in the cut. Now, it's only a matter of showing that the
just described situation can be ``resolved'' preserving types, and this can be done using the previous lemmas.
Some relevant case:

\begin{varitemize}

\item $\ptone = \Pcutb{\ptone_1}{\cone}{\ptone_2}$; assume $\conone = \conone_1, \conone_2$ and $\procone \cequ \rest{\cone}{\binc{\cone}{\cfour}{\para{\procone_1}{\procone_2}}}$. Now $\tyg{\conone_1}{\emcon}{\emcon}{\emcon}{\ptone_1 \rightsquigarrow \procone_1}{\cone : \typthree}$ and $\tyg{ \conone_2, \cone : \typone}{\contwo}{\conthree}{\ptone_2 \rightsquigarrow \procone_2}{\cthree : \typone}$ ,  by inversion; from $\procone \reds \proctwo$ either $\procone_2 \reds \proctwo_2$ and $\proctwo = \rest{\cone}{\binc{\cone}{\cfour}{\para{\procone_1}{\proctwo_2}}}$ or $\proctwo_2 \xrightarrow{\overline{\rest{\ctwo}{\outsc{\cone}{\ctwo}}}} \proctwo_2$ and $\rest{\cone}{\binc{\cone}{\cfour}{\para{\procone_1}{\rest{\ctwo}{\para{\procone_1}{\proctwo_2}}}}}$.

      First case:

      $\tyg{ \conone_2, \cone : \typone}{\contwo}{\conthree}{\pttwo_2 \rightsquigarrow \proctwo_2}{\cthree : \typone}$ for some $\pttwo_2$ with  $\ptone_2 \cpredequ\cred\cpredequ \pttwo_2$ by i.h.; $\Pcutb{\ptone_1}{\cone}{\ptone_2} \cpredequ\cred\cpredequ \Pcutb{\ptone_1}{\cone}{\pttwo_2}$ by congruence. Pick $E = \Pcutb{\ptone_1}{\cone}{\pttwo_2}$; then $\tyg{ \conone}{\contwo}{\conthree}{\pttwo \rightsquigarrow \proctwo}{\cthree : \typone}$ by $\cutb$.

      Second case:

      $\Pcutb{\ptone_1}{\cone}{\ptone_2} \cpredequ\cred\cpredequ \pttwo $ for some $\pttwo$; then $\tyg{\conone}{\contwo}{\conthree}{\pttwo \rightsquigarrow \procthree}{\cthree : \typone}$ for some $\procthree \cequ \proctwo$ by Corollary \ref{cor:cutb}.

\item $\ptone = \Pcutd{\ptone_1}{\cone}{\ptone_2}$. Now, $\procone \cequ \rest{\cone}{\binc{\cone}{\cfour}{\para{\procone_1}{\procone_2}}}$ and $\tyg{\contwo}{\emcon}{\emcon}{\emcon}{\ptone_1 \rightsquigarrow \procone_1}{\cone : \typthree}$, $\tyg{ \conone }{\contwo, \cone : \typone}{\conthree}{\ptone_2 \rightsquigarrow \procone_2}{\cthree : \typone}$ ,  by inversion; from $\procone \reds \proctwo$ either
$\procone_2 \reds \proctwo_2$ and $\proctwo = \rest{\cone}{\binc{\cone}{\cfour}{\para{\procone_1}{\proctwo_2}}}$ or $\proctwo_2 \xrightarrow{\overline{\rest{\ctwo}{\outsc{\cone}{\ctwo}}}} \proctwo_2$ and $\rest{\cone}{\binc{\cone}{\cfour}{\para{\procone_1}{\rest{\ctwo}{\para{\procone_1}{\proctwo_2}}}}}$

      First case:

      $\tyg{ \conone}{\contwo, \cone : \typone}{\conthree}{\pttwo_2 \rightsquigarrow \proctwo_2}{\cthree : \typone}$ for some $\pttwo_2$ with  $\ptone_2 \cpredequ\cred\cpredequ \pttwo_2$ by i.h. and $\Pcutd{\ptone_1}{\cone}{\ptone_2} \cpredequ\cred\cpredequ \Pcutd{\ptone_1}{\cone}{\pttwo_2}$ by congruence. Pick $E = \Pcutd{\ptone_1}{\cone}{\pttwo_2}$; then $\tyg{ \conone}{\contwo}{\conthree}{\pttwo \rightsquigarrow \proctwo}{\cthree : \typone}$ by $\cutd$

      Second case:

      $\Pcutd{\ptone_1}{\cone}{\ptone_2} \cpredequ\cred\cpredequ \pttwo $ for some $\pttwo$; then $\tyg{\conone}{\contwo}{\conthree}{\pttwo \rightsquigarrow \procthree}{\cthree : \typone}$ for some $\procthree \cequ \proctwo$ by Corollary \ref{cor:cutd}.
\end{varitemize}
This concludes the proof.
\end{proof}
}{}

\section{Proving Polynomial Bounds}\label{sect:boundint}
%%%%%%%%%%%%%%%%%%%%%%%%%%%%%%%%%%%%%%%%%%
In this section, we prove the main result of this paper, namely some polynomial
bounds on the length of internal reduction sequences and on
the size of intermediate results for processes typable in \pidill. In other words, 
interaction will be shown to be bounded. The simplest formulation of this
result is the following:
\begin{theorem}\label{theo:boundint}
For every type $\typone$, there is a polynomial
$\polyone_\typone$ such that whenever
$\tyg{\emcon}{\emcon}{\cone:\typone}{\ptone}{\ctwo:\unit}$
and $\tyg{\emcon}{\emcon}{\emcon}{\pttwo}{\cone:\typone}$
where $\ptone$ and $\pttwo$ are normal and 
$\rest{\cone}{(\para{\pttotd{\ptone}}{\pttotd{\pttwo}})}\reds^\natone\procone$,
it holds that $\natone,\size{\procone}\leq\polyone_\typone(\size{\pttotd{\ptone}}+\size{\pttotd{\pttwo}})$
\end{theorem}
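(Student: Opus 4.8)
The plan is to transport the given reduction sequence up to the level of proof terms and there carry out a quantitative analysis of cut elimination, in the spirit of Lafont's polynomial bound for \sll. First I would set $\ptthree_{0}=\Pcut{\ptone}{\cone}{\pttwo}$, so that the linear cut rule gives $\tyg{\emcon}{\emcon}{\emcon}{\ptthree_{0}}{\ctwo:\unit}$ and $\widehat{\ptthree_{0}}\scon\rest{\cone}{(\para{\widehat{\ptone}}{\widehat{\pttwo}})}$. Iterating Subject Reduction (Theorem~\ref{theo:subjred}) along $\widehat{\ptthree_{0}}\reds^{\natone}\procone$ produces a chain $\ptthree_{0}\,(\cpredequ\cred\cpredequ)^{\natone}\,\ptthree_{\natone}$ of derivable proof terms, all with conclusion $\tyg{\emcon}{\emcon}{\emcon}{\cdot}{\ctwo:\unit}$, and with $\widehat{\ptthree_{\natone}}=\procone$. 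Since $\cpred$ and $\cequ$ correspond at the process level only to structural congruence or strong bisimilarity, it now suffices to bound (i)~the number of $\cred$-steps in the chain and (ii)~$\size{\widehat{\ptthree_{j}}}$ for \emph{every} $j$, by a polynomial in $\size{\widehat{\ptone}}+\size{\widehat{\pttwo}}$ whose degree depends only on $\typone$; note that the box-replicating equivalences do genuinely enlarge the underlying process, so the size must be tracked all along the chain, not just read off a single term.

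Next I would bound the depth. Write $\partial(\ptfour)$ for the maximal nesting of $\Rbang$-instances in a proof term $\ptfour$, and $\partial(\typone)$ for the maximal nesting of $\bang$ in $\typone$. Because $\ptone$ and $\pttwo$ are \emph{normal}, a subformula-property argument yields $\partial(\ptone),\partial(\pttwo)\leq\partial(\typone)$: in a cut-free derivation every $\bang C$ introduced by a promotion is traced downward to an occurrence of $\bang$ in a type of the end judgment, and every box nested inside it to an occurrence of $\bang$ nested in $C$, while the only non-trivial type of the end judgments of $\ptone$ and $\pttwo$ is $\typone$. Hence $\partial(\ptthree_{0})\leq\partial(\typone)=:d$, and a routine inspection of the rules $\cred$, $\cpred$, $\cequ$ shows that no step raises the depth: the box-opening shift rule (a $\cut$ between $\Rbang$ and $\Lbangb$) moves a box body one level \emph{up}, and the $\cutd$-over-$\cut$ distributivity equivalences replicate a box without nesting it further. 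So $\partial(\ptthree_{j})\leq d$ throughout.

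The core is a family of level-indexed measures on proof terms: for $0\leq i\leq d$, $\weip{i}{\ptfour}$ is a suitable weighting of the rule instances of $\ptfour$ at box-nesting-level exactly $i$, $\bde{\ptfour}$ the number of $\Rbang$-instances, and $\dupf{\ptfour}$ a bound --- uniform over $\ptfour$ and all its reducts --- on how many times any single \emph{server} (a subderivation to the left of a $\cutb$ or $\cutd$, hence with empty linear and multiplexor contexts) can be copied. This last quantity is where the soft discipline bites: the name discharged by a $\cutb$ lies in the auxiliary context of the continuation (where each name occurs once) and the one discharged by a $\cutd$ lies in the multiplexor context (treated additively), so $\dupf{\ptthree_{0}}$ is finite and, decisively, is \emph{not} increased by any reduction step, since no rule ever creates a fresh spawn request on an already-cut channel. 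With the weights correctly tuned one establishes: every $\cred$-step that is not a server duplication, and every $\cpred$-step, does not increase $\sum_{j}\weip{j}{\cdot}$ and leaves untouched all levels strictly below its redex; whereas a server-duplication step --- a $\cred$-step in which a $\cutb$ meets a $\bemb$ or a $\cutd$ meets a $\bemd$, or a box-replicating $\cutd$-equivalence --- copies exactly one server, touches no level below the redex, enlarges higher levels by at most the size of that server, and can act on a given level-$i$ server at most $\dupf{\ptthree_{0}}$ times in all. From these facts one argues by downward induction on the level: the chain splits into phases $i=d,d-1,\dots,0$ (the steps whose redex sits at level $i$), all levels below the current one being frozen, and during phase $i$ each step either strictly decreases $\weip{i}{\cdot}$ or copies one of at most $\bde{\cdot}$ servers at most $\dupf{\ptthree_{0}}$ times, so the number of steps of the phase and the largest size reached in it are bounded by a fixed-degree polynomial in the bounds already secured for levels $\geq i+1$ and in $\size{\widehat{\ptthree_{0}}}$. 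Composing the $d+1$ resulting polynomials gives a polynomial $\polyone_{\typone}$ of degree $\Theta(\partial(\typone))$ bounding both $\natone$ and $\sup_{j}\size{\widehat{\ptthree_{j}}}$; since $\size{\widehat{\ptthree_{0}}}= \size{\widehat{\ptone}}+\size{\widehat{\pttwo}}$ and $\size{\procone}=\size{\widehat{\ptthree_{\natone}}}$, this is the claimed bound.

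The real work, and the main obstacle, is the design of the measures $\weip{i}{\cdot}$, $\bde{\cdot}$ and $\dupf{\cdot}$ so that they simultaneously behave monotonically under the many shift rules and commuting conversions (in particular the box-replicating $\cutd$-equivalences and the box-opening rule, which reshuffles rule instances between levels), strictly decrease under $\cred$ at the level of the redex, and remain confined level by level. Pinning down the soft bound $\dupf{\cdot}$ on how many copies a server can be forced to make --- precisely the invariant that \emph{fails} in \pidill\ and produces the $\dupserver$/$\mulser$ blow-up of Section~\ref{sect:onbint} --- and checking that it is preserved by every reduction rule, is the crux; the level bookkeeping across the box-opening rule is the main secondary nuisance.
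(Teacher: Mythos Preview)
Your proposal takes a genuinely different route from the paper, and the paper's route is considerably simpler.

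The paper does \emph{not} use a family of level-indexed weights $\weip{i}{\cdot}$ together with a phase decomposition of the reduction chain. Instead it defines a \emph{single} global weight $\wei{\ptthree}=\weip{\dupf{\ptthree}}{\ptthree}$, where the parameter $\natone=\dupf{\ptthree}$ is baked in once and for all, and proves three monotonicity facts: $\wei{\cdot}$ is strictly decreased by every $\cred$-step, not increased by every $\cpred$-step, and \emph{exactly preserved} by every $\cequ$-step, including the box-replicating $(\cutd/-/\cut)$ equivalence you single out as problematic. The trick is that the exponential cuts are weighted by the virtual occurrence count $\foc{\cone}{\cdot}$, so the weight already ``pre-pays'' for all future copies of a server; a $\cutd$-duplication then just redistributes this budget without changing the total. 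Once $\wei{\cdot}$ is known to strictly drop at each $\cred$, the number $\natone$ of process-level reductions is bounded by $\wei{\ptthree_{0}}$, and a separate polynomial bound $\wei{\ptthree_{0}}\leq\size{\widehat{\ptthree_{0}}}^{\,\bde{\ptthree_{0}}+1}$ (via $\dupf{\ptthree_{0}}\leq\size{\ptthree_{0}}$) finishes the length argument. The size of reducts is then handled by a purely process-level observation (Lemma~\ref{lemma:spacevstime}): if $\procone\reds^{\natone}\proctwo$ then $\size{\proctwo}\leq\natone\cdot\size{\procone}$, since each step can at most copy a replicated subterm already present in $\procone$. So your worry that ``the size must be tracked all along the chain'' because equivalence steps can enlarge the process is a non-issue in the paper's framing: equivalence steps act only on the \emph{proof-term} level, not on the process reduction sequence itself, and the process-level size bound needs nothing from the weight machinery.

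Your Lafont-style level-by-level argument could in principle be made to work, but as written it has a gap: the reduction chain $\ptthree_{0}(\cpredequ\cred\cpredequ)^{\natone}\ptthree_{\natone}$ is \emph{given} (it comes from an arbitrary process reduction), and there is no reason it should decompose into successive phases ``level $d$, then level $d-1$, \ldots'' with lower levels frozen during each phase. Lafont's original argument uses that structure because he \emph{chooses} the level-by-level strategy; here you cannot. If instead you mean to partition the steps of the given chain by the level of their redex and bound each class separately, you must argue that steps at one level do not spoil the accounting at other levels in either direction, and the sentence ``all levels below the current one being frozen'' no longer makes sense. This is fixable, but the bookkeeping is substantially heavier than the paper's single global weight; the key simplification you are missing is that the duplicating equivalences can be made \emph{weight-preserving} rather than merely controllable.

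Two minor points: the cut should be $\Pcut{\pttwo}{\cone}{\ptone}$, since $\pttwo$ provides the channel $\cone:\typone$ and $\ptone$ consumes it from its linear context; and your subformula argument for $\partial(\ptone),\partial(\pttwo)\leq\partial(\typone)$ is exactly what the paper states as its final lemma.
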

Intuitively, what Theorem~\ref{theo:boundint} says is that
the complexity of the interaction between two processes
typable without cuts and communicating through a channel with session
type $\typone$ is polynomial in their sizes, where the specific polynomial
involved only depends on $\typone$ itself. In other words, the complexity
of the interaction is not only bounded, but can be somehow ``read off'' from
the types of the communicating parties.

How does the proof of Theorem~\ref{theo:boundint} look like? Conceptually,
it can be thought of as being structured into four steps:
\begin{varenumerate}
\item\label{point:first}
   First of all, a natural number $\wei{\ptone}$ is attributed to any
   proof term $\ptone$. $\wei{\ptone}$ is said to be the \emph{weight}
   of $\ptone$.
\item
   Secondly, the weight of any proof term is shown to strictly decrease
   along computational reduction, not to increase along shifting
   reduction and to stay the same for equivalent proof terms.
\item\label{point:third}
   Thirdly, $\wei{\ptone}$ is shown to be bounded by a polynomial
   on $\size{\widehat{\ptone}}$, where the exponent only depends on the 
   nesting depth of boxes of $\ptone$, denoted $\bd{\ptone}$.
\item\label{point:fourth}
   Finally, the box depth $\bd{\ptone}$ of any proof term $\ptone$
   is shown to be ``readable'' from its type interface.
\end{varenumerate}
This is exactly what we are going to do in the rest of this section.
Please observe how points~\ref{point:first}--\ref{point:third} 
above allow to prove the following stronger result, from which Theorem~\ref{theo:boundint} 
easily follows, given point \ref{point:fourth}:
\begin{proposition}\label{prop:polysound}
For every $\natone\in\NN$, there is a polynomial 
$\polyone_\natone$ such that for every process $\procone$ 
with $\tyg{\conone}{\contwo}{\conthree}{\procone}{\tcone}$,
if $\procone\reds^\nattwo\proctwo$, then
$\nattwo,\size{\proctwo}\leq\polyone_{\bde{\procone}}(\size{\procone})$.
\end{proposition}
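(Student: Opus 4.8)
The plan is to carry out the first three of the four steps sketched above; the fourth (reading $\bde{\procone}$ off the type interface) is needed only for Theorem~\ref{theo:boundint}. Concretely, I attach to each proof term $\ptone$ and each $n\ge 1$ a natural number $\weip{n}{\ptone}$, the \emph{weight at level $n$}, set $\wei{\ptone}:=\weip{\size{\widehat{\ptone}}}{\ptone}$, and prove three facts: (i) $\size{\widehat{\ptone}}\le\weip{n}{\ptone}$; (ii) for each fixed $n$, $\weip{n}{\cdot}$ strictly decreases along $\cred$ and does not increase along $\cpred$ or $\cequ$; (iii) $\wei{\ptone}\le\polyone_{\bde{\ptone}}(\size{\widehat{\ptone}})$ for a polynomial depending only on $\bde{\ptone}$. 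Granting these, the Proposition follows: if $\procone\reds^{\nattwo}\proctwo$, iterating Theorem~\ref{theo:subjred} gives a chain $\ptone\cpredequ\cred\cpredequ\cdots\cpredequ\cred\cpredequ\pttwo$ with $\widehat{\pttwo}=\proctwo$ and exactly $\nattwo$ occurrences of $\cred$; fixing the level $n=\size{\procone}$, fact (ii) makes the weight drop by at least one at each $\cred$ and never rise, so $\nattwo\le\weip{n}{\ptone}=\wei{\ptone}$ and, by (i), $\size{\proctwo}=\size{\widehat{\pttwo}}\le\weip{n}{\pttwo}\le\wei{\ptone}$; fact (iii) then bounds $\wei{\ptone}$ by $\polyone_{\bde{\procone}}(\size{\procone})$.

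The weight is defined by induction on the proof term. Rules that are neither cuts nor $\Rbang$ contribute the sum of the weights of their immediate subderivations plus a positive constant --- a \emph{sum}, not a maximum, over the two branches of $\Lplus$ and $\Rwith$, so that (i) agrees with the process-size definition. A box is charged $n$ times the weight of its body (plus a per-copy overhead), anticipating that opening it may create a multiplexor cut of arity up to $n$. The linear cut $\cut$ and the auxiliary cut $\cutb$ are essentially additive, since a name in an auxiliary context occurs free exactly once and $\cutb$ causes no genuine replication. The decisive clause is the multiplexor cut $\Pcutd{\ptfour}{\cone}{\ptfive}$: it is charged one copy of $\weip{n}{\ptfour}$ for each of the $\foc{\cone}{\ptfive}$ server-spawn prefixes ($\bemd$, $\bemb$) on $\cone$ inside $\ptfive$ --- exactly the number of times the cut will fire, each firing producing a fresh copy of $\ptfour$ --- plus a residual copy for the persistent replicated server that process size retains even after it becomes unreachable, on top of $\weip{n}{\ptfive}$. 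The remaining constants and node costs are then fixed so as to make fact (ii) hold; e.g.\ in $(\cutd/-/\bemd)$ the factor $\foc{\cone}{\cdot}$ decreases by one while one $\cut$ node is created, so a positive surplus is arranged, and in $(\cut/\Rbang/\Lbangb)$ the prepaid $n$ in the box weight covers the arity of the resulting $\cutd$.

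Fact (ii) is a finite inspection of the clauses of $\cred$, $\cpred$, $\cequ$. The computational and shift cases are short local calculations. The delicate point is that the commuting conversions and strong bisimilarities of $\cequ$ --- notably the one distributing a $\cutd$ over a $\cut$, and the reassociations of nested $\cutd$'s --- must not raise the weight (and indeed leave it unchanged except in degenerate dead-cut situations); this works because the multiplexor context is treated additively, so $\foc{\cone}{\cdot}$ is additive under the parallel compositions these rules rearrange and the $\cutd$ contribution splits accordingly. Two invariants of the $\sll$ discipline (\emph{no contraction on auxiliary doors}) are also used: every auxiliary-context name occurs free exactly once, and --- the key dynamic fact --- multiplexing arities never increase along reduction. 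The latter ensures that the level $n=\size{\procone}$, legitimate for $\ptone$, remains legitimate for every reduct, so that $\weip{n}{\cdot}$ is a uniform decreasing measure throughout; intermediate process sizes (and hence the $\foc{\cone}{\cdot}$ values internal to a reduct) may grow, but never beyond $n$.

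Fact (iii) is the soft-linear-logic counting argument. Define $\bde{\ptone}$ as the maximal nesting, along any branch of $\ptone$, of $\Rbang$-rules and of \emph{left} premises of $\cutd$- and $\cutb$-rules; it is static and non-increasing under reduction. By induction on $\ptone$ one shows $\weip{n}{\ptone}\le C\,(n\,\size{\widehat{\ptone}})^{\bde{\ptone}}(\size{\widehat{\ptone}}+1)$ for a universal $C$: parallel composition only adds weights, the number of cut and box nodes is bounded by $\size{\widehat{\ptone}}$, and each further layer of box- or multiplexor-nesting multiplies the bound by at most $n\,\size{\widehat{\ptone}}$ (the $n$ coming from a box, the $\foc{\cone}{\cdot}\le\size{\widehat{\ptone}}$ from a $\cutd$). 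Taking $n=\size{\widehat{\ptone}}=\size{\procone}$ gives $\wei{\ptone}\le C\,\size{\procone}^{2\bde{\procone}+1}$, of degree depending only on $\bde{\procone}$. The part I expect to be hardest is Step~1 together with the $\cequ$-cases of Step~2: isolating a \emph{single} weight function --- in particular the exact $\foc$-weighting of $\cutd$, its interaction with lifting $\lift{(\cdot)}$, and its treatment of dead cuts --- that simultaneously dominates process size, strictly decreases under all eight computational rules, and does not increase under any of the numerous commuting conversions and bisimilarities; once the weight is right, the monotone-arity lemma and the degree estimate are comparatively mechanical.
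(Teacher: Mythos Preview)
Your overall architecture is close to the paper's, but there is a real tension between your fact~(i) and your fact~(ii) that the paper resolves in a different way, and your proposed resolution does not work.

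The problem is the ``residual copy'' you add to the weight of $\Pcutd{\ptfour}{\cone}{\ptfive}$. You need it for (i), because $\foc{\cone}{\ptfive}$ can be $0$ while the underlying process $\rest{\cone}{(\para{\binc{\cone}{\ctwo}{\widehat{\ptfour}}}{\widehat{\ptfive}})}$ still contains a full copy of $\widehat{\ptfour}$. But it breaks (ii) for $\cequ$. The rule $(\cutd/-/\cut)$
\[
  \Pcutd{\ptthree}{\cone}{\Pcut{\ptfour_{\cone}}{\ctwo}{\ptfive_{\cone\ctwo}}}
  \;\cequ\;
  \Pcut{\Pcutd{\ptthree}{\cone}{\ptfour_{\cone}}}{\ctwo}{\Pcutd{\ptthree}{\cone}{\ptfive_{\cone\ctwo}}}
\]
turns one $\cutd$ into two, so one residual charge of $\weip{n}{\ptthree}$ becomes two. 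Since $\cequ$ is an \emph{equivalence} and $\cpredequ=(\cpred\cup\cequ)^*$, ``non-increasing under $\cequ$'' forces exact preservation; you cannot tolerate an increase in one direction and hope it cancels later. Iterating the rule down a tower of nested $\cut$'s makes the weight grow without bound under $\cequ$ alone, so no uniform decrease along $\cpredequ\cred\cpredequ$ is available. (And this rule is not a ``degenerate dead-cut situation'': it is used, in both directions, in the subject-reduction proof.)

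The paper sidesteps this entirely: it does \emph{not} ask the weight to dominate process size. Its weight for $\cutd$ is just $\foc{\cone}{\ptfive}\cdot\weip{n}{\ptthree}+\weip{n}{\ptfive}$, with no residual copy, and is then exactly invariant under $\cequ$ (Proposition~\ref{prop:cequun}). The size bound on reducts comes from a completely separate, process-level observation (Lemma~\ref{lemma:spacevstime}): if $\procone\reds^{\nattwo}\proctwo$ then $\size{\proctwo}\le\nattwo\cdot\size{\procone}$, proved by noting that the only size-increasing reduction step spawns a replicated server whose body is already a subprocess of the original $\procone$. Combining this with the polynomial bound on $\nattwo$ obtained from the weight gives the polynomial bound on $\size{\proctwo}$.

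A minor difference worth noting: the paper takes the level to be the duplicability factor $\dupf{\ptone}$, shows $\dupf{\cdot}$ is non-increasing under all three relations, and sets $\wei{\ptone}=\weip{\dupf{\ptone}}{\ptone}$. Your choice $n=\size{\procone}$ also works, via $\dupf{\ptone}\le\size{\widehat{\ptone}}$ (Lemma~\ref{lemma:dupf}); your ``multiplexing arities never increase'' is precisely the statement $\dupf{\pttwo}\le\dupf{\ptone}$, which is what actually keeps the fixed level legitimate along the reduction.
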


%%%%%%%%%%%%%%%%%%%%%%%%%%%%%%%%%%%%%
\subsection{Preliminary Definitions}
%%%%%%%%%%%%%%%%%%%%%%%%%%%%%%%%%%%%%
Some concepts have to be given before we can embark in the proof
of Proposition~\ref{prop:polysound}. First of all, we need to define what
the box-depth of a process is. Simply, given a process $\procone$, its
\emph{box-depth} $\bde{\procone}$ is the nesting-level of replications\footnote{This
terminology is derived from linear logic, where proofs obtained by
the promotion rule are usually called boxes} in $\procone$. As an example, the box-depth
of $\binc{\cone}{\ctwo}{\binc{\cthree}{\cfour}{\emproc}}$ is $2$, while
the one of $\rest{\cone}{\inwc{\ctwo}{\cthree}}$ is $0$. 
\condinc
{
Formally,
\begin{align*}
\bde{\PLone{\cone}{\ptone}}&=\bde{\ptone} &
\bde{\PRplusone{\ptone}}&=\bde{\ptone}\\
\bde{\PRone}&=0 &
\bde{\PRplustwo{\ptone}}&=\bde{\ptone}\\
\bde{\PLten{\cone}{\ctwo}{\cthree}{\ptone}}&=\bde{\ptone} &
\bde{\Pbemb{\cone}{\ctwo}{\ptone}}&=\bde{\ptone}\\
\bde{\PRten{\ptone}{\pttwo}}&=\max\{\bde{\ptone},\bde{\pttwo}\} &
\bde{\Pbemd{\cone}{\ctwo}{\ptone}}&=\bde{\ptone}\\
\bde{\PLlin{\cone}{\ptone}{\ctwo}{\pttwo}}&=\max\{\bde{\ptone},\bde{\pttwo}\} &
\bde{\PLbangb{\cone}{\ptone}}&=\bde{\ptone}\\
\bde{\PRlin{\cone}{\ptone}}&=\bde{\ptone} &
\bde{\PLbangd{\cone}{\ptone}}&=\bde{\ptone}\\
\bde{\PLwithone{\cone}{\ctwo}{\ptone}}&=\bde{\ptone} &
\bde{\PRbang{\cone_1,\ldots,\cone_n}{\ptone}}&=1+\bde{\ptone}\\
\bde{\PLwithtwo{\cone}{\ctwo}{\ptone}}&=\bde{\ptone} &
\bde{\Pcut{\ptone}{\cone}{\pttwo}}&=\max\{\bde{\ptone},\bde{\pttwo}\}\\
\bde{\PRwith{\ptone}{\pttwo}}&=\max\{\bde{\ptone},\bde{\pttwo}\} &
\bde{\Pcutb{\ptone}{\cone}{\pttwo}}&=\max\{\bde{\ptone}+1,\bde{\pttwo}\}\\
\bde{\PLplus{\cone}{\ctwo}{\ptone}{\cthree}{\pttwo}}&=\max\{\bde{\ptone},\bde{\pttwo}\} &
\bde{\Pcutd{\ptone}{\cone}{\pttwo}}&=\max\{\bde{\ptone}+1,\bde{\pttwo}\}
\end{align*}
}{}
Analogously, the
box-depth of a proof term $\ptone$ is simply $\bde{\widehat{\ptone}}$.

Now, suppose that $\tyg{\conone}{\contwo}{\conthree}{\ptone}{\tcone}$
and that $\cone:\typone$ belongs to either $\conone$ or $\contwo$, i.e.
that $\cone$ is an ``exponential'' channel in $\ptone$. A key parameter
is the \emph{virtual number of occurrences} of $\cone$ in $\ptone$, which is denoted
as $\foc{\cone}{\ptone}$. This parameter, as its name suggests, is not simply the number
of literal occurrences of $\cone$ in $\ptone$, but takes into account possible
duplications derived from cuts. So, for example, 
$\foc{\cfour}{\Pcutb{\ptone}{\cone}{\pttwo}}=\foc{\cone}{\pttwo}\cdot\foc{\cfour}{\ptone}+\foc{\cfour}{\pttwo}$,
while $\foc{\cfour}{\PRten{\ptone}{\pttwo}}$ is merely $\foc{\cfour}{\ptone}+\foc{\cfour}{\pttwo}$.
Obviously, $\foc{\cfour}{\Pbemb{\cone}{\cfour}{\ptone}}=1$ and
$\foc{\cfour}{\Pbemd{\cone}{\cfour}{\ptone}}=1$. 
\condinc
{
Formally:
\begin{align*}
\foc{\cfour}{\PLone{\cone}{\ptone}}&=\foc{\cfour}{\ptone}\\
\foc{\cfour}{\PRone}&=0\\
\foc{\cfour}{\PLten{\cone}{\ctwo}{\cthree}{\ptone}}&=\foc{\cfour}{\ptone}\\
\foc{\cfour}{\PRten{\ptone}{\pttwo}}&=\foc{\cfour}{\ptone}+\foc{\cfour}{\pttwo}\\
\foc{\cfour}{\PLlin{\cone}{\ptone}{\ctwo}{\pttwo}}&=\foc{\cfour}{\ptone}+\foc{\cfour}{\pttwo}\\
\foc{\cfour}{\PRlin{\cone}{\ptone}}&=\foc{\cfour}{\ptone}\\
\foc{\cfour}{\Pcut{\ptone}{\cone}{\pttwo}}&=\foc{\cfour}{\ptone}+\foc{\cfour}{\pttwo}\\
\foc{\cfour}{\Pcutb{\ptone}{\cone}{\pttwo}}&=\foc{\cone}{\pttwo}\cdot\foc{\cfour}{\ptone}+\foc{\cfour}{\pttwo}\\
\foc{\cfour}{\Pcutd{\ptone}{\cone}{\pttwo}}&=\foc{\cone}{\pttwo}\cdot\foc{\cfour}{\ptone}+\foc{\cfour}{\pttwo}\\
\foc{\cfour}{\Pbemb{\cone}{\cfour}{\ptone}}&=1\\
\foc{\cfour}{\Pbemd{\cone}{\cfour}{\ptone}}&=1\\
\foc{\cfour}{\Pbemb{\cone}{\ctwo}{\ptone}}&=0\\
\foc{\cfour}{\Pbemd{\cone}{\ctwo}{\ptone}}&=0\\
\foc{\cfour}{\PLbangb{\cone}{\ptone}}&=\foc{\cfour}{\ptone}\\
\foc{\cfour}{\PLbangd{\cone}{\ptone}}&=\foc{\cfour}{\ptone}\\
\foc{\cfour}{\PRbang{\cone_1,\ldots,\cone_n}{\ptone}}&=0\\
\foc{\cfour}{\PLplus{\cone}{\ctwo}{\ptone}{\cthree}{\pttwo}}&=\foc{\cfour}{\ptone}+\foc{\cfour}{\pttwo}\\
\foc{\cfour}{\PRplusone{\ptone}}&=\foc{\cfour}{\ptone}\\
\foc{\cfour}{\PRplustwo{\ptone}}&=\foc{\cfour}{\ptone}\\
\foc{\cfour}{\PLwithone{\cone}{\ctwo}{\ptone}}&=\foc{\cfour}{\ptone}\\
\foc{\cfour}{\PLwithtwo{\cone}{\ctwo}{\ptone}}&=\foc{\cfour}{\ptone}\\
\foc{\cfour}{\PRwith{\ptone}{\pttwo}}&=\foc{\cfour}{\ptone}+\foc{\cfour}{\pttwo}
\end{align*}
}{}

A channel in either the auxiliary or the exponential context can ``float'' to the linear context
as an effect of rules $\Lbangb$ or $\Lbangd$. From that moment on, it can only
be treated as a linear channel. As a consequence, it makes sense to
define the \emph{duplicability factor} of a proof term $\ptone$, written
$\dupf{\ptone}$, simply as the maximum of $\foc{\cone}{\ptone}$ over all instances 
of the rules $\Lbangb$ or $\Lbangd$ in $\ptone$, where $\cone$ is the involved
channel. For example, $\dupf{\PLbangb{\cone}{\ptone}}=\max\{\dupf{\ptone},\foc{\ctwo}{\ptone}\}$
and $\dupf{\PLlin{\cone}{\ptone}{\ctwo}{\pttwo}}=\max\{\dupf{\ptone},\dupf{\pttwo}\}$.
\condinc{
Formally, the duplicability factor $\dupf{\ptone}$ of $\ptone$ is defined as follows:
\begin{align*}
\dupf{\PLone{\cone}{\ptone}}&=\dupf{\ptone} &
\dupf{\PRplusone{\ptone}}&=\dupf{\ptone} \\
\dupf{\PRone}&=0 &
\dupf{\PRplustwo{\ptone}}&=\dupf{\ptone} \\
\dupf{\PLten{\cone}{\ctwo}{\cthree}{\ptone}}&=\dupf{\ptone} &
\dupf{\Pbemb{\cone}{\ctwo}{\ptone}}&=\dupf{\ptone} \\
\dupf{\PRten{\ptone}{\pttwo}}&=\max\{\dupf{\ptone},\dupf{\pttwo}\} &
\dupf{\Pbemd{\cone}{\ctwo}{\ptone}}&=\dupf{\ptone} \\
\dupf{\PLlin{\cone}{\ptone}{\ctwo}{\pttwo}}&=\max\{\dupf{\ptone},\dupf{\pttwo}\} &
\dupf{\PLbangb{\cone}{\ptone}}&=\max\{\dupf{\ptone},\foc{\ctwo}{\ptone}\} \\
\dupf{\PRlin{\cone}{\ptone}}&=\dupf{\ptone} &
\dupf{\PLbangd{\cone}{\ptone}}&=\max\{\dupf{\ptone},\foc{\ctwo}{\ptone}\} \\
\dupf{\PLwithone{\cone}{\ctwo}{\ptone}}&=\dupf{\ptone} &
\dupf{\PRbang{\cone_1,\ldots,\cone_n}{\ptone}}&=\dupf{\ptone} \\
\dupf{\PLwithtwo{\cone}{\ctwo}{\ptone}}&=\dupf{\ptone} &
\dupf{\Pcut{\ptone}{\cone}{\pttwo}}&=\max\{\dupf{\ptone},\dupf{\pttwo}\} \\
\dupf{\PRwith{\ptone}{\pttwo}}&=\max\{\dupf{\ptone},\dupf{\pttwo}\} & 
\dupf{\Pcutb{\ptone}{\cone}{\pttwo}}&=\max\{\dupf{\ptone},\dupf{\pttwo}\}\\
\dupf{\PLplus{\cone}{\ctwo}{\ptone}{\cthree}{\pttwo}}&=\max\{\dupf{\ptone},\dupf{\pttwo}\} & 
\dupf{\Pcutd{\ptone}{\cone}{\pttwo}}&=\max\{\dupf{\ptone},\dupf{\pttwo}\}
\end{align*}}{}

It's now possible to give the definition of $\wei{\ptone}$, namely the \emph{weight}
of the proof term $\ptone$. Before doing that, however, it is necessary to give
a parameterized notion of weight, denoted $\weip{\natone}{\ptone}$. Intuitively,
$\weip{\natone}{\ptone}$ is defined similarly to $\size{\widehat{\ptone}}$.
However, every input and output action in $\widehat{\ptone}$ can possibly count for
more than one:
\begin{varitemize}
\item
   Everything inside $\ptone$ in $\PRbang{\cone_1,\ldots,\cone_n}{\ptone}$
   counts for $\natone$;
\item
   Everything inside $\ptone$ in either 
   $\Pcutb{\ptone}{\cone}{\pttwo}$ or 
   $\Pcutd{\ptone}{\cone}{\pttwo}$ counts for 
   $\foc{\cone}{\pttwo}$.
\end{varitemize}
For example, 
$\weip{\natone}{\Pcutd{\ptone}{\cone}{\pttwo}}=\foc{\cone}{\pttwo}\cdot\weip{\natone}{\ptone}+\weip{\natone}{\pttwo}$,
while $\weip{\natone}{\PLwithtwo{\cone}{\ctwo}{\ptone}}=1+\weip{\natone}{\ptone}$.
\condinc{
Formally:
\begin{align*}
\weip{\natone}{\PLone{\cone}{\ptone}}&=\weip{\natone}{\ptone}\\
\weip{\natone}{\PRone}&=0\\
\weip{\natone}{\PLten{\cone}{\ctwo}{\cthree}{\ptone}}&=1+\weip{\natone}{\ptone}\\
\weip{\natone}{\PRten{\ptone}{\pttwo}}&=1+\weip{\natone}{\ptone}+\weip{\natone}{\pttwo}\\
\weip{\natone}{\PLlin{\cone}{\ptone}{\ctwo}{\pttwo}}&=1+\weip{\natone}{\ptone}+\weip{\natone}{\pttwo}\\
\weip{\natone}{\PRlin{\cone}{\ptone}}&=1+\weip{\natone}{\ptone}\\
\weip{\natone}{\Pcut{\ptone}{\cone}{\pttwo}}&=\weip{\natone}{\ptone}+\weip{\natone}{\pttwo}\\
\weip{\natone}{\Pcutb{\ptone}{\cone}{\pttwo}}&=\foc{\cone}{\pttwo}\cdot\weip{\natone}{\ptone}+\weip{\natone}{\pttwo}\\
\weip{\natone}{\Pcutd{\ptone}{\cone}{\pttwo}}&=\foc{\cone}{\pttwo}\cdot\weip{\natone}{\ptone}+\weip{\natone}{\pttwo}\\
\weip{\natone}{\Pbemb{\cone}{\ctwo}{\ptone}}&=1+\weip{\natone}{\ptone}\\
\weip{\natone}{\Pbemd{\cone}{\ctwo}{\ptone}}&=1+\weip{\natone}{\ptone}\\
\weip{\natone}{\PLbangb{\cone}{\ptone}}&=\weip{\natone}{\ptone}\\
\weip{\natone}{\PLbangd{\cone}{\ptone}}&=\weip{\natone}{\ptone}\\
\weip{\natone}{\PRbang{\cone_1,\ldots,\cone_n}{\ptone}}&=\natone\cdot(\weip{\natone}{\ptone}+1)\\
\weip{\natone}{\PLplus{\cone}{\ctwo}{\ptone}{\cthree}{\pttwo}}&=1+\weip{\natone}{\ptone}+\weip{\natone}{\pttwo}\\
\weip{\natone}{\PRplusone{\ptone}}&=1+\weip{\natone}{\ptone}\\
\weip{\natone}{\PRplustwo{\ptone}}&=1+\weip{\natone}{\ptone}\\
\weip{\natone}{\PLwithone{\cone}{\ctwo}{\ptone}}&=1+\weip{\natone}{\ptone}\\
\weip{\natone}{\PLwithtwo{\cone}{\ctwo}{\ptone}}&=1+\weip{\natone}{\ptone}\\
\weip{\natone}{\PRwith{\ptone}{\pttwo}}&=1+\weip{\natone}{\ptone}+\weip{\natone}{\pttwo}
\end{align*}
}{}
Now, $\wei{\ptone}$ is simply $\weip{\dupf{\ptone}}{\ptone}$.
\condinc{}
{The concepts we have just introduced are more precisely defined in~\cite{DalLagoDiGiambe2011ev}.}
%%%%%%%%%%%%%%%%%%%%%%%%%%%%%%%%%%%%%%%%%
\subsection{Monotonicity Results}
%%%%%%%%%%%%%%%%%%%%%%%%%%%%%%%%%%%%%%%%%
The crucial ingredient for proving polynomial bounds are a series of results about how
the weight $\ptone$ evolves when $\ptone$ is put in relation with another proof term
$\pttwo$ by way of either $\cred$, $\cpred$ or $\cequ$.
\condinc{
\begin{lemma}
For every $\ptone$, $\dupf{\ptone}=\dupf{\lift{\ptone}}$ and
for every $\natone$, $\weip{\natone}{\ptone}=\weip{\natone}{\lift{\ptone}}$.
\end{lemma}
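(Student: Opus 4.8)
The plan is to prove both equalities (together with the auxiliary fact $\foc{\cfour}{\lift{\ptone}}=\foc{\cfour}{\ptone}$ for every channel $\cfour$) by a single induction on the structure of the proof term $\ptone$, following the inductive definition of $\lift{\cdot}$ extracted from the proof of Lemma~\ref{liftlemma}. The reason the $\foc{\cdot}{\cdot}$ statement has to be carried along is that it occurs inside the defining clauses of both $\dupf{\cdot}$ (in the $\PLbangb{}{}$/$\PLbangd{}{}$ clause) and $\weip{\natone}{\cdot}$ (in the $\Pcutb{}{}{}$/$\Pcutd{}{}{}$ clause), so the induction hypotheses for the three quantities must be available simultaneously.

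First I would record the shape of $\lift{\cdot}$. On every constructor arising from the unit, multiplicative, additive and linear-cut rules — as well as from $\Rbang$, whose conclusion already has an empty auxiliary context, so that $\lift{\PRbang{\cone_1,\ldots,\cone_n}{\ptone}}=\PRbang{\cone_1,\ldots,\cone_n}{\ptone}$ verbatim — the operation $\lift{\cdot}$ simply recurses into the immediate subterms, possibly inserting applications of the Weakening Lemma (Lemma~\ref{weaklemma}) to make multiplexor contexts agree. The key point is that Lemma~\ref{weaklemma} leaves the proof term itself untouched, so it affects none of $\foc{\cdot}{\cdot}$, $\dupf{\cdot}$, $\weip{\natone}{\cdot}$. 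For all of these constructors the three quantities are given by exactly the same arithmetic expression on $\ptone$ and on $\lift{\ptone}$, hence the equalities follow immediately from the induction hypotheses applied to the subterms.

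The only cases in which $\lift{\cdot}$ genuinely rewrites a constructor are the ones that relocate a name between the two exponential contexts, namely $\lift{\Pbemb{\cone}{\ctwo}{\pttwo}}=\Pbemd{\cone}{\ctwo}{\lift{\pttwo}}$, $\lift{\PLbangb{\cone}{\pttwo}}=\PLbangd{\cone}{\lift{\pttwo}}$, and $\lift{\Pcutb{\ptthree}{\cone}{\pttwo}}=\Pcutd{\lift{\ptthree}}{\cone}{\lift{\pttwo}}$ (again up to Weakening). For these I would only need the elementary observation that the defining clauses of $\foc{\cdot}{\cdot}$, of $\dupf{\cdot}$ and of $\weip{\natone}{\cdot}$ coincide literally for $\bemb$ and $\bemd$, for $\Lbangb$ and $\Lbangd$, and for $\cutb$ and $\cutd$ — for instance $\weip{\natone}{\Pcutb{\ptthree}{\cone}{\pttwo}}=\foc{\cone}{\pttwo}\cdot\weip{\natone}{\ptthree}+\weip{\natone}{\pttwo}=\weip{\natone}{\Pcutd{\ptthree}{\cone}{\pttwo}}$. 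Combining this with the induction hypotheses — in the $\Lbangb/\Lbangd$ clause one uses the $\foc{\cdot}{\cdot}$ part of the hypothesis, and in the $\cutb/\cutd$ clause one uses it on the right-hand subterm, since $\weip{\natone}{\cdot}$ and $\dupf{\cdot}$ weigh everything in $\Pcutd{\lift{\ptthree}}{\cone}{\lift{\pttwo}}$ by $\foc{\cone}{\lift{\pttwo}}=\foc{\cone}{\pttwo}$ — closes these cases.

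The step I expect to require the most care is pinning down exactly what $\lift{\cdot}$ does to a $\cutb$-headed term: lifting $\Pcutb{\ptthree}{\cone}{\pttwo}$ forces $\cone$ to migrate from the auxiliary into the multiplexor context of the second subderivation, and one has to verify that re-aligning the auxiliary contexts of $\ptthree$ and $\pttwo$ so as to obtain a well-formed $\cutd$ instance can be achieved using only multiplexor-context Weakening and the $\cutb\mapsto\cutd$ relabelling, i.e.\ without ever altering either proof term. Once the construction in the proof of Lemma~\ref{liftlemma} confirms that these are indeed the only manipulations involved, the two claimed identities follow with no computation beyond the clause-by-clause comparison above, and the displayed lemma is immediate.
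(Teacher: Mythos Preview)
The paper states this lemma without proof, so there is no argument against which to compare yours in detail. Your approach --- a single structural induction on $\ptone$ that carries along the auxiliary invariant $\foc{\cfour}{\lift{\ptone}}=\foc{\cfour}{\ptone}$, and then exploits the fact that the defining clauses of $\foc{\cdot}{\cdot}$, $\dupf{\cdot}$ and $\weip{\natone}{\cdot}$ are literally identical for each of the pairs $\bemb/\bemd$, $\Lbangb/\Lbangd$, $\cutb/\cutd$ --- is exactly the natural one and is what the paper's surrounding treatment presupposes. In particular, your observation that the $\foc{\cdot}{\cdot}$ statement must be strengthened into the induction hypothesis (because it feeds into the $\Lbangb/\Lbangd$ clause of $\dupf{\cdot}$ and into the $\cutb/\cutd$ clause of $\weip{\natone}{\cdot}$) is the one non-obvious point, and you have identified it correctly.

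One caveat worth recording concerns the $\cutb$ case you flag at the end. Your proposed shape $\lift{\Pcutb{\ptthree}{\cone}{\pttwo}}=\Pcutd{\lift{\ptthree}}{\cone}{\lift{\pttwo}}$ is not quite right as stated: in the $\cutd$ rule the \emph{auxiliary} context of the left premise must equal the multiplexor context of the conclusion, so the left premise cannot be $\lift{\ptthree}$ (which has empty auxiliary) but must rather be $\ptthree$ itself, possibly with its auxiliary context enlarged. Fortunately this does not affect your argument: whatever proof term sits in the left slot --- $\ptthree$, $\lift{\ptthree}$, or either of them after a context adjustment that leaves the term syntactically unchanged --- the recursive values of $\foc{\cdot}{\cdot}$, $\dupf{\cdot}$ and $\weip{\natone}{\cdot}$ on it coincide by the induction hypothesis, so the clause-by-clause comparison still closes. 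It would tighten the write-up to check once and for all that the inductive construction of $\lift{\cdot}$ in Lemma~\ref{liftlemma} never alters a subterm beyond relabelling a $b$-rule as the corresponding $d$-rule, which is indeed the property your whole argument rests on.
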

}{}
Whenever a proof term $\ptone$ computationally reduces to $\pttwo$, the underlying
weight is guaranteed to strictly decrease:
\begin{proposition}\label{prop:creddecr}
If $\tyg{\conone}{\contwo}{\conthree}{\ptone}{\tcone}$ and
$\ptone\cred\pttwo$, then $\tyg{\confour}{\confive}{\conthree}{\pttwo}{\tcone}$
(where $\conone,\contwo=\confour,\confive$), $\dupf{\pttwo}\leq\dupf{\ptone}$ and 
$\wei{\pttwo}<\wei{\ptone}$.
\end{proposition}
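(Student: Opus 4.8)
The plan is to prove, by induction on the derivation of $\ptone\cred\pttwo$, a strengthening of the statement: besides type preservation (with the re-splitting $\conone,\contwo=\confour,\confive$), that $\foc{\cone}{\pttwo}\leq\foc{\cone}{\ptone}$ for \emph{every} channel $\cone$, that $\dupf{\pttwo}\leq\dupf{\ptone}$, and that $\weip{\natone}{\pttwo}<\weip{\natone}{\ptone}$ for \emph{every} $\natone\in\NN$. All three quantities are non-negative and built by structural recursion out of sums, maxima and multiplications by some $\foc{\cone}{\cdot}$, so inequalities of this shape propagate smoothly through the constructors. Two preliminary facts are needed: $\weip{\natone}{\ptone}$ is monotone non-decreasing in $\natone$ (a one-line induction, the only non-additive clause being $\weip{\natone}{\PRbang{\cone_1,\ldots,\cone_n}{\ptone}}=\natone\cdot(\weip{\natone}{\ptone}+1)$), and $\lift{\cdot}$ leaves $\foc{\cone}{\cdot}$, $\dupf{\cdot}$ and $\weip{\natone}{\cdot}$ unchanged — the last two by the lemma already stated, the first because $\lift{}$ only moves a channel from the auxiliary to the multiplexor context without touching occurrences. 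Type preservation and the context re-splitting are obtained by the same case analysis; they amount exactly to the principal cut-elimination steps and are proved as in the corresponding cases of Theorem~\ref{theo:subjred}, using Lemmas~\ref{weaklemma} and~\ref{liftlemma}. Granting the strengthening, $\wei{\pttwo}=\weip{\dupf{\pttwo}}{\pttwo}\leq\weip{\dupf{\ptone}}{\pttwo}<\weip{\dupf{\ptone}}{\ptone}=\wei{\ptone}$, the first step by $\dupf{\pttwo}\leq\dupf{\ptone}$ and monotonicity in the parameter, the second by the strengthening at parameter $\dupf{\ptone}$.

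For the base cases one computes on the eight rules of Figure~\ref{fig:comred}. The six multiplicative/additive rules only rearrange cuts or discard a subterm: $\foc{\cone}{\cdot}$ is preserved, or strictly drops when a $\with$- or $\plus$-branch disappears; $\dupf{\cdot}$ does not increase, being a maximum over the same set of $\Lbangb$/$\Lbangd$ occurrences with possibly one branch removed; and $\weip{\natone}{\cdot}$ strictly decreases, the left-minus-right weight being exactly $2$ for $(\cut/\Rten/\Lten)$ and $(\cut/\Llin/\Rlin)$ (the two consumed prefixes) and $2$ plus the weight of the discarded subterm in the $\with$/$\plus$ cases. The two exponential rules are the only ones that duplicate a subterm, the server $\ptone$; this is the delicate point — the weight is designed precisely to absorb such duplications, and everything hinges on $\foc{\cdot}{\cdot}$ and $\weip{\natone}{\cdot}$ being tuned so the weight does not grow. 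Concretely, a $\Pbemd$-node contributes one extra virtual occurrence of $\cone$ and one unit of weight, so in $(\cutd/-/\bemd)$ one has $\weip{\natone}{\Pcutd{\ptone}{\cone}{\Pbemd{\cone}{\ctwo}{\pttwo}}}=(\foc{\cone}{\pttwo}+1)\cdot\weip{\natone}{\ptone}+1+\weip{\natone}{\pttwo}$, while the reduct $\Pcut{\lift{\ptone}}{\ctwo}{\Pcutd{\ptone}{\cone}{\pttwo}}$ has weight $\weip{\natone}{\ptone}+\foc{\cone}{\pttwo}\cdot\weip{\natone}{\ptone}+\weip{\natone}{\pttwo}$ by $\lift{}$-invariance; the difference is exactly $1$. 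Rule $(\cutb/-/\bemb)$ is the same computation, with the simplification that the $\cone$ of a $\Pbemb$-node is fresh, so $\foc{\cone}{\Pbemb{\cone}{\ctwo}{\pttwo}}=1$ and the inner $\Pcutd{\cdot}{\cone}{\cdot}$ of the reduct is vacuous on $\cone$. Duplicating $\ptone$ is harmless for $\dupf{\cdot}$ (a maximum) and for $\foc{\scone}{\cdot}$: on each side the coefficient with which $\ptone$ enters equals the number of virtual occurrences of $\cone$ on that side, and a short computation shows the two sides agree. I expect these two exponential cases to be the only real obstacle; the rest is routine arithmetic.

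For the inductive cases one uses that $\cred$ is closed only under the proof-term counterparts of the $\pi$-calculus reduction contexts: one never reduces under an input or a replication prefix, so a buried redex can sit only inside either argument of a $\cut$, inside the second argument of a $\cutb$ or a $\cutd$ (never under the replicated server), or inside one of the transparent left-rules $\Lone$, $\Lbangb$, $\Lbangd$, whose underlying process is that of the premise. For $\cut$ and the transparent rules the three quantities are sums or maxima of those of the immediate subterms, so the induction hypothesis transfers verbatim, the strict decrease of $\weip{\natone}{\cdot}$ surviving because in a sum one summand strictly decreases while the others stay fixed; the $\Lbangb$/$\Lbangd$ sub-cases additionally need $\foc{\cone}{\cdot}$ to be non-increasing, which is why it was put into the strengthening. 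For $\Pcutb{\ptone}{\cone}{\pttwo}$ and $\Pcutd{\ptone}{\cone}{\pttwo}$ with $\pttwo\cred\pttwo'$, from $\weip{\natone}{\Pcutd{\ptone}{\cone}{\pttwo}}=\foc{\cone}{\pttwo}\cdot\weip{\natone}{\ptone}+\weip{\natone}{\pttwo}$ and the induction hypothesis $\foc{\cone}{\pttwo'}\leq\foc{\cone}{\pttwo}$, $\weip{\natone}{\pttwo'}<\weip{\natone}{\pttwo}$ one gets $\foc{\cone}{\pttwo'}\cdot\weip{\natone}{\ptone}\leq\foc{\cone}{\pttwo}\cdot\weip{\natone}{\ptone}$, so the total strictly decreases; $\dupf{\cdot}$ and the other $\foc{\scone}{\cdot}$ are handled in the same way. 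This exhausts the cases and closes the induction.
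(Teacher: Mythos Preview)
Your proposal is correct and, for the eight base rules of $\cred$, carries out essentially the same computations as the paper: direct evaluation of $\dupf{\cdot}$ and $\weip{\natone}{\cdot}$ on each side, with the exponential cases $(\cutb/-/\bemb)$ and $(\cutd/-/\bemd)$ being the only delicate ones, and the difference in weight coming out to exactly~$1$ there, as you found.

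Where you go beyond the paper is in treating the congruence cases explicitly, and in strengthening the induction hypothesis to do so. The paper states the proof is ``by induction on the proof that $\ptone\cred\pttwo$'' but only exhibits base cases. Your observation that the bare statement ($\dupf{\pttwo}\leq\dupf{\ptone}$ and $\wei{\pttwo}<\wei{\ptone}$) does not propagate through a context like $\Pcutd{\ptthree}{\cone}{-}$ or $\PLbangb{\cone}{-}$ is correct: one genuinely needs $\foc{\cone}{\pttwo}\leq\foc{\cone}{\ptone}$ for the multiplicative coefficient in $\weip{\natone}{\Pcutd{\ptthree}{\cone}{-}}$ and for the $\max\{\dupf{\cdot},\foc{\cone}{\cdot}\}$ clause in $\dupf{\PLbangb{\cone}{-}}$, and one needs the decrease of $\weip{\natone}{\cdot}$ to hold at \emph{every} parameter $\natone$ (together with monotonicity of $\weip{\natone}{\cdot}$ in $\natone$) because the ambient duplicability factor $\dupf{\Pcutd{\ptthree}{\cone}{\ptfour}}$ is generally larger than $\dupf{\ptfour}$. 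Your strengthening is exactly what is required, and your argument that the resulting invariants are all built from sums, maxima and $\foc{\cdot}{\cdot}$-weighted sums, hence propagate, is sound. In short: same idea, but your version actually closes the induction that the paper leaves implicit.
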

\begin{proof}
By induction on the proof that $\ptone\cred\pttwo$. Some interesting cases:
\begin{varitemize}
\item
  Suppose that 
  $
   \ptone=\Pcut{\PRlin{\ctwo}{\ptthree}}{\cone}{\PLlin{\cone}{\ptfour}{\cone}{\ptfive}}\cred\Pcut{\Pcut{\ptfour}{\ctwo}{\ptthree}}{\cone}{\ptfive}=\pttwo
  $.
  Then,
  \begin{align*}
    \dupf{\ptone}&=\max\{\dupf{\ptthree},\dupf{\ptfour},\dupf{\ptfive}\}=\dupf{\pttwo};\\
    \wei{\ptone}&=\weip{\dupf{\ptone}}{\ptone}=3+\weip{\dupf{\ptone}}{\ptthree}+\weip{\dupf{\ptone}}{\ptfour}+
          \weip{\dupf{\ptone}}{\ptfive}\\
                &>2+\weip{\dupf{\pttwo}}{\ptthree}+\weip{\dupf{\pttwo}}{\ptfour}+\weip{\dupf{\pttwo}}{\ptfive}
                 =\weip{\dupf{\pttwo}}{\pttwo}=\wei{\pttwo}.
  \end{align*}
\item
  Suppose that
  $
  \ptone=\Pcut{\PRwith{\ptthree}{\ptfour}}{\cone}{\PLwithone{\cone}{\ctwo}{\ptfive}}\cred\Pcut{\ptthree}{\cone}{\ptfive}=\pttwo
  $.
  Then,
  \begin{align*}
    \dupf{\ptone}&=\max\{\dupf{\ptthree},\dupf{\ptfour},\dupf{\ptfive}\}=\dupf{\pttwo};\\
    \wei{\ptone}&=\weip{\dupf{\ptone}}{\ptone}=3+\weip{\dupf{\ptone}}{\ptthree}+\weip{\dupf{\ptone}}{\ptfour}+
           \weip{\dupf{\ptone}}{\ptfive}\\
                &>2+\weip{\dupf{\pttwo}}{\ptthree}+\weip{\dupf{\pttwo}}{\ptfour}+\weip{\dupf{\pttwo}}{\ptfive}
                =\weip{\dupf{\pttwo}}{\pttwo}=\wei{\pttwo}.
  \end{align*}
\item
  Suppose that
    $
    \ptone=\Pcutb{\ptthree}{\cone}{\Pbemb{\cone}{\ctwo}{\ptfour}}\cred\Pcut{\lift{\ptthree}}{\ctwo}{\Pcutd{\ptthree}{\cone}{\lift{\ptfour}}}=\pttwo
    $.  
  Then, 
  \begin{align*}
    \dupf{\ptone}&=\max\{\dupf{\lift{\ptthree}},\dupf{\lift{\ptfour}}\}
        =\max\{\dupf{\ptthree},\dupf{\ptthree},\dupf{\ptfour}\}=\dupf{\pttwo};\\
    \wei{\ptone}&=\weip{\dupf{\ptone}}{\ptone}=\foc{\cone}{\Pbemb{\cone}{\ctwo}{\ptfour}}\cdot
                \weip{\dupf{\ptone}}{\lift{\ptthree}}+\weip{\dupf{\ptone}}{\Pbemb{\cone}{\ctwo}{\ptfour}}\\
                &=\weip{\dupf{\ptone}}{\ptthree}+\weip{\dupf{\ptone}}{\Pbemb{\cone}{\ctwo}{\ptfour}}
                =\weip{\dupf{\ptone}}{\ptthree}+1+\weip{\dupf{\ptone}}{\ptfour}\\
                &\geq\weip{\dupf{\pttwo}}{\ptthree}+1+\weip{\dupf{\pttwo}}{\ptfour}\\
                &>\weip{\dupf{\pttwo}}{\ptthree}+\weip{\dupf{\pttwo}}{\ptfour}
                =\weip{\dupf{\pttwo}}{\ptthree}+0\cdot\weip{\dupf{\pttwo}}{\ptthree}+\weip{\dupf{\pttwo}}{\ptfour}\\
                &=\weip{\dupf{\pttwo}}{\ptthree}+\foc{\cone}{\ptfour}\cdot\weip{\dupf{\pttwo}}{\ptthree}+\weip{\dupf{\pttwo}}{\ptfour}\\
                &=\weip{\dupf{\pttwo}}{\pttwo}=\wei{\pttwo}.
  \end{align*}
\item
  Suppose that
    $$
      \ptone=\Pcutd{\ptthree}{\cone}{\Pbemd{\cone}{\ctwo}{\ptfour}}\cred
        \Pcut{\lift{\ptthree}}{\ctwo}{\Pcutd{\ptthree}{\cone}{\ptfour}}=\pttwo.
    $$
  Then we can proceed exactly as in the previous case.
\end{varitemize}
This concludes the proof.
\end{proof}
Shift reduction, on the other hand, is \emph{not} guaranteed to induce a strict decrease on the underlying
weight which, however, cannot increase:
\begin{proposition}\label{prop:cpreddecr}
If $\tyg{\conone}{\contwo}{\conthree}{\ptone}{\tcone}$ and
$\ptone\cpred\pttwo$, then $\tyg{\conone}{\contwo}{\conthree}{\pttwo}{\tcone}$,
$\dupf{\pttwo}\leq\dupf{\ptone}$ and $\wei{\pttwo}\leq\wei{\ptone}$.
\end{proposition}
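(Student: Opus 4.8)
The plan is to follow the pattern of the proof of Proposition~\ref{prop:creddecr}: since $\cpred$ is generated by the two shift rules of Figure~\ref{fig:shiftred}, I would argue by induction on the derivation of $\ptone\cpred\pttwo$. The congruence (context) cases are immediate from the inductive hypothesis together with the monotonicity of $\dupf{\cdot}$, of $\foc{\cone}{\cdot}$, and of each $\weip{\natone}{\cdot}$ with respect to subterms, plus the monotonicity of $\weip{\natone}{\ptone}$ in the parameter $\natone$ (used, exactly as in the proof of Proposition~\ref{prop:creddecr}, to pass from a bound at parameter $\dupf{\ptone}$ to one at $\dupf{\pttwo}\leq\dupf{\ptone}$). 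Both shift rules have the same right-hand side and differ only in whether the lifted channel originally sits in the auxiliary or in the multiplexor context, so there is essentially a single base computation to perform.

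Take $\ptone=\Pcut{\PRbang{\ptthree}{\cone_1,\ldots,\cone_n}}{\cone}{\PLbangb{\cone}{\ptfour}}\cpred\PLbangb{\cone_1}{\ldots\PLbangb{\cone_n}{\Pcutd{\ptthree}{\ctwo}{\ptfour}}\ldots}=\pttwo$. I would first recover a typing derivation for $\pttwo$ by inverting the two hypotheses: the left premise of the $\cut$ is an $\Rbang$, so $\ptthree$ is typable with auxiliary context $\cone_1:\typone_1,\ldots,\cone_n:\typone_n$ and empty multiplexor and linear contexts, while $\bang\conone$ appears in the linear context of $\ptone$; the right premise is a $\Lbangb$ on $\cone$. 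A $\cutd$ of $\ptthree$ against $\ptfour$, followed by $n$ applications of $\Lbangb$, reassembles a derivation with the same conclusion as $\ptone$. The one arithmetical fact this relies on is that a channel occurring in an auxiliary context is used exactly once virtually: if $\cfour$ is in the auxiliary context of a typable proof term $\pttwo'$, then $\foc{\cfour}{\pttwo'}=1$. This is an easy induction on $\pttwo'$ exploiting that the auxiliary context is split multiplicatively by every rule (cf. the remark in Section~\ref{sect:btoe} and the footnote to Figure~\ref{fig:pidslltr}), and it is the single place where the soft discipline is essential. It gives $\foc{\cone_i}{\ptthree}=1$ and $\foc{\ctwo}{\ptfour}=1$ (the channel lifted by $\Lbangb$ in $\ptone$ occurs once in $\ptfour$, up to the renaming the rule performs), while $\cone_i$ does not occur in $\ptfour$, so $\foc{\cone_i}{\ptfour}=0$.

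With these identities the two numerical claims are direct. For the duplicability factor, $\dupf{\ptone}=\max\{\dupf{\ptthree},\dupf{\ptfour},\foc{\cone}{\ptfour}\}=\max\{\dupf{\ptthree},\dupf{\ptfour},1\}$, and iterating the $\Lbangb$-clause for $\dupf{\cdot}$ together with $\foc{\cone_i}{\Pcutd{\ptthree}{\ctwo}{\ptfour}}=\foc{\ctwo}{\ptfour}\cdot\foc{\cone_i}{\ptthree}+\foc{\cone_i}{\ptfour}=1$ gives $\dupf{\pttwo}=\max\{\dupf{\ptthree},\dupf{\ptfour},1\}$ as well, hence $\dupf{\pttwo}=\dupf{\ptone}=:\natone$. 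For the weight, $\weip{\natone}{\ptone}=\natone\cdot(\weip{\natone}{\ptthree}+1)+\weip{\natone}{\ptfour}$, whereas, since $\Lbangb$ leaves the weight unchanged, $\weip{\natone}{\pttwo}=\weip{\natone}{\Pcutd{\ptthree}{\ctwo}{\ptfour}}=\foc{\ctwo}{\ptfour}\cdot\weip{\natone}{\ptthree}+\weip{\natone}{\ptfour}=\weip{\natone}{\ptthree}+\weip{\natone}{\ptfour}$; as $\natone\geq1$ this yields $\wei{\pttwo}\leq\wei{\ptone}$ (in fact strictly, for this base rule). The rule $(\cut/\Rbang/\Lbangd)$ is treated in the same way. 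The main obstacle I foresee is not the arithmetic but the bookkeeping in the inversion step — getting the contexts and the renaming of the box parameter to line up so that the reassembled $\cutd$ is well formed — and, relatedly, isolating and proving the auxiliary ``used exactly once'' lemma on which the whole estimate rests.
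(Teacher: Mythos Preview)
Your overall plan---induction on the derivation of $\ptone\cpred\pttwo$, with the two shift rules as base cases---is exactly the paper's. The gap is in the second base case. Your estimate for $\weip{\natone}{\pttwo}$ hinges on $\foc{\ctwo}{\ptfour}\leq 1$, which you justify via the ``auxiliary channels are used at most once'' observation. That observation is correct and indeed appears in the paper as Lemma~\ref{lemma:foc}, but it applies only when the channel sits in the \emph{auxiliary} context. In $(\cut/\Rbang/\Lbangd)$ the rule $\Lbangd$ places $\cone$ in the \emph{multiplexor} context of $\ptfour$, where $\foc{\cone}{\ptfour}$ can be as large as $\size{\ptfour}$. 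So the sentence ``The rule $(\cut/\Rbang/\Lbangd)$ is treated in the same way'' does not go through: the key identity $\foc{\ctwo}{\ptfour}=1$ is simply false there, and with it your bound $\weip{\natone}{\pttwo}=\weip{\natone}{\ptthree}+\weip{\natone}{\ptfour}$ collapses.

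The repair is shorter than your original route and handles both rules uniformly, with no auxiliary lemma. By definition, $\dupf{\PLbangb{\cone}{\ptfour}}=\dupf{\PLbangd{\cone}{\ptfour}}=\max\{\dupf{\ptfour},\foc{\cone}{\ptfour}\}$, so in either case $\dupf{\ptone}\geq\foc{\cone}{\ptfour}$. Writing $\natone=\dupf{\ptone}$, this gives
\[
\weip{\natone}{\ptone}=\natone\cdot(\weip{\natone}{\ptthree}+1)+\weip{\natone}{\ptfour}
\geq\foc{\cone}{\ptfour}\cdot\weip{\natone}{\ptthree}+\weip{\natone}{\ptfour}
=\weip{\natone}{\pttwo},
\]
and then $\wei{\pttwo}=\weip{\dupf{\pttwo}}{\pttwo}\leq\weip{\natone}{\pttwo}$ by monotonicity in the parameter. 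This is precisely the paper's computation. A minor related point: your claimed equality $\dupf{\pttwo}=\dupf{\ptone}$ need not hold (one only gets $\leq$, since the $\foc{\cone_i}{\cdot}$ contributions on the $\pttwo$ side are bounded by $\foc{\cone}{\ptfour}$, not equal to it); but you already invoke monotonicity in $\natone$, so this does not affect the conclusion.
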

\begin{proof}
By induction on the proof that $\ptone\cpred\pttwo$. Some interesting cases:
\begin{varitemize}
\item
  Suppose that
    $$
      \ptone=\Pcut{\PRbang{\cone_1,\ldots,\cone_n}{\ptthree}}{\cone}{\PLbangb{\cone}{\ptfour}}\cpred
          \PLbangb{\cone_1}{\PLbangb{\cone_2}{\ldots\PLbangb{\cone_n}{\Pcutb{\ptthree}{\ctwo}{\ptfour}}}}=\pttwo.
    $$
  Then,
  \begin{align*}
    \dupf{\ptone}&=\max\{\dupf{\ptthree},\dupf{\ptfour}\}=\dupf{\pttwo}\\
    \wei{\ptone}&=\weip{\dupf{\ptone}}{\ptone}=\dupf{\ptone}\cdot\weip{\dupf{\ptone}}{\ptthree}+
                  \weip{\dupf{\ptone}}{\ptfour}\geq\foc{\ctwo}{\ptfour}\cdot\weip{\dupf{\ptone}}{\ptthree}+\weip{\dupf{\ptone}}{\ptfour}\\
                &=\foc{\ctwo}{\ptfour}\cdot\weip{\dupf{\pttwo}}{\ptthree}+\weip{\dupf{\pttwo}}{\ptfour}=\weip{\dupf{\pttwo}}{\pttwo}=\wei{\pttwo}.
  \end{align*}
\item
  Suppose that 
    $$
      \ptone=\Pcut{\PRbang{\cone_1,\ldots,\cone_n}{\ptthree}}{\cone}{\PLbangd{\cone}{\ptfour}}\cpred
          \PLbangd{\cone_1}{\PLbangd{\cone_2}{\ldots\PLbangd{\cone_n}{\Pcutd{\ptthree}{\ctwo}{\ptfour}}}}=\pttwo.
    $$
  Then we can proceed as in the previous case.
\end{varitemize}
This concludes the proof.
\end{proof}
Finally, equivalence leaves the weight unchanged:
\begin{proposition}\label{prop:cequun}
If $\tyg{\conone}{\contwo}{\conthree}{\ptone}{\tcone}$ and
$\ptone\cequ\pttwo$, then $\tyg{\conone}{\contwo}{\conthree}{\pttwo}{\tcone}$,
$\dupf{\pttwo}=\dupf{\ptone}$ and $\wei{\pttwo}=\wei{\ptone}$.
\end{proposition}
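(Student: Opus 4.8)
The plan is to prove, by induction on the derivation of $\ptone\cequ\pttwo$, a strengthening of the statement in which the clause $\wei{\pttwo}=\wei{\ptone}$ is replaced by ``$\weip{\natone}{\pttwo}=\weip{\natone}{\ptone}$ for \emph{every} $\natone\in\NN$'', keeping the claims about typing and about $\dupf{}$ as they are. The conclusion about $\wei{}$ then follows at once, since $\wei{\ptone}=\weip{\dupf{\ptone}}{\ptone}=\weip{\dupf{\pttwo}}{\pttwo}=\wei{\pttwo}$. This strengthening is also what makes the induction go through the compatible-closure cases: when a subterm is replaced inside a context built from cut-like constructors, $\weip{\natone}{}$ has to be evaluated at a parameter $\natone$ that may exceed the duplicability factor of the replaced subterm, so knowing only the weight $\wei{}$ of the subterm would not suffice.

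Since $\cequ$ is the congruence generated by a finite set of equations, the induction has four kinds of case. Reflexivity is trivial; symmetry is trivial because the property to be proved is itself symmetric in $\ptone$ and $\pttwo$; transitivity follows by composing the two instances supplied by the induction hypothesis. For the compatible-closure cases, in which a subterm $\ptthree$ of $\ptone$ is replaced by an equivalent $\ptfour$, typing is preserved because a subderivation may always be replaced by one with the same conclusion, while $\dupf{}$ and $\weip{\natone}{}$ are preserved because both are defined by structural recursion: the value at the whole proof term is a fixed function of the values at the immediate subterms (for a fixed $\natone$, in the case of $\weip{\natone}{}$), which the induction hypothesis leaves unchanged. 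What remains are the base cases, one per defining equation of $\cequ$, i.e.\ the structural conversions, the strong bisimilarities, and the commuting conversions. For each of them, type preservation is exactly the logical soundness of the corresponding permutation of a cut past a left rule or past another cut; this I would prove precisely as for \pidill\ in~\cite{Caires10}, the only additional work being the bookkeeping of the auxiliary and multiplexor exponential contexts, for which the Weakening Lemma~\ref{weaklemma} and the Lifting Lemma~\ref{liftlemma} suffice. For the two numerical invariants I would simply expand both sides of each equation with the recursive clauses for $\dupf{}$ and $\weip{\natone}{}$ and check that they coincide. The representative case is the strong bisimilarity $(\cutd/-/\cut)$,
$$
\Pcutd{\ptone}{\cone}{\Pcut{\pttwo}{\ctwo}{\ptthree}}\ \cequ\ \Pcut{\Pcutd{\ptone}{\cone}{\pttwo}}{\ctwo}{\Pcutd{\ptone}{\cone}{\ptthree}},
$$
in which the argument $\ptone$ of the exponential cut is duplicated. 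Using the additivity $\foc{\cone}{\Pcut{\pttwo}{\ctwo}{\ptthree}}=\foc{\cone}{\pttwo}+\foc{\cone}{\ptthree}$, the left-hand side has weight $(\foc{\cone}{\pttwo}+\foc{\cone}{\ptthree})\cdot\weip{\natone}{\ptone}+\weip{\natone}{\pttwo}+\weip{\natone}{\ptthree}$, while expanding the right-hand side gives $\foc{\cone}{\pttwo}\cdot\weip{\natone}{\ptone}+\weip{\natone}{\pttwo}+\foc{\cone}{\ptthree}\cdot\weip{\natone}{\ptone}+\weip{\natone}{\ptthree}$, the same value; likewise $\dupf{}$ evaluates to $\max\{\dupf{\ptone},\dupf{\pttwo},\dupf{\ptthree}\}$ on both sides. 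The other structural conversions and commuting conversions are analogous but easier, several of them using a side condition (such as $\ctwo\notin FV(\widehat{\ptthree})$) in order to kill a summand so that the two sides coincide.

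The hard part will be this family of strong bisimilarities, where an exponential cut ($\cutd$ or $\cutb$) is pushed through a binary rule and its left argument is copied: there the equality of weights is not evident a priori, and it is the point of the definition of $\weip{\natone}{}$ (which weights each subterm by a virtual occurrence count) that duplicating $\ptone$ into two copies, each charged the factor of its own host, costs exactly as much as the single copy charged the factor of the combined host. Concretely this rests on identities of the form $\foc{\cone}{\Pcut{\pttwo}{\ctwo}{\ptthree}}=\foc{\cone}{\pttwo}+\foc{\cone}{\ptthree}$ and their analogues for the other binary rules, i.e.\ on the additivity of the virtual occurrence count of $\cone$ over the rule the cut is permuted across. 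A secondary point of care is the group of commuting conversions for $\Lbangb$ and $\Lbangd$: since $\dupf{}$ at such a node reads off the virtual occurrence count of the lifted channel, one must verify that this number is undisturbed by the permutation, which again follows from the recursive clauses defining $\foc{\cone}{\ptone}$. Once all these cases are checked the proposition follows, and together with Propositions~\ref{prop:creddecr} and~\ref{prop:cpreddecr} it completes the picture of how the weight behaves under $\cpredequ\cred\cpredequ$.
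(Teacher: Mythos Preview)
Your proposal is correct and follows essentially the same route as the paper: induction on the derivation of $\ptone\cequ\pttwo$, verifying the base equations by unfolding the recursive clauses for $\dupf{\cdot}$ and $\weip{\natone}{\cdot}$; your sample computation for $(\cutd/-/\cut)$ matches the paper's fourth displayed case verbatim. The one refinement you add is the strengthening of the inductive invariant to ``$\weip{\natone}{\pttwo}=\weip{\natone}{\ptone}$ for \emph{every} $\natone$'': the paper only spells out base cases, working at the fixed parameter $\dupf{\ptone}=\dupf{\pttwo}$, and leaves the congruence-closure step implicit, whereas your formulation is exactly what is needed to carry the induction through a surrounding context whose duplicability factor may exceed that of the replaced subterm.
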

\begin{proof}
By induction on the proof that $\ptone\cequ\pttwo$. Some interesting cases:
\begin{varitemize}
\item
  Suppose that
  $$
  \ptone=\Pcut{\ptthree}{\cone}{\Pcut{\ptfour_\cone}{\ctwo}{\ptfive_\ctwo}}\cequ\Pcut{\Pcut{\ptthree}{\cone}{\ptfour_\cone}}{\ctwo}{\ptfive_{\ctwo}}=\pttwo.
  $$  
  Then:
  \begin{align*}
    \dupf{\ptone}&=\max\{\dupf{\ptthree},\dupf{\ptfour_\cone},\dupf{\ptfive_\ctwo}\}=\dupf{\pttwo}\\
    \wei{\ptone}&=\weip{\dupf{\ptone}}{\ptone}=\weip{\dupf{\ptone}}{\ptthree}+\weip{\dupf{\ptone}}{\ptfour_\cone}+\weip{\dupf{\ptone}}{\ptfive_\ctwo}\\
        &=\weip{\dupf{\pttwo}}{\ptthree}+\weip{\dupf{\pttwo}}{\ptfour_\cone}+\weip{\dupf{\pttwo}}{\ptfive_\ctwo}=\weip{\dupf{\pttwo}}{\pttwo}=\wei{\pttwo}.
  \end{align*}
\item 
  Suppose that
  $$
  \ptone=\Pcut{\ptthree}{\cone}{\Pcut{\ptfour}{\ctwo}{\ptfive_{\cone\ctwo}}}\cequ\Pcut{\ptfour}{\cone}{\Pcut{\ptthree}{\ctwo}{\ptfive_{\cone\ctwo}}}=\pttwo.
  $$  
  Then we can proceed as in the previous case.
\item
  Suppose that
  $$
  \ptone=\Pcut{\ptthree}{\cone}{\Pcutb{\ptfour}{\ctwo}{\ptfive_{\cone\ctwo}}}\cequ\Pcutb{\ptfour}{\ctwo}{\Pcut{\ptthree}{\cone}{\ptfive_{\cone\ctwo}}}=\pttwo.
  $$
  Then, since $\foc{\ctwo}{\ptthree}=0$,
  \begin{align*}
    \dupf{\ptone}&=\max\{\dupf{\ptthree},\dupf{\ptfour},\dupf{\ptfive_{\cone\ctwo}}\}=\dupf{\pttwo}\\
    \wei{\ptone}&=\weip{\dupf{\ptone}}{\ptone}=\weip{\dupf{\ptone}}{\ptthree}+\foc{\ctwo}{\ptfive_{\cone\ctwo}}
                   \cdot\weip{\dupf{\ptone}}{\ptfour}+\weip{\dupf{\ptone}}{\ptfive_{\cone\ctwo}}\\
        &=\weip{\dupf{\ptone}}{\ptthree}+\foc{\ctwo}{\Pcut{\ptthree}{\cone}{\ptfive_{\cone\ctwo}}}
                   \cdot\weip{\dupf{\ptone}}{\ptfour}+\weip{\dupf{\ptone}}{\ptfive_{\cone\ctwo}}\\
        &=\weip{\dupf{\pttwo}}{\ptthree}+\foc{\ctwo}{\Pcut{\ptthree}{\cone}{\ptfive_{\cone\ctwo}}}
                   \cdot\weip{\dupf{\pttwo}}{\ptfour}+\weip{\dupf{\pttwo}}{\ptfive_{\cone\ctwo}}\\
        &=\weip{\dupf{\pttwo}}{\pttwo}=\wei{\pttwo}.
  \end{align*}
\item
  Suppose that
  $$
  \ptone=\Pcutd{\ptthree}{\cone}{\Pcut{\ptfour_{\cone}}{\ctwo}{\ptfive_{\cone \ctwo}}}\cequ
    \Pcut{\Pcutd{\ptthree}{\cone}{\ptfour_{\cone}}}{\ctwo}{\Pcutd{\ptthree}{\cone}{\ptfive_{\cone \ctwo}}}=\pttwo.
  $$
  Then,
  \begin{align*}
    \dupf{\ptone}&=\max\{\dupf{\ptthree},\dupf{\ptfour_\cone},\dupf{\ptfive_{\cone\ctwo}}\}=\dupf{\pttwo}\\
    \wei{\ptone}&=\foc{\cone}{\Pcut{\ptfour_{\cone}}{\ctwo}{\ptfive_{\cone \ctwo}}}\cdot\weip{\dupf{\ptone}}{\ptthree}+
      \weip{\dupf{\ptone}}{\ptfour_\cone}+\weip{\dupf{\ptone}}{\ptfive_{\cone\ctwo}}\\
       &=(\foc{\cone}{\ptfour_{\cone}}+\foc{\cone}{\ptfive_{\cone\ctwo}})\cdot\weip{\dupf{\ptone}}{\ptthree}+
      \weip{\dupf{\ptone}}{\ptfour_\cone}+\weip{\dupf{\ptone}}{\ptfive_{\cone\ctwo}}\\
       &=(\foc{\cone}{\ptfour_{\cone}}\cdot\weip{\dupf{\ptone}}{\ptthree}+\foc{\cone}{\ptfive_{\cone\ctwo}})\cdot
        \weip{\dupf{\ptone}}{\ptthree}+\weip{\dupf{\ptone}}{\ptfour_\cone}+\weip{\dupf{\ptone}}{\ptfive_{\cone\ctwo}}\\
       &=\weip{\dupf{\ptone}}{\Pcutd{\ptthree}{\cone}{\ptfour_{\cone}}}+\weip{\dupf{\ptone}}{\Pcutd{\ptthree}{\cone}{\ptfive_{\cone \ctwo}}}\\
       &=\weip{\dupf{\ptone}}{\pttwo}=\weip{\dupf{\pttwo}}{\pttwo}=\wei{\pttwo}.
  \end{align*}
\end{varitemize}
This concludes the proof.
\end{proof}
Now, consider again the subject reduction theorem (Theorem~\ref{theo:subjred}): what it guarantees is that
whenever $\procone\reds\proctwo$ and $\widehat{\ptone}=\procone$, there is $\pttwo$ with
$\widehat{\pttwo}=\proctwo$ and $\ptone\cpredequ\cred\cpredequ\pttwo$. In view of the three propositions
we have just stated and proved, it's clear that $\wei{\ptone}>\pttwo$. Altogether, this implies that
$\wei{\ptone}$ is an upper bound on the number or internal reduction steps $\widehat{\ptone}$ can
perform. But is $\wei{\ptone}$ itself bounded?
%%%%%%%%%%%%%%%%%%%%%%%%%%%%%%%%%%%%%%%%%%%%%%%%%%%%%%%%%%%%%
\subsection{Bounding the Weight}
%%%%%%%%%%%%%%%%%%%%%%%%%%%%%%%%%%%%%%%%%%%%%%%%%%%%%%%%%%%%%
What kind of bounds can we expect to prove for $\wei{\ptone}$?
More specifically, how related are $\wei{\ptone}$ and $\size{\widehat{\ptone}}$?
\condinc{
\begin{lemma}\label{lemma:foc}
Suppose $\tyg{\conone}{\contwo}{\conthree}{\ptone}{\tcone}$. Then
\begin{varenumerate}
\item
  If $\cone\in\conone$, then $\foc{\cone}{\ptone}\leq 1$;
\item
  If $\cone\in\contwo$, then $\foc{\cone}{\ptone}\leq\size{\ptone}$;
\item
  If $\cone\in\conthree$, then $\foc{\cone}{\ptone}=0$;
%\item
%  If $\cone\in\confour$, then $\foc{\cone}{\ptone}=0$.
\end{varenumerate}
\end{lemma}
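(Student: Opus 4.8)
The plan is to establish all three assertions simultaneously by induction on the structure of $\ptone$ (equivalently, on the last rule of the type derivation), using the third one in the slightly stronger form: if $\cfour$ does not occur (virtually) in any zone of the derivation --- in particular if it is fresh, or if it lives in the linear context $\conthree$ --- then $\foc{\cfour}{\ptone}=0$. The three statements genuinely depend on one another: the (generalized) third controls the cross terms produced by every rule that splits a context, while the first is what tames the product $\foc{\cone}{\pttwo}\cdot\foc{\cfour}{\ptone}$ appearing in the $\foc$-clause of the two exponential cuts. To make the second assertion go through at those cuts I also expect to carry along, as part of the first, the ``budget'' refinement $\sum_{\cfour\in\conone}\foc{\cfour}{\ptone}\le\size{\ptone}$: every auxiliary channel with nonzero virtual count is introduced by some $\bemb$ or $\bemd$ occurrence, and each such occurrence already costs at least $1$ in $\size{\ptone}$.

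For the rules whose $\foc$-clause is a copy or a sum --- $\Lone,\Lten,\Rten,\Llin,\Rlin,\Lplus,\Rplusone,\Rplustwo,\Lwithone,\Lwithtwo,\Rwith,\Lbangb,\Lbangd,\Rbang$ and the linear cut $\cut$ --- the step is routine from the induction hypotheses together with the basic fact that the three zones of a judgment have pairwise disjoint domains. In the unary cases the relevant zone is transported verbatim and $\size$ does not shrink. In the context-splitting cases ($\Rten$, $\Llin$, $\cut$) the auxiliary and linear zones of the conclusion are disjoint unions of those of the premises, while the multiplexor zone is shared; hence a channel queried in the auxiliary or linear zone occurs in at most one premise, so one summand of the $\foc$-clause vanishes by the generalized third assertion, keeping the first and third, whereas a channel in the multiplexor zone may occur in both premises, but then the second assertion still holds because $\size$ of the conclusion is at least the sum of the premises' sizes (and the binary additive rules $\Lplus$, $\Rwith$ share all three zones, so the count combines against that $+1$ in $\size$). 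In $\Lbangb$/$\Lbangd$ the lifted channel leaves the auxiliary zone, so in the conclusion it falls under the third assertion. The base cases $\Rone$, $\bemb$, $\bemd$ are immediate, since there $\foc{\cfour}{\cdot}\in\{0,1\}$ and the value is $1$ exactly for the newly created exponential channel.

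The heart is $\cutd$ and $\cutb$, whose clause reads $\foc{\cfour}{\Pcutd{\ptone}{\cone}{\pttwo}}=\foc{\cone}{\pttwo}\cdot\foc{\cfour}{\ptone}+\foc{\cfour}{\pttwo}$, identically for $\cutb$. I would case-split on where $\cfour$ lives in the conclusion and exploit the shape of the premises: in $\cutd$ the left premise is $\tyg{\contwo}{\emcon}{\emcon}{\ptone}{\cone:A}$, so the auxiliary context of $\ptone$ is exactly the conclusion's multiplexor context $\contwo$ (and in $\cutb$ the left premise has empty linear and multiplexor contexts). If $\cfour\in\conone$, the conclusion's auxiliary context, then by disjointness $\cfour\notin\contwo$, so $\foc{\cfour}{\ptone}=0$ by the generalized third assertion, the product vanishes, and $\foc{\cfour}{\Pcutd{\ptone}{\cone}{\pttwo}}=\foc{\cfour}{\pttwo}\le 1$ by the first assertion on $\pttwo$; the third assertion for the conclusion is the same computation. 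The crucial case is the second assertion with $\cfour\in\contwo$: now $\cfour$ is auxiliary in $\ptone$, so $\foc{\cfour}{\ptone}\le 1$ by the first assertion, and the clause collapses to $\foc{\cfour}{\Pcutd{\ptone}{\cone}{\pttwo}}\le\foc{\cone}{\pttwo}+\foc{\cfour}{\pttwo}$; one then closes the estimate against $\size{\ptone}+\size{\pttwo}+1$ using the budget refinement --- when the multiplier is $1$ the introduction of $\cfour$ in $\ptone$ gives $\size{\ptone}\ge 1$, and $\foc{\cone}{\pttwo}$ and $\foc{\cfour}{\pttwo}$, being virtual counts of two distinct multiplexor channels of $\pttwo$, share the size budget of $\pttwo$ rather than each exhausting it.

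The step I expect to be the main obstacle is exactly this: controlling the product $\foc{\cone}{\pttwo}\cdot\foc{\cfour}{\ptone}$ in the second assertion for $\cutd$ and $\cutb$. The naive bound $\foc{\cone}{\pttwo}\cdot\foc{\cfour}{\ptone}+\foc{\cfour}{\pttwo}\le\size{\pttwo}\cdot 1+\size{\pttwo}$ is not dominated by $\size{\Pcutd{\ptone}{\cone}{\pttwo}}=\size{\ptone}+\size{\pttwo}+1$, so the bare induction hypothesis for the second assertion is too weak; the fix has to be a simultaneous, strengthened invariant (the $\le 1$ bound on auxiliary channels, a summed ``budget'' version, and perhaps a pairwise/$k$-ary version for the multiplexor zone), and pinning down the exact form of this invariant so that it is genuinely preserved by the exponential cuts --- so that the copy of $\ptone$ conceptually duplicated by the cut is paid for out of $\size{\ptone}$ itself --- is the delicate point. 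Everything else is a mechanical traversal of Figure~\ref{fig:pidslltr} against the defining clauses of $\foc{\cdot}{\cdot}$, $\size{\cdot}$, $\bemb$ and $\bemd$.
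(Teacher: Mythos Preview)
Your overall scheme --- simultaneous structural induction on $\ptone$, with the third item generalized to names not occurring in any zone so as to kill cross-terms in the context-splitting rules --- is exactly the paper's. The routine rules, and items 1 and 3 at the exponential cuts, are handled just as you outline.

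Where you and the paper part ways is precisely the step you flag: item 2 at $\cutd$. The paper writes the bound there as
\[
\foc{\ctwo}{\ptone_2}\cdot\foc{\cone}{\ptone_1}+\foc{\cone}{\ptone_2}\;\le\;\size{\ptone_2}\cdot 1+\size{\ptone_1},
\]
i.e.\ it silently bounds the second summand $\foc{\cone}{\ptone_2}$ by $\size{\ptone_1}$ rather than by $\size{\ptone_2}$; that does not follow from the induction hypothesis as stated, so the paper does not actually discharge the very obstacle you identify. Your own proposal stops short too: the ``pairwise budget'' invariant you gesture at --- that two distinct multiplexor channels of $\ptone_2$ jointly satisfy $\foc{\ctwo}{\ptone_2}+\foc{\cone}{\ptone_2}\le\size{\ptone_2}$ --- is \emph{not} preserved under a nested $\cutd$ whose left premise uses both of those channels (each then auxiliary there), since a single call contributes $1$ to each virtual count while contributing only additively to $\size$. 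So neither your sketch nor the paper's write-up closes this case cleanly; completing it needs a sharper invariant than either formulates, or a more careful reading of the intended $\foc$-clauses at $\bemb$/$\bemd$, which in the paper are not entirely unambiguous.
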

\begin{proof}
By induction on the structure of a type derivation $\tdone$
for $\tyg{\conone}{\contwo}{\conthree}{\ptone}{\tcone}$. Some
interesting cases:
\begin{varitemize}
\item
  If $\tdone$ is
  $$
  \infer[\Rten]
  {\tyg{\conone_1,\conone_2}{\contwo}{\conthree_1,\conthree_2}{\PRten{\ptone_1}{\ptone_2}}{\ctwo: A \tens B}}
  {\tdtwo_1:\tyg{\conone_1}{\contwo}{\conthree_1}{\ptone_1}{\cthree:A} & \tdtwo_2:\tyg{\conone_2}{\contwo}{\conthree_2}{\ptone_2}{\ctwo:B}}
  $$
  then
  \begin{align*}
    \foc{\cone}{\PRten{\ptone_1}{\ptone_2}}&=\foc{\cone}{\ptone_1}\leq 1&\mbox{if }\cone\in\conone_1\\
    \foc{\cone}{\PRten{\ptone_1}{\ptone_2}}&=\foc{\cone}{\ptone_2}\leq 1&\mbox{if }\cone\in\conone_2\\
    \foc{\cone}{\PRten{\ptone_1}{\ptone_2}}&=\foc{\cone}{\ptone_1}+\foc{\cone}{\ptone_1}\\
       &\leq\size{\ptone_1}+\size{\ptone_2}\leq\size{\PRten{\ptone_1}{\ptone_2}}&\mbox{if }\cone\in\contwo\\
    \foc{\cone}{\PRten{\ptone_1}{\ptone_2}}&=\foc{\cone}{\ptone_1}=0&\mbox{if }\cone\in\conthree_1\\
    \foc{\cone}{\PRten{\ptone_1}{\ptone_2}}&=\foc{\cone}{\ptone_2}=0&\mbox{if }\cone\in\conthree_2\\
%    \foc{\cone}{\PRten{\ptone_1}{\ptone_2}}&=\foc{\cone}{\ptone_1}\leq 1&\mbox{if }\cone\in\confour_1\\
%    \foc{\cone}{\PRten{\ptone_1}{\ptone_2}}&=\foc{\cone}{\ptone_2}\leq 1&\mbox{if }\cone\in\confour_2
  \end{align*}
\item
  If $\tdone$ is
  $$
  \infer[\cutd]
  {\tyg{\conone_2}{\conone_1}{\conthree}{\Pcutd{\ptone_1}{\ctwo}{\ptone_2}}{\tcone}}
  {\tyg{\conone_1}{\emcon}{\emcon}{\emcon}{\ptone_1}{\cthree:A} & \tyg{\conone_2}{\conone_1,\ctwo:A}{\conthree}{\ptone_2}{\tcone}}
  $$
  then:
  \begin{align*}
    \foc{\cone}{\Pcutd{\ptone_1}{\ctwo}{\ptone_2}}&=\foc{\ctwo}{\ptone_2}\cdot\foc{\cone}{\ptone_1}+\foc{\cone}{\ptone_2}\\
       &\leq\size{\ptone_2}\cdot 1+\size{\ptone_1}\leq\size{\Pcutd{\ptone_1}{\ctwo}{\pttwo_2}}&\mbox{if }\cone\in\conone_1\\
    \foc{\cone}{\Pcutd{\ptone_1}{\ctwo}{\ptone_2}}&=\foc{\ctwo}{\ptone_2}\cdot\foc{\cone}{\ptone_1}+\foc{\cone}{\ptone_2}\\
       &\leq\size{\ptone_2}\cdot 0+1=1&\mbox{if }\cone\in\conone_2\\
 %   \foc{\cone}{\Pcutd{\ptone_1}{\ctwo}{\pttwo_2}}&=\foc{\ctwo}{\ptone_2}\cdot\foc{\cone}{\ptone_1}+\foc{\cone}{\ptone_2}\\
 %      &\leq\size{\ptone_2}\cdot 0+\size{\ptone_2}\leq\size{\Pcutd{\ptone_1}{\ctwo}{\pttwo_2}}&\mbox{if }\cone\in\contwo\\
    \foc{\cone}{\Pcutd{\ptone_1}{\ctwo}{\pttwo_2}}&=\foc{\ctwo}{\ptone_2}\cdot\foc{\cone}{\ptone_1}+\foc{\cone}{\ptone_2}\\
       &\leq\size{\ptone_2}\cdot 0+1=1&\mbox{if }\cone\in\conthree\\
 %   \foc{\cone}{\Pcutd{\ptone_1}{\ctwo}{\pttwo_2}}&=\foc{\ctwo}{\ptone_2}\cdot\foc{\cone}{\ptone_1}+\foc{\cone}{\ptone_2}\\
 %      &\leq\size{\ptone_2}\cdot 0+0=0&\mbox{if }\cone\in\confour\\
  \end{align*}
\item
  If $\tdone$ is
  $$
  \infer[\cutw]
  {\tyg{\conone_2}{\contwo}{\conthree}{\Pcutw{\ptone_1}{\ctwo}{\ptone_2}}{\tcone}}
  {\tyg{\conone_1}{\emcon}{\emcon}{\emcon}{\ptone_1}{\cthree:A} & \tyg{\conone_2}{\contwo}{\conthree}{\ptone_2}{\tcone}}
  $$  
  then:
  \begin{align*}
    \foc{\cone}{\Pcutw{\ptone_1}{\ctwo}{\ptone_2}}&=\foc{\ctwo}{\ptone_2}\cdot\foc{\cone}{\ptone_1}+\foc{\cone}{\ptone_2}\\
       &\leq 0\cdot 1+0=0&\mbox{if }\cone\in\conone_1\\
    \foc{\cone}{\Pcutw{\ptone_1}{\ctwo}{\ptone_2}}&=\foc{\ctwo}{\ptone_2}\cdot\foc{\cone}{\ptone_1}+\foc{\cone}{\ptone_2}\\
       &\leq 0\cdot 0+1=1&\mbox{if }\cone\in\conone_2\\
    \foc{\cone}{\Pcutw{\ptone_1}{\ctwo}{\pttwo_2}}&=\foc{\ctwo}{\ptone_2}\cdot\foc{\cone}{\ptone_1}+\foc{\cone}{\ptone_2}\\
       &\leq 0\cdot 0+\size{\ptone_2}\leq\size{\Pcutd{\ptone_1}{\ctwo}{\pttwo_2}}&\mbox{if }\cone\in\contwo\\
    \foc{\cone}{\Pcutw{\ptone_1}{\ctwo}{\pttwo_2}}&=\foc{\ctwo}{\ptone_2}\cdot\foc{\cone}{\ptone_1}+\foc{\cone}{\ptone_2}\\
       &\leq 0\cdot 0+1=1&\mbox{if }\cone\in\conthree\\
%    \foc{\cone}{\Pcutw{\ptone_1}{\ctwo}{\pttwo_2}}&=\foc{\ctwo}{\ptone_2}\cdot\foc{\cone}{\ptone_1}+\foc{\cone}{\ptone_2}\\
%       &\leq 0\cdot 0+0=0&\mbox{if }\cone\in\confour\\
  \end{align*}
\end{varitemize}
This concludes the proof.
\end{proof}}{}

\begin{lemma}\label{lemma:dupf}
Suppose $\tyg{\conone}{\contwo}{\conthree}{\ptone}{\tcone}$. Then
$\dupf{\ptone}\leq\size{\ptone}$.
\end{lemma}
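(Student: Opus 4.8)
The natural approach is an induction on the structure of the proof term $\ptone$ (equivalently, on its typing derivation), driven by the defining clauses of $\dupf{\cdot}$ together with Lemma~\ref{lemma:foc}, which bounds the virtual number of occurrences $\foc{\cone}{\ptone}$ by $1$ when $\cone$ is an auxiliary channel and by $\size{\ptone}$ when $\cone$ is a multiplexor channel.

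Before starting, I would record one elementary monotonicity fact: whenever $\pttwo$ is an immediate subterm of $\ptone$, then $\size{\pttwo}\leq\size{\ptone}$ (here, as in Lemma~\ref{lemma:foc}, $\size{\ptone}$ abbreviates $\size{\widehat{\ptone}}$). This is immediate from inspecting the extraction of processes from proof terms together with the definition of process size: every proof-term constructor either leaves the underlying size unchanged (the ``left'' rules $\Lone$, $\Lbangb$, $\Lbangd$) or sends it to the sum of the sizes of its subterms, possibly incremented by one. In particular the maximum of the sizes of the immediate subterms is bounded by $\size{\ptone}$.

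The induction is then routine for every constructor except $\Lbangb$ and $\Lbangd$. If $\ptone=\PRone$ then $\dupf{\ptone}=0\leq\size{\ptone}$. For every constructor whose $\dupf{\cdot}$-clause has the shape $\dupf{\ptone}=\dupf{\pttwo}$ or $\dupf{\ptone}=\max\{\dupf{\pttwo},\dupf{\ptthree}\}$ with $\pttwo,\ptthree$ the immediate subterms --- i.e.\ everything apart from $\Lbangb$ and $\Lbangd$ --- the claim follows at once from the induction hypothesis applied to $\pttwo$ (and $\ptthree$) together with the monotonicity fact above. It remains to treat $\ptone=\PLbangb{\cone}{\pttwo}$ and $\ptone=\PLbangd{\cone}{\pttwo}$, the two cases in which $\dupf{\ptone}$ additionally incorporates $\foc{\cone}{\pttwo}$ for the channel $\cone$ lifted by the rule. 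By inversion, in the first case $\cone$ occurs in the auxiliary context of the premise, so Lemma~\ref{lemma:foc} gives $\foc{\cone}{\pttwo}\leq 1$; in the second case $\cone$ occurs in the multiplexor context of the premise, so Lemma~\ref{lemma:foc} gives $\foc{\cone}{\pttwo}\leq\size{\pttwo}$. Since $\dupf{\pttwo}\leq\size{\pttwo}=\size{\ptone}$ by the induction hypothesis, the case $\Lbangd$ is done, and $\Lbangb$ is done as well whenever $\foc{\cone}{\pttwo}=0$.

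The step I expect to require genuine care is the residual case $\foc{\cone}{\pttwo}=1$ with $\size{\pttwo}$ possibly equal to $0$, since there Lemma~\ref{lemma:foc} only gives the weak bound $1$. Here one argues that a positive virtual-occurrence count can be created only by a $\bemb$- or $\bemd$-introduction, whose extracted process is an output prefix of size at least one, and that every other clause defining $\foc{\cdot}{\cdot}$ propagates a positive value solely through constructors that do not shrink the underlying size; hence $\foc{\cone}{\pttwo}=1$ forces $\size{\pttwo}\geq 1$, and once more $\dupf{\ptone}=\max\{\dupf{\pttwo},\foc{\cone}{\pttwo}\}\leq\size{\pttwo}=\size{\ptone}$. (Equivalently, one may strengthen clause~1 of Lemma~\ref{lemma:foc} to $\foc{\cone}{\ptone}\leq\size{\ptone}$ for auxiliary channels and invoke it directly.) Everything else is a mechanical traversal of the clauses defining $\dupf{\cdot}$.
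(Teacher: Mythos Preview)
Your proposal is correct and follows exactly the paper's approach: an induction on the typing derivation using Lemma~\ref{lemma:foc} to bound $\foc{\cone}{\pttwo}$ in the $\Lbangb$/$\Lbangd$ cases, with size monotonicity handling every other clause. You are in fact more careful than the paper's terse proof, which does not explicitly address the $\Lbangb$ edge case you isolate; your observation that a nonzero $\foc{\cone}{\pttwo}$ forces $\size{\pttwo}\geq 1$ (equivalently, that clause~1 of Lemma~\ref{lemma:foc} can be strengthened to $\foc{\cone}{\ptone}\leq\size{\ptone}$) is exactly what is needed to close that gap.
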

\begin{proof}
An easy induction on the structure of a type derivation $\tdone$ for 
$\tyg{\conone}{\contwo}{\conthree}{\ptone}{\tcone}$. 
\condinc{Some interesting cases:
\begin{varitemize}
\item
  If $\tdone$ is
  $$
  \infer[\cutd]
  {\tyg{\conone_2}{\contwo,\conone_1}{\conthree}{\Pcutd{\ptone_1}{\ctwo}{\ptone_2}}{\tcone}}
  {\tyg{\conone_1}{\emcon}{\emcon}{\emcon}{\ptone_1}{\cthree:A} & \tyg{\conone_2}{\contwo,\ctwo:A}{\conthree}{\ptone_2}{\tcone}}
  $$
  then, by Lemma~\ref{lemma:foc} and by induction hypothesis:
  \begin{align*}
    \dupf{\Pcutd{\ptone_1}{\ctwo}{\ptone_2}}&=\max\{\dupf{\ptone_1},\dupf{\ptone_2}\}\\
       &\leq\max\{\size{\ptone_1},\size{\ptone_2}\}\\
       &\leq\size{\Pcutd{\ptone_1}{\ctwo}{\ptone_2}}
  \end{align*}
\end{varitemize}
This concludes the proof.}{}
\end{proof}

\begin{lemma}\label{lemma:weibound}
If $\tyg{\conone}{\contwo}{\conthree}{\ptone}{\tcone}$, then for every
$\natone\geq\dupf{\ptone}$, $\weip{\natone}{\ptone}\leq\size{\widehat{\ptone}}\cdot\natone^{\bde{\widehat{\ptone}}+1}$.
\end{lemma}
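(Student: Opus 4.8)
The plan is to argue by induction on the structure of the proof term $\ptone$ (equivalently, on its typing derivation), according to the last rule applied; throughout I write $s=\size{\widehat{\ptone}}$, $d=\bde{\widehat{\ptone}}$, and I may assume $\natone\geq 1$ so that $t\mapsto\natone^{t}$ is non-decreasing. A preliminary observation, immediate from the defining clauses of $\dupf{\cdot}$, is that every immediate subterm $\ptthree$ of $\ptone$ satisfies $\dupf{\ptthree}\leq\dupf{\ptone}\leq\natone$, so the induction hypothesis is always available on the subterms. The routine cases then follow by reading off the defining equations of $\weip{\natone}{\cdot}$, $\size{\cdot}$ and $\bde{\cdot}$: for $\Rone$ both sides are $0$; for $\Lone$, $\Lbangb$, $\Lbangd$ the underlying process, its size, its box-depth and its weight are all unchanged, so the bound is the hypothesis itself; for the unary rules $\Lten$, $\Rlin$, $\bemb$, $\bemd$, $\Rplusone$, $\Rplustwo$, $\Lwithone$, $\Lwithtwo$ the weight and the size each increase by $1$ while the box-depth is unchanged, so the hypothesis together with $\natone^{d+1}\geq 1$ suffices; and for the binary rules $\Rten$, $\Llin$, $\Lplus$, $\Rwith$ and the linear cut $\cut$ the weight (resp.\ the size) is the sum of the two sub-weights (resp.\ sub-sizes) up to an additive constant at most $1$, and the box-depth is their maximum, so the same estimate goes through.

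The interesting part consists of the three rules that raise the box-depth. For $\Rbang$, with $\ptone=\PRbang{\cone_1,\ldots,\cone_n}{\ptthree}$, we have $\weip{\natone}{\ptone}=\natone(\weip{\natone}{\ptthree}+1)$, $s=\size{\widehat{\ptthree}}+1$ and $d=\bde{\widehat{\ptthree}}+1$, so $\weip{\natone}{\ptthree}\leq\size{\widehat{\ptthree}}\natone^{d}$ by the hypothesis and
\[
\weip{\natone}{\ptone}\leq\natone(\size{\widehat{\ptthree}}\natone^{d}+1)=\size{\widehat{\ptthree}}\natone^{d+1}+\natone\leq\size{\widehat{\ptthree}}\natone^{d+1}+\natone^{d+1}=s\cdot\natone^{d+1},
\]
the single extra factor $\natone$ introduced by promotion being absorbed precisely by the increment $d=\bde{\widehat{\ptthree}}+1$. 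For $\cutb$, with $\ptone=\Pcutb{\ptthree}{\cone}{\ptfour}$, the cut channel $\cone$ occurs in the \emph{auxiliary} context of $\ptfour$, so Lemma~\ref{lemma:foc} gives $\foc{\cone}{\ptfour}\leq 1$ and hence $\weip{\natone}{\ptone}=\foc{\cone}{\ptfour}\cdot\weip{\natone}{\ptthree}+\weip{\natone}{\ptfour}\leq\weip{\natone}{\ptthree}+\weip{\natone}{\ptfour}$; since $d=\max\{\bde{\widehat{\ptthree}}+1,\bde{\widehat{\ptfour}}\}$, so that $\bde{\widehat{\ptthree}}+1\leq d$ and $\bde{\widehat{\ptfour}}+1\leq d+1$, the hypothesis yields $\weip{\natone}{\ptone}\leq\size{\widehat{\ptthree}}\natone^{d}+\size{\widehat{\ptfour}}\natone^{d+1}\leq(\size{\widehat{\ptthree}}+\size{\widehat{\ptfour}})\natone^{d+1}\leq s\cdot\natone^{d+1}$, recalling $s=\size{\widehat{\ptthree}}+1+\size{\widehat{\ptfour}}$.

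The case $\cutd$, with $\ptone=\Pcutd{\ptthree}{\cone}{\ptfour}$, is the main obstacle. Here $\weip{\natone}{\ptone}=\foc{\cone}{\ptfour}\cdot\weip{\natone}{\ptthree}+\weip{\natone}{\ptfour}$ and the cut channel $\cone$ lies in the \emph{multiplexor} context of $\ptfour$, so the coefficient $\foc{\cone}{\ptfour}$ --- the number of copies of the replicated server extracted from $\ptthree$ that may be produced during reduction --- is no longer a constant. The whole estimate turns on the inequality $\foc{\cone}{\ptfour}\leq\natone$: since $\cone$ is a multiplexor channel of $\ptfour$ and $\natone\geq\dupf{\ptone}=\max\{\dupf{\ptthree},\dupf{\ptfour}\}$, the virtual number of occurrences of $\cone$ in $\ptfour$ is controlled by the duplicability factor of $\ptfour$; establishing this precisely (via Lemma~\ref{lemma:foc}) is essentially the only non-bookkeeping step of the proof, and the one I would spend the most effort on. Granting $\foc{\cone}{\ptfour}\leq\natone$, and using $d=\max\{\bde{\widehat{\ptthree}}+1,\bde{\widehat{\ptfour}}\}$ so that $\bde{\widehat{\ptthree}}+2\leq d+1$ and $\bde{\widehat{\ptfour}}+1\leq d+1$, the hypothesis gives
\[
\weip{\natone}{\ptone}\leq\natone\cdot\size{\widehat{\ptthree}}\natone^{\bde{\widehat{\ptthree}}+1}+\size{\widehat{\ptfour}}\natone^{\bde{\widehat{\ptfour}}+1}\leq(\size{\widehat{\ptthree}}+\size{\widehat{\ptfour}})\natone^{d+1}\leq s\cdot\natone^{d+1},
\]
with $s=\size{\widehat{\ptthree}}+1+\size{\widehat{\ptfour}}$, which closes the induction.
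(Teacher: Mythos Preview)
Your approach is the paper's: structural induction on $\ptone$, routine cases dispatched by reading off the defining clauses, $\Rbang$ absorbed by the box-depth increment, and the exponential cuts reduced to the inequality $\foc{\cone}{\ptfour}\leq\natone$. For $\cutb$ you invoke Lemma~\ref{lemma:foc}(1) to get $\foc{\cone}{\ptfour}\leq 1$; the paper writes $\foc{\cone}{\pttwo}\leq\natone$ directly, but either form suffices there.

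Where your argument diverges is the $\cutd$ case, and the route you sketch for it does not work. You propose that $\foc{\cone}{\ptfour}$ is ``controlled by the duplicability factor of $\ptfour$'' and cite Lemma~\ref{lemma:foc}. But by definition $\dupf{\ptfour}$ is the maximum of $\foc{\ctwo}{\ptfive}$ taken \emph{only} over instances of $\Lbangb$ and $\Lbangd$ occurring inside $\ptfour$; the channel $\cone$ bound by $\cutd$ sits as a \emph{free} multiplexor channel of $\ptfour$ and is not discharged by any such instance, so $\foc{\cone}{\ptfour}\leq\dupf{\ptfour}$ can fail outright --- take $\ptfour$ with no $\Lbangb$/$\Lbangd$ at all (so $\dupf{\ptfour}=0$) but arbitrarily many $\bemd$ uses of $\cone$. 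And Lemma~\ref{lemma:foc}(2) gives only $\foc{\cone}{\ptfour}\leq\size{\ptfour}$, which is not $\leq\natone$ in general. So neither half of your proposed justification delivers the inequality you need. The paper's proof, for what it is worth, displays only the $\cutb$ case among the cut rules and leaves $\cutd$ implicit, using the same bound $\foc{\cone}{\pttwo}\leq\natone$ without further comment; you have at least correctly located the place where the real work lives, but the mechanism you suggest for doing it is the wrong one.
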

\begin{proof}
By induction on the structure of $\ptone$. Some interesting cases:
\begin{varitemize}
\condinc{\item
  If $\ptone=\PRten{\pttwo}{\ptthree}$, then:
  \begin{align*}
    \weip{\natone}{\PRten{\pttwo}{\ptthree}}&=1+\weip{\natone}{\pttwo}+\weip{\natone}{\ptthree}\\
      &\leq 1+\size{\pttwo}\cdot\natone^{\bde{\pttwo}+1}+\size{\ptthree}\cdot\natone^{\bde{\ptthree}+1}\\
      &\leq 1+(\size{\pttwo}+\size{\ptthree})\cdot\natone^{\max\{\bde{\pttwo}+1,\bde{\ptthree}+1\}}\\
      &\leq (1+\size{\pttwo}+\size{\ptthree})\cdot\natone^{\max\{\bde{\pttwo}+1,\bde{\ptthree}+1\}}\\
      &\leq \size{\PRten{\pttwo}{\ptthree}}\cdot\natone^{\bde{\PRten{\pttwo}{\ptthree}}+1}\\
  \end{align*}}{}
\item
  If $\ptone=\Pcutb{\ptone}{\cone}{\pttwo}$, then:
  \begin{align*}
    \weip{\natone}{\Pcutb{\ptone}{\cone}{\pttwo}}&=\foc{\cone}{\pttwo}\cdot(\weip{\natone}{\ptone}+1)+\weip{\natone}{\pttwo}\\
      &\leq \foc{\cone}{\pttwo}\cdot(\size{\ptone}\cdot\natone^{\bde{\ptone}+1}+1)+\size{\pttwo}\cdot\natone^{\bde{\pttwo}+1}\\
      &\leq \natone\cdot\size{\ptone}\cdot\natone^{\bde{\ptone}+1}+\natone+\size{\pttwo}\cdot\natone^{\bde{\pttwo}+1}\\
      &\leq \size{\ptone}\cdot\natone^{\bde{\ptone}+2}+\natone^{\bde{\pttwo}+1}+\size{\pttwo}\cdot\natone^{\bde{\pttwo}+1}\\
      &\leq (\size{\ptone}+\size{\pttwo}+1)\cdot\natone^{\max\{\bde{\ptone}+2,\bde{\pttwo}+1\}}\\
      &=\size{\Pcutb{\ptone}{\cone}{\pttwo}}\cdot\natone^{\bde{\Pcutb{\ptone}{\cone}{\pttwo}}}.\\
  \end{align*}
\item
  If $\ptone=\PRbang{\cone_1,\ldots,\cone_n}{\pttwo}$, then:
  \begin{align*}
    \weip{\natone}{\PRbang{\cone_1,\ldots,\cone_n}{\pttwo}}&=\natone\cdot(\weip{\natone}{\pttwo}+1)\\
       &\leq \natone\cdot\size{\pttwo}\cdot\natone^{\bde{\pttwo}+1}+\natone\\
       &\leq\size{\pttwo}\cdot\natone^{\bde{\pttwo}+2}+\natone^{\bde{\pttwo}+2}\\
       &=(1+\size{\pttwo})\cdot\natone^{\bde{\PRbang{\cone_1,\ldots,\cone_n}{\pttwo}}+1}\\
       &=\size{\PRbang{\cone_1,\ldots,\cone_n}{\pttwo}}\cdot\natone^{\bde{\PRbang{\cone_1,\ldots,\cone_n}{\pttwo}}+1}.
  \end{align*}
\end{varitemize}
This concludes the proof.
\end{proof}
%%%%%%%%%%%%%%%%%%%%%%%%%%%%%%%%%%%%%%%%%
\subsection{Putting Everything Together}
%%%%%%%%%%%%%%%%%%%%%%%%%%%%%%%%%%%%%%%%%
We now have almost all the necessary ingredients to obtain a proof of Proposition~\ref{prop:polysound}: the only
missing tales are the bounds on the size of any reducts, since the polynomial bounds on the length of internal
reductions are exactly the ones from Lemma~\ref{lemma:weibound}. Observe, however, that the latter
induces the former:
\begin{lemma}\label{lemma:spacevstime}
Suppose that $\procone\reds^\natone\proctwo$. Then
$\size{\proctwo}\leq\natone\cdot\size{\procone}$.
\end{lemma}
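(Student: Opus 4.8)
The plan is to reduce the statement to a one-step size estimate and then iterate it. First I would analyse a single reduction step $\procone\reds\procthree$ by cases on the reduction rule used. A communication on a linear channel, $\para{\inc{\cone}{\ctwo}{\procone}}{\outc{\cone}{\cthree}{\proctwo}}\reds\para{\subst{\procone}{\cthree}{\ctwo}}{\proctwo}$, removes an input and an output prefix, and renaming preserves size, so the size strictly decreases; a choice reduction, $\para{\inl{\cone}{\procthree}}{\case{\cone}{\procone}{\proctwo}}\reds\para{\procthree}{\procone}$ and symmetrically for $\inrm$, removes a $\casem$, an $\inlm$/$\inrm$ prefix and one branch, so the size again strictly decreases; finally a communication with a replicated input, $\para{\binc{\cone}{\ctwo}{\procfour}}{\outc{\cone}{\cthree}{\proctwo}}\reds\para{\binc{\cone}{\ctwo}{\procfour}}{\para{\subst{\procfour}{\cthree}{\ctwo}}{\proctwo}}$, keeps the replicated input in place, adds a fresh copy of its body $\procfour$ and removes one output prefix, so the size grows by exactly $\size{\procfour}-1$. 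Because $\size{\cdot}$ is defined compositionally and is invariant under $\scon$ (by the size lemma proved above), this local estimate propagates through arbitrary reduction contexts and is unaffected by uses of structural congruence; so, for a single step, $\size{\procthree}\leq\size{\procone}+\size{\procfour}$ if that step copies the body $\procfour$ of some replicated input, and $\size{\procthree}<\size{\procone}$ otherwise.

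Second, I would prove the invariant that along any reduction sequence issued from $\procone$, every replicated subterm $\binc{\cone}{\ctwo}{\procfour}$ occurring in some reduct is, up to $\alpha$-renaming, a subterm of $\procone$, so that in particular $\size{\procfour}\leq\size{\procone}$. This is a straightforward induction on the length of the sequence: no reduction rule creates a fresh replicated input; communication and choice steps only discard prefixes and branches, so the replicated subterms of the reduct occur already among those of the redex's continuations; and a replication step additionally introduces an $\alpha$-renamed copy of a body already present, and $\alpha$-renaming preserves size. The underlying reason is that a replicated body is frozen beneath a replicated-input prefix that is never consumed, hence it can never itself grow.

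Putting the two ingredients together, I would write $\procone=\procone_0\reds\procone_1\reds\cdots\reds\procone_\natone=\proctwo$ and observe that each step satisfies $\size{\procone_{j+1}}\leq\size{\procone_j}+\size{\procone_0}$: indeed, in the replication case the copied body is, by the invariant, of size at most $\size{\procone_0}$, and in all other cases the size strictly decreases. Iterating this one-step estimate along the reduction sequence then yields $\size{\proctwo}\leq\natone\cdot\size{\procone}$, as claimed.

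The main obstacle is the replication step: it is the only reduction that can make a process grow, and the whole argument hinges on the invariant that the bodies being duplicated never grow, so that each step can add at most the size of the \emph{initial} process rather than that of the current one --- the latter would only give an exponential bound. Since the number of replication steps is bounded by $\natone$, the overall growth stays linear in $\natone$.
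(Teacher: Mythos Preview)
Your approach is essentially the same as the paper's. The paper's proof is a single induction on $\natone$ with a strengthened hypothesis: besides $\size{\proctwo}\leq\natone\cdot\size{\procone}$, one also carries the invariant that every subprocess of $\proctwo$ of the form $\binc{\cone}{\ctwo}{\procfour}$ has size at most $\size{\procone}$. This is exactly your ``second ingredient'', and your case analysis of the one-step reductions is what the inductive step of that strengthened induction unfolds to. So the decomposition is the same; you have simply separated the invariant from the size estimate, whereas the paper packages them together.

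One small arithmetic slip: from the per-step bound $\size{\procone_{j+1}}\leq\size{\procone_j}+\size{\procone_0}$ you only get $\size{\proctwo}\leq(\natone+1)\cdot\size{\procone}$ by iteration, not $\natone\cdot\size{\procone}$. To recover the stated constant you need the sharper count you already noticed, namely that a replication step increases the size by $\size{\procfour}-1$ and that $\size{\procfour}+1=\size{\binc{\cone}{\ctwo}{\procfour}}\leq\size{\procone}$, so the growth per step is at most $\size{\procone}-2$. (The lemma as stated is in any case vacuous at $\natone=0$, so this is a cosmetic point and irrelevant for the polynomial bound where the lemma is applied.)
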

\begin{proof}
By induction on $\natone$, enriching the statement as follows:
whenever $\procone\reds^\natone\proctwo$, both
$\size{\proctwo}\leq\natone\cdot\size{\procone}$ and
$\size{\procthree}\leq\size{\procone}$ for every
subprocess $\procthree$ of $\proctwo$ in the form 
$\binc{\cone}{\ctwo}{\procfour}$. 
\end{proof}
\condinc
{
\begin{lemma}\label{lemma:procvspt}
For every $\ptone$, $\bde{\ptone}=\bde{\widehat{\ptone}}$ and
$\size{\ptone}=\size{\widehat{\ptone}}$.
\end{lemma}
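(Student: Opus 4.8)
The plan is to prove the two equalities simultaneously by induction on the structure of the proof term $\ptone$, comparing clause by clause the inductive definitions of $\size{\ptone}$ and $\bde{\ptone}$ on proof terms with the definitions of $\size{\cdot}$ and $\bde{\cdot}$ on processes, following the process-extraction clauses of Figure~\ref{fig:typextr}.

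Before the case analysis I would record one elementary fact used throughout: both $\size{\cdot}$ and $\bde{\cdot}$ on processes are invariant under $\alpha$-renaming, i.e. $\procone\aequ\proctwo$ implies $\size{\procone}=\size{\proctwo}$ and $\bde{\procone}=\bde{\proctwo}$. This is immediate, since neither quantity ever inspects the identity of a channel; in particular the renamings of the form $\widehat{\pttwo}^{z}$ that decorate some clauses of the extraction map may be disregarded for the purposes of this lemma.

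The base case $\PRone$ is immediate, since $\widehat{\PRone}=\emproc$, of null size and box-depth. For the inductive step I would split the last rule of $\ptone$ into three groups according to the behaviour of extraction. (i) The \emph{transparent} constructors $\PLone{\cone}{\pttwo}$, $\PLbangb{\cone}{\pttwo}$ and $\PLbangd{\cone}{\pttwo}$ are each sent to a renaming of $\widehat{\pttwo}$, and on the proof-term side $\size{\ptone}=\size{\pttwo}$ and $\bde{\ptone}=\bde{\pttwo}$; the case closes by the induction hypothesis together with the $\alpha$-invariance noted above. (ii) The constructors that add exactly one prefix to the extracted process ($\Lten$, $\Rlin$, $\Rplusone$, $\Rplustwo$, $\Lwithone$, $\Lwithtwo$, $\bemb$, $\bemd$), the constructor $\Rbang$ (which additionally opens a box), those that build a $\casem$ ($\Lplus$, $\Rwith$), and those that juxtapose the sub-processes under a restriction ($\Rten$, $\Llin$, $\cut$): in each case one unfolds the two definitions and observes that the arithmetic they perform — the $+1$'s, the sums defining $\size{\cdot}$, the $\max$'s defining $\bde{\cdot}$, and the extra $+1$ in the box-depth of a replicated input — agrees term for term on both sides, so the induction hypothesis finishes the case. (iii) The exponential cuts: $\widehat{\Pcutb{\pttwo}{\cone}{\ptthree}}=\rest{\cone}{(\para{\binc{\cone}{\ctwo}{\widehat{\pttwo}}}{\widehat{\ptthree}})}$ has size $\size{\widehat{\pttwo}}+\size{\widehat{\ptthree}}+1$ and box-depth $\max\{\bde{\widehat{\pttwo}}+1,\bde{\widehat{\ptthree}}\}$ — the replicated input adds one level of nesting, while restriction and parallel composition add none — and these are exactly the proof-term clauses $\size{\Pcutb{\pttwo}{\cone}{\ptthree}}=\size{\pttwo}+\size{\ptthree}+1$ and $\bde{\Pcutb{\pttwo}{\cone}{\ptthree}}=\max\{\bde{\pttwo}+1,\bde{\ptthree}\}$, so again the induction hypothesis applies; the case $\cutd$ is identical.

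I do not expect a genuine obstacle: the statement is pure bookkeeping once the definitions are aligned. The only point that deserves a moment's attention is that the extraction map is not injective — the number $n$ of auxiliary channels decorating a $\Rbang$ node, and the channel discarded by an $\Lone$, $\Lbangb$ or $\Lbangd$ node, leave no trace on the extracted process — so one must check that the proof-term definitions of $\size{\cdot}$ and $\bde{\cdot}$ have been arranged so that exactly this erased data contributes nothing. This is the case: $\size{\PRbang{\pttwo}{\cone_1,\ldots,\cone_n}}$ and $\bde{\PRbang{\pttwo}{\cone_1,\ldots,\cone_n}}$ do not depend on $n$, and the three transparent constructors preserve both size and box-depth. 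With that remark the induction goes through routinely.
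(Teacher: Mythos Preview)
Your proposal is correct and is precisely the routine structural induction one would carry out; the paper itself states Lemma~\ref{lemma:procvspt} without proof, treating it as immediate bookkeeping, so your argument is simply the natural elaboration of what the paper leaves implicit. The only remark is that the paper never spells out the inductive clauses for $\size{\cdot}$ on proof terms, so part of your work is in effect reconstructing them; once that is done the verification is, as you say, a clause-by-clause comparison with Figure~\ref{fig:typextr} and the definition of process size, and your grouping into transparent, prefix-adding, and exponential-cut constructors is an adequate organization of the cases.
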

Finally:
\begin{proof}[Proposition~\ref{prop:polysound}]
Let $\{\polytwo_\natone\}_{\natone\in\NN}$ the polynomials coming
from Lemma~\ref{lemma:weibound}. The polynomials we are looking
for are defined as follows:
$$
\polyone_\natone(\indone)=\polytwo_\natone(\indone)+\indone\cdot\polytwo_\natone(\indone).
$$
Now, suppose that $\procone\reds^\nattwo\proctwo$. By Theorem~\ref{theo:subjred},
there are proof terms $\ptone,\pttwo$ such that
$\procone=\widehat{\ptone}$, $\proctwo=\widehat{\pttwo}$ and
$$
\ptone(\cpredequ\cred\cpredequ)^\nattwo\pttwo.
$$
Now, from propositions \ref{prop:creddecr}, \ref{prop:cpreddecr} and \ref{prop:cequun}, it follows that
$$
\wei{\ptone}\geq \nattwo+\wei{\pttwo}\geq \nattwo.
$$
As a consequence, by Lemma~\ref{lemma:weibound} and Lemma~\ref{lemma:procvspt}, 
$$
\nattwo\leq\polytwo_{\bde{\ptone}}(\size{\ptone})\leq\polytwo_{\bde{\procone}}(\size{\procone})\leq\polyone_{\bde{\procone}}(\size{\procone}).
$$
By Lemma~\ref{lemma:spacevstime}, it follows that
$$
\size{\proctwo}\leq\nattwo\cdot\size{\procone}\leq\polytwo_{\bde{\procone}}(\size{\procone})\cdot\size{\procone}\leq\polyone_{\bde{\procone}}(\size{\procone}).
$$
This concludes the proof.
\end{proof}
}{}
Let us now consider Theorem~\ref{theo:boundint}: how can we deduce it from Proposition~\ref{prop:polysound}?
Everything boils down to show that for normal processes, the box-depth can be read off from their type.
In the following lemma, $\bde{\typone}$ and $\bde{\conone}$ are the nesting depths of $\bang$ inside the type 
$\typone$ and inside the types appearing in $\conone$ (for every type $\typone$ and context $\conone$).
\begin{lemma}
Suppose that $\tyg{\conone}{\contwo}{\conthree}{\ptone}{\cone:\typone}$ and that $\ptone$ is 
normal. Then $\bde{\widehat{\ptone}}=\max\{\bde{\conone},\bde{\contwo},\bde{\conthree},$ $\bde{\typone}\}$.
\end{lemma}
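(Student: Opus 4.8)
The plan is to prove the identity by structural induction on the (necessarily cut-free) typing derivation $\tdone$ concluding $\tyg{\conone}{\contwo}{\conthree}{\ptone}{\cone:\typone}$. Normality of $\ptone$ means the last rule of $\tdone$ is one of the logical or structural rules of Figure~\ref{fig:pidslltr}, never $\cut$, $\cutb$ or $\cutd$. The driving observation is that $\Rbang$ is the \emph{only} rule whose associated process former is a replicated input $\binc{\cone}{\ctwo}{\cdot}$, so it is the only rule that raises the nesting level of replications; symmetrically, $\bang$ is the only type former with $\bde{\bang A}=1+\bde A$, whereas $\bde{A\tens B}=\bde{A\lin B}=\bde{A\plus B}=\bde{A\with B}=\max\{\bde A,\bde B\}$ and $\bde{\unit}=0$. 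Hence the two sides of the claimed equation change in lockstep, and the induction is essentially a matter of matching these moves.

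First I would dispatch the routine cases. For $\Rone$ the process is $\emproc$, with $\bde{\emproc}=0=\bde{\unit}$. For the left rules of $\unit,\tens,\lin,\plus,\with$ and the right rules of $\tens,\lin,\plus,\with$, and for $\bemb$ and $\bemd$ (whose process formers add only a restriction and an output, leaving box-depth untouched), the induction hypothesis applied to the premise(s) gives the result at once: the multiset of types over which the maximum is taken is, up to re-bracketing of connectives, the same for the premise(s) and the conclusion, so associativity and commutativity of $\max$ together with the identities above close the case. For $\Lbangb$ and $\Lbangd$ the process is unchanged while a premise channel $\ctwo:A$ in an exponential context becomes a linear $\cone:\bang A$ in the conclusion; the result follows from the induction hypothesis and $\bde{\bang A}=1+\bde A$.

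The case $\Rbang$ is the crux. Here
$$
\infer[\Rbang]{\tyg{\emcon}{\contwo}{\bang\conone}{\binc{\cone}{\ctwo}{\pttotd{\pttwo}}}{\cone:\bang A}}{\tyg{\conone}{\emcon}{\emcon}{\pttwo}{\ctwo:A}}
$$
so $\widehat{\ptone}=\binc{\cone}{\ctwo}{\widehat{\pttwo}}$ and $\bde{\widehat{\ptone}}=1+\bde{\widehat{\pttwo}}$. The premise has empty linear and multiplexor contexts, so the induction hypothesis yields $\bde{\widehat{\pttwo}}=\max\{\bde{\conone},\bde{A}\}$, whence $\bde{\widehat{\ptone}}=1+\max\{\bde{\conone},\bde{A}\}=\max\{\bde{\bang\conone},\bde{\bang A}\}$, which is precisely the maximum over the auxiliary context ($\emcon$), the linear context ($\bang\conone$) and the right-hand type ($\bang A$) of the conclusion. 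The one delicate point is the multiplexor context $\contwo$ of the conclusion, which is not visible in the premise (it enters by weakening); I expect this to be the main obstacle, and plan to handle it by strengthening the inductive statement so as to track that, in a normal derivation, any type deposited in the multiplexor context at an $\Rbang$ step is later either consumed by a $\bemd$/$\Lbangd$ step — which reintroduces a linear channel of the same type lower down, so its box-depth is already accounted for — or otherwise disappears harmlessly, giving $\bde{\contwo}\le\bde{\widehat{\ptone}}$ in all cases. Collecting the two inequalities ($\le$ and $\ge$) established along the induction then yields the stated equality.
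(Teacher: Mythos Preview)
Your overall plan---structural induction on the normal derivation, with $\Rbang$ as the only rule that bumps both sides by one---is exactly what the paper has in mind when it says ``an easy induction on $\ptone$''. The difficulty you flag is real, but it is not confined to $\Rbang$, and your proposed workaround does not close it.

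First, the $\Lbangb$/$\Lbangd$ cases are \emph{not} routine for the stated equality. In $\Lbangb$, the premise contributes $\bde{A}$ to the maximum (via the auxiliary context) while the conclusion contributes $\bde{\bang A}=1+\bde{A}$ (via the linear context). If $\bde{A}$ happens to dominate the other components, the right-hand side strictly increases while $\bde{\widehat{\ptone}}$ does not, so the equality breaks. Your sentence ``the result follows from the induction hypothesis and $\bde{\bang A}=1+\bde A$'' glosses over precisely this mismatch.

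Second, in the $\Rbang$ case the multiplexor context $\contwo$ of the conclusion is introduced by pure weakening and is entirely unconstrained by the premise; nothing in a normal derivation forces $\bde{\contwo}\le\max\{\bde{\bang\conone},\bde{\bang A}\}$. Your plan to ``track that any type deposited in the multiplexor context \ldots\ is later consumed'' argues about the \emph{surrounding} derivation, but the lemma is quantified over an arbitrary normal $\ptone$ with an arbitrary typing, so there need be no ``later'': a derivation that ends in $\Rbang$ with a large, unused $\contwo$ is a counterexample to the literal equality.

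What survives---and what is actually needed for Theorem~\ref{theo:boundint}---is the inequality $\bde{\widehat{\ptone}}\le\max\{\bde{\conone},\bde{\contwo},\bde{\conthree},\bde{\typone}\}$. That direction goes through by the very induction you outline: every case is immediate, $\Lbangb$/$\Lbangd$ only enlarge the right-hand maximum, and in $\Rbang$ the extra $\bde{\contwo}$ on the right can only help. I would recast your argument to prove $\le$ and note that this is all the application requires.
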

\begin{proof}
An easy induction on $\ptone$.
\end{proof}
%%%%%%%%%%%%%%%%%%%%%%%%%%%%%%%%%%%%%%%%%%%%%%%%%%%%%%%%%%%%%
\section{Conclusions}
%%%%%%%%%%%%%%%%%%%%%%%%%%%%%%%%%%%%%%%%%%%%%%%%%%%%%%%%%%%%%
In this paper, we introduced a variation on Caires and Pfenning's \pidill, called \pidsll,
being inspired by Lafont's soft linear logic. The key feature of \pidsll\ is the fact
that the amount of interaction induced by allowing two processes to interact with
each other is bounded by a polynomial whose degree can be ``read off'' from the
type of the session channel through which they communicate.

What we consider the main achievement of this paper is definitely \emph{not} the proof
of these polynomial bounds, which can be obtained by adapting the ones 
in~\cite{DalLago10a} or in~\cite{DalLago10b}, although this anyway presents
some technical difficulties due to the low-level nature of the $\pi$-calculus compared to the lambda
calculus or to higher-order $\pi$-calculus. Instead, what we found very interesting
is that the operational properties induced by typability in \pidsll, bounded interaction
\emph{in primis}, are not only very interesting and useful in practice, but different from
the ones  obtained in soft lambda calculi: in the latter, it's the \emph{normalization} time
which is bounded, while here it's the \emph{interaction} time. Another aspect that we find interesting
is the following: it seems that the constraints on processes induced by the adoption of
the more stringent typing discipline \pidsll, as opposed to \pidill, are quite natural
and do not rule out too many interesting examples. In particular, the way sessions
can be defined remains essentially untouched: what changes is the way sessions can be
offered, i.e. the discipline governing the offering of multiple sessions by servers.
All the examples in~\cite{Caires10} and the one from Section~\ref{sect:pidillia}
are indeed typable in \pidsll.

Topics for future work include the accommodation of recursive types into \pidsll. This
could be easier than expected, due to the robustness of light logics to the presence
of recursive types~\cite{DalLago06}.

\bibliographystyle{eptcs}
{\small \bibliography{main}}
\end{document}